\documentclass[draftcls,onecolumn]{IEEEtran}

\usepackage[utf8]{inputenc}
\usepackage[T1]{fontenc}
\usepackage{url}
\usepackage{ifthen}
\usepackage{cite}
\usepackage[cmex10]{amsmath}
\usepackage{amssymb}
\usepackage{amsthm}
\usepackage{mathtools}
\usepackage{bbm}
\usepackage{xcolor}

\usepackage{algorithm}
\usepackage{algpseudocode}

\interdisplaylinepenalty=2500

\newtheorem{definition}{Definition}
\newtheorem{theorem}{Theorem}
\newtheorem{proposition}{Proposition}
\newtheorem{lemma}{Lemma}
\newtheorem{corollary}{Corollary}
\newtheorem{remark}{Remark}


\begin{document}

\title{Lower Bounds for the MMSE via Neural Network Estimation and Their Applications to Privacy}
\author{Mario~Diaz,~\IEEEmembership{Member,~IEEE,}
Peter~Kairouz,~\IEEEmembership{Member,~IEEE,} \and
and~Lalitha~Sankar,~\IEEEmembership{Senior Member,~IEEE}
\thanks{The work of M.~Diaz was supported in part by the Programa de Apoyo a Proyectos de Investigaci\'{o}n e Innovaci\'{o}n Tecnol\'{o}gica (PAPIIT) under grant IA101021.
The work of L.~Sankar is supported in part by National Science Foundation grants CIF-1901243, CIF-1815361, and CIF-2007688.
This paper was presented in part at the 2021 IEEE International Symposium on Information Theory \cite{diaz2021neural} and at the Workshop on Information-Theoretic Methods for Rigorous, Responsible, and Reliable Machine Learning within the Thirty-eighth International Conference on Machine Learning.}
\thanks{M.~Diaz is with the Instituto de Investigaciones en Matem\'{a}ticas Aplicadas y en Sistemas (IIMAS), Universidad Nacional Aut\'{o}noma de M\'{e}xico, Mexico City 04510, Mexico (e-mail: mario.diaz@sigma.iimas.unam.mx).}
\thanks{P.~Kairouz is with Google Research, Seattle, WA, USA (e-mail: kairouz@google.com).}
\thanks{L.~Sankar is with Arizona State University, Tempe, AZ 85287 USA (e-mail: lsankar@asu.edu).}}

\maketitle

\begin{abstract}
The minimum mean-square error (MMSE) achievable by optimal estimation of a random variable $Y\in\mathbb{R}$ given another random variable $X\in\mathbb{R}^{d}$ is of much interest in a variety of statistical settings. In the context of estimation-theoretic privacy, the MMSE has been proposed as an information leakage measure that captures the ability of an adversary in estimating $Y$ upon observing $X$. In this paper we establish provable lower bounds for the MMSE based on a two-layer neural network estimator of the MMSE and the Barron constant of an appropriate function of the conditional expectation of $Y$ given $X$. Furthermore, we derive a general upper bound for the Barron constant that, when $X\in\mathbb{R}$ is post-processed by the additive Gaussian mechanism and $Y$ is binary, produces order optimal estimates in the large noise regime. In order to obtain numerical lower bounds for the MMSE in some concrete applications, we introduce an efficient optimization process that approximates the value of the proposed neural network estimator. Overall, we provide an effective machinery to obtain provable lower bounds for the MMSE.
\end{abstract}

\section{Introduction}

The disclosure of individual data could pose severe privacy risks \cite{sweeney2015only}. Even when the data being disclosed is not necessarily private, it could be correlated with sensitive information creating privacy vulnerabilities. To reduce risks, a widely adopted solution is to use a \emph{privacy mechanism} to sanitize the non-private data prior to its disclosure, see, e.g., \cite{dwork2014algorithmic,asoodeh2016information}. In precise terms, we have a Markov chain $Y - X - \tilde{X}$ where $Y$ represents private information (e.g., gender), $X$ represents non-private information (e.g., height), and $\tilde{X}$ is a noisy version of $X$. Indeed, a common privacy mechanism for continuous non-private data is the so-called additive Gaussian mechanism that adds an independent Gaussian random variable $Z$ to the non-private data, i.e., $\tilde{X} = X + Z$.

A prominent property of privacy is that it could be impossible to restore once it has been breached. For example, if an individual's released data leads to the inference that he has a chronic disease, it is impossible to restore the privacy of his condition once this fact has been exposed. Hence, it is important to know beforehand the privacy risks associated with the release of data. To this end, many measures of information leakage have been proposed in the literature. Typically, information leakage measures are formulated to capture the ability of an \emph{adversary} to make specific inferences about $Y$ upon observing $\tilde{X}$. For example, in the context of estimation-theoretic privacy, the minimum mean-square error (MMSE) in estimating $Y$ given $\tilde{X}$ captures the ability of the \emph{strongest} adversary aiming to approximate $Y$ in the expected square-loss sense \cite{asoodeh2016privacy,asoodeh2016information}. In this context, privacy guarantees naturally come in the form of lower bounds for the MMSE, as they ensure that such an adversary cannot estimate $Y$ beyond a certain precision.

Given a privacy mechanism, it could be challenging to predict its performance in practical applications. Oftentimes, theoretical guarantees are obtained through worst-case analyses that result in loose bounds. There is an active research area that aims to overcome the latter challenge by analyzing, designing, and auditing privacy mechanisms in a data-driven manner. One way to implement this philosophy is by estimating the information leakage of $Y$ in $\tilde{X}$ based on samples of these random variables \cite{shamir2010learning,jiao2015minimax,wu2016minimax,diaz2019robustness}. In the context of estimation-theoretic privacy, this amounts to establishing empirical lower bounds for the MMSE in estimating $Y$ given $\tilde{X}$.

In this work, we derive provable lower bounds for the MMSE in estimating $Y\in\mathbb{R}$ given\footnote{For ease of notation, we drop the tilde in $\tilde{X}$ and denote it just by $X$.} $X\in\mathbb{R}^{d}$. These lower bounds are based on a neural network estimator of the MMSE and the Barron constant of (a function of) the conditional expectation of $Y$ given $X$. More specifically, we propose the minimum empirical square-loss attained by a two-layer neural network as an estimator of the MMSE. Furthermore, we derive a general upper bound for the Barron constant that, when $X\in\mathbb{R}$ is post-processed by the additive Gaussian mechanism and $Y$ is binary, produces order optimal estimates in the large noise regime. In order to obtain numerical lower bounds for the MMSE in some concrete applications, we also analyze some algorithmic aspects related to the computation of the proposed estimator of the MMSE and introduce an efficient optimization process to approximate it.

The rest of the paper is organized as follows. In the remainder of this section we discuss further related work and recall some common notation used through this paper. In Section~\ref{Section:ProblemSettingPreliminaries} we present some elements of estimation-theoretic privacy, introduce our proposed estimator, recall Barron's approximation theorem, and discuss some aspects of the additive Gaussian mechanism. We derive lower bounds for the MMSE upon the proposed estimator in Section~\ref{Section:NNEstimation}. In Section~\ref{Section:GeneralBoundBarronConstant}, we derive a general bound for the Barron constant which, in Section~\ref{Section:BarronConstantPostProcessing}, yields order optimal estimates in the presence of the additive Gaussian mechanism. In Section~\ref{Section:NumericalConsiderations}, we consider some numerical aspects related to the computation of the proposed estimator of the MMSE and instantiate our lower bounds in a particular example. We provide a summary and some final remarks in Section~\ref{Section:Conclusion}.

\textbf{Related Work.} The minimum mean-square error achievable by optimal estimation of a random variable given another one plays a key role in statistics and communications \cite{scharf1991statistical,biglieri2007mimo}, and it is closely related to fundamental information-theoretic concepts \cite{guo2005mutual}. As such, the problem addressed in this paper is related to other fundamental problems in information theory and statistics. In the special case when $Y = X$, our setting is closely related to the problem of estimation in Gaussian channels as studied in \cite{guo2005mutual,guo2011estimation,wu2011functional}. Indeed, the problem considered in this work generalizes the aforementioned problem by considering finite samples and $Y \neq X$.

There exists a vast literature on MMSE estimation techniques, including those relying on linear \cite{scharf1991statistical}, kernel-based \cite{peinado2017statistical} and polynomial \cite{alghamdi2019mutual,alghamdi2021polynomial} approximations of the conditional expectation. In line with the recent surge in neural network estimation methods for information measures, see, e.g., \cite{belghazi2018mutual,chan2019neural,sreekumar2021non}, we adopt neural network estimation in the context of the MMSE. It is important to remark that while neural network estimation is known to underperform in some settings, see, e.g., \cite{mcallester2020formal}, our work is aligned with the theoretical nature of privacy where quantitative guarantees for the proposed methodologies are fundamental. This contrasts with existing MMSE estimation methodologies that focus mainly on empirical performance or provide only qualitative guarantees (e.g., convergence rates with unspecified constants).

There are several notions of privacy designed to capture the risks posed by a variety of adversaries, e.g., differential privacy quantifies the membership inference capabilities of an adversary in the context of database queries \cite{kairouz2017composition}, maximal $\alpha$-leakage quantifies the capacity of an adversary to infer any (randomized) function of the private attribute \cite{issa2017operational,liao2019tunable}, probability of correctly guessing quantifies the probability of an adversary to guess the private attribute \cite{asoodeh2018estimation}, to name a few notions. As mentioned before, our work belongs to research area dedicated to the data-driven estimation of information (leakage) measures, see, e.g., \cite{shamir2010learning,jiao2015minimax,wu2016minimax,diaz2019robustness} and references therein. There is also a recent research effort dedicated to understanding the performance of an adversary with practical computational capabilities \cite{jagielski2020auditing,nasr2021adversary}. From this perspective, our results compare the performance of a finite capacity adversary using a 2-layer neural network and finitely many samples to the performance of the strongest adversary capable of implementing any function and knowing the joint distribution of the private and disclosed data.

At a technical level, our starting point is Barron's approximation theorem \cite{barron1993universal}. While there is a variety of works extending Barron's result, see, e.g., \cite{breiman1993hinging,barron1994approximation,lee2017ability,klusowski2018approximation,ongie2019function,sreekumar2021non,domingo2021tighter}, most of them are asymptotic analyses in which the Barron constant is a fixed, yet unknown quantity. In contrast, in the present paper we show that this constant can be effectively controlled in the context of privacy under the additive Gaussian mechanism. To the best of the authors' knowledge, this is the first time that a quantitative analysis of the Barron constant is performed at the proposed level of generality.

\textbf{Notation.} We let $(\Omega,\mathcal{F},\mathbb{P})$ be the underlying probability space and $\mathbb{E}$ be the corresponding expectation. We denote by $\mathbbm{1}_{E}$ the indicator function of any set $E\in\mathcal{F}$. We let $\mathrm{Unif}(\mathcal{U})$ be the uniform distribution over $\mathcal{U}$. If $f:\mathbb{R}^{d}\to\mathbb{R}$ is a probability density function, we let $\mathrm{Supp}(f)$ be its support. If $p\in[0,1]$, we let $\bar{p} = 1 - p$. For $u,v\in\mathbb{R}^{d}$, we let $u \cdot v = u_{1}v_{1} + \cdots + u_{d}v_{d}$ and $\lvert u \rvert = \sqrt{u \cdot u}$. Unless otherwise stated, we let $\lVert \cdot \rVert_{p}$ be the $p$-norm in $L^{p}(\mathbb{R})$ or $L^{p}(\mathbb{R}^{d})$, depending on the context. We say that $f:\mathbb{R}^{d}\to\mathbb{R}$ is $\rho$-Lipschitz if $\lvert f(u) - f(v) \rvert \leq \rho \lvert u - v \rvert$ for all $u,v\in\mathbb{R}^{d}$. Also, we say that $f$ is of class $C^{m}$ if it has continuous partial derivatives of order up to $m$. We write $f(z) \sim g(z)$ to denote that $f(z)/g(z) \to 1$ as $z\to\infty$. We let $\tanh:\mathbb{R}\to(-1,1)$ be the hyperbolic tangent function. For $B\subset\mathbb{R}^{d}$, we let $\mathrm{rad}(B) \coloneqq \sup_{x\in B} \lvert x \rvert$. Recall that the gamma function is determined by $\Gamma(z) = \int_{0}^{\infty} x^{z-1} e^{-x} \mathrm{d}x$ for $z>0$.

\section{Problem Setting and Preliminaries}
\label{Section:ProblemSettingPreliminaries}

In this section we review some preliminary material on estimation-theoretic privacy, function approximation capabilities of two-layer neural networks, and the so-called additive Gaussian mechanism. Also, we introduce a neural network estimator of the MMSE that is used to derive the theoretical lower bounds for the MMSE in Section~\ref{Section:NNEstimation}.

\subsection{Estimation-Theoretic Privacy}

Given random variables $X\in\mathbb{R}^{d}$ and $Y\in\mathbb{R}$, the minimum mean square error in estimating $Y$ given $X$ is defined as\footnote{Observe that this definition requires to have random variables with finite second moments. Since in this paper we always deal with bounded random variables, this requirement is immaterial.}
\begin{equation}
\label{eq:DefMMSE}
    \mathrm{mmse}(Y \vert X) \coloneqq \inf_{h \textnormal{ meas.}} \mathbb{E}\left[(Y-h(X))^{2}\right],
\end{equation}
where the infimum is taken over all (Borel) measurable functions $h:\mathbb{R}^{d}\to\mathbb{R}$. The infimum in \eqref{eq:DefMMSE} is attained by the conditional expectation of $Y$ given $X$, i.e.,
\begin{equation}
\label{eq:MMSEMinimizer}
    \mathrm{mmse}(Y \vert X) = \mathbb{E}\left[(Y-\eta(X))^{2}\right],
\end{equation}
where $\eta(X) \stackrel{\textnormal{a.s.}}{=} \mathbb{E}\left[Y \vert X\right]$. Note that if $Y \stackrel{\textnormal{a.s.}}{=} h_{0}(X)$ for some function $h_{0}:\mathbb{R}^{d}\to\mathbb{R}$, then $\mathrm{mmse}(Y \vert X) = 0$. Also, note that if $X$ and $Y$ are independent, then the MMSE is maximal and $\mathrm{mmse}(Y \vert X) = \mathbb{E}\left[(Y-\mathbb{E}[Y])^{2}\right]$.

In the context of estimation-theoretic privacy, Asoodeh \textit{et al.} \cite{asoodeh2016privacy} introduced the notion of \emph{$\epsilon$-weak estimation privacy} to denote that
\begin{equation}
\label{eq:DefWeakEstimationPrivacy}
    \textnormal{mmse}(Y \vert X) \geq (1-\epsilon) \mathbb{E}\left[(Y-\mathbb{E}[Y])^{2}\right].
\end{equation}
Observe that, as defined in \eqref{eq:DefMMSE}, $\textnormal{mmse}(Y \vert X)$ quantifies the ability of an adversary to approximate $Y$, in the expected square-loss sense, upon observing $X$. Since a larger MMSE amounts to better privacy, estimation-theoretic privacy guarantees naturally come as lower bounds for $\textnormal{mmse}(Y \vert X)$ as expressed in \eqref{eq:DefWeakEstimationPrivacy}.

Also, when $Y$ is binary, the MMSE serves as a lower bound for the probability of error. Specifically, if $Y\in\{\pm1\}$, then
\begin{align}
    P_{\textnormal{error}}(Y \vert X) &= \inf_{h:\mathbb{R}^{d}\to\{\pm1\}} \mathbb{E}\left[\mathbbm{1}_{Y \neq h(X)}\right]\\
    &= \inf_{h:\mathbb{R}^{d}\to\{\pm1\}} \mathbb{E}\left[\frac{(Y-h(X))^{2}}{4}\right]\\
    &\geq \frac{1}{4} \inf_{h \textnormal{ meas.}} \mathbb{E}\left[(Y-h(X))^{2}\right]\\
    \label{eq:PerrorMMSE} &= \frac{1}{4} \mathrm{mmse}(Y \vert X).
\end{align}
Thus, for binary $Y$, any lower bound for $\mathrm{mmse}(Y \vert X)$ gives rise to a lower bound for $P_{\textnormal{error}}(Y \vert X)$. This observation further illustrates the importance of studying lower bounds for the MMSE in the context of privacy, where probability of correctly guessing ($1-P_{\textnormal{error}}$) has also been used as an information leakage measure \cite{braun2009quantitative,asoodeh2018estimation,diaz2019robustness}.

\subsection{Neural Network-based MMSE Estimation}
\label{Sec:NNMMSEEstimation}

A sigmoidal function $\phi:\mathbb{R}\to[-1,1]$ is a (measurable) function such that
\begin{equation}
\label{eq:DefSigmoidalFunction}
    \lim_{z\to-\infty} \phi(z) = -1 \quad \quad \text{and} \quad \quad \lim_{z\to\infty} \phi(z) = 1.
\end{equation}
Note that we are assuming that $\rvert\phi(z)\rvert\leq1$ for all $z\in\mathbb{R}$. Let $\mathcal{H}^{\phi}_{k}$ be the hypothesis class associated with a two-layer neural network of size $k$ with activation function $\phi$. More specifically, $\mathcal{H}^{\phi}_{k}$ is the set of all functions $h:\mathbb{R}^{d}\to\mathbb{R}$ of the form
\begin{equation}
\label{eq:DefHk}
    h(x) = c_{0} + \sum_{l=1}^{k} c_{l} \phi(a_{l} \cdot x + b_{l}),
\end{equation}
where $a_{l}\in\mathbb{R}^{d}$ and $b_{l},c_{l}\in\mathbb{R}$. In this work we propose the following neural network estimator of the MMSE of $Y$ given $X$. Given a random sample $\{(X_{i},Y_{i})\}_{i=1}^{n}$, we define
\begin{equation}
\label{eq:DefMMSEEstimator2}
    \mathrm{mmse}_{k,n}(Y \vert X) \coloneqq \inf_{h\in\mathcal{H}^{\phi}_{k}} \frac{1}{n} \sum_{i=1}^{n} (Y_{i} - h(X_{i}))^{2},
\end{equation}
i.e., $\mathrm{mmse}_{k,n}(Y \vert X)$ is the minimum empirical square-loss attained by a two-layer neural network. Observe that, optimization matters aside, $\mathrm{mmse}_{k,n}(Y \vert X)$ can be obtained from the sample using a device capable of implementing a two-layer neural network of size $k$. In this paper we take an information-theoretic perspective and assume infinite computational power. Specifically, we assume that $\mathrm{mmse}_{k,n}(Y \vert X)$ can be computed exactly\footnote{From an applied perspective, this assumption is not trivial to guarantee. Indeed, it is known that training neural networks to optimality could be a computationally difficult problem, see, e.g., \cite{blum1992training,bartlett1999hardness}. We further discuss some computational aspects in Section~\ref{Section:NumericalConsiderations}.}.

Our goal is to establish a (probabilistic) bound of the form
\begin{equation}
\label{eq:MainResultPrototype}
    \mathrm{mmse}_{k,n}(Y \vert X) - \epsilon_{k,n} \leq \mathrm{mmse}(Y \vert X),
\end{equation}
where $\epsilon_{k,n}$ is a positive number depending on the sample size $n$ and the neural network size $k$. In Section~\ref{Section:NNEstimation} we establish such a bound and, in addition, we derive an analogous result when the neural network has hyperbolic tangent as output activation function. Specifically, we replace $\mathcal{H}^{\phi}_{k}$ by $\tanh\circ\mathcal{H}^{\phi}_{k}$, i.e., the family of functions of the form $\tanh\circ h$ with $h\in\mathcal{H}^{\phi}_{k,n}$. In this case, the relevant MMSE estimator is the defined as
\begin{equation}
\label{eq:DefMMSEEstimator3}
    \mathrm{mmse}^{\ast}_{k,n}(Y \vert X) \coloneqq \inf_{h\in\mathcal{H}^{\phi}_{k}} \frac{1}{n} \sum_{i=1}^{n} (Y_{i} - \tanh(h(X_{i})))^{2}.
\end{equation}

Observe that, by definition, $\mathrm{mmse}(Y \vert X)$ is the minimum expected square-loss attained by \emph{any} measurable function. Hence, the bound in \eqref{eq:MainResultPrototype} differs from classical statistical learning results (e.g., Rademacher complexity bounds) for which the expected loss is minimized over the hypothesis class $\mathcal{H}^{\phi}_{k}$. In particular, we have to consider the so-called approximation error, which could be estimated via the function approximation theorem of Barron \cite{barron1993universal}.

\subsection{Barron's Theorem}

Let $B\subset\mathbb{R}^{d}$ be a bounded set such that $0\in B$. We define $\Gamma_{B}$ as the set of all functions $h:B\to\mathbb{R}$ admitting an integral representation of the form
\begin{equation}
\label{eq:DefGammaB}
    h(x) = h(0) + \int_{\mathbb{R}^{d}} \left(e^{\mathrm{i}\omega \cdot x} - 1\right) \hat{H}(\mathrm{d}\omega),
\end{equation}
for some complex-valued measure $\hat{H}$ such that $\int \lvert\omega\rvert \lvert\hat{H}\rvert(\mathrm{d}\omega)$ is finite. Observe that, as pointed out by Barron \cite[Sec.~III]{barron1993universal}, the right hand side of \eqref{eq:DefGammaB} defines an extension of $h$ to $\mathbb{R}^{d}$. However, it is important to remark that such an extension might not be unique as there might be multiple complex-valued measures $\hat{H}$ satisfying \eqref{eq:DefGammaB}.

Given $h\in\Gamma_{B}$, its Barron constant $C_{h}$ is defined as
\begin{equation}
\label{eq:DefBarronConstant}
    C_{h} \coloneqq \inf_{\hat{H}} \int_{\mathbb{R}^{d}} \lvert\omega\rvert_{B} \lvert\hat{H}\rvert(\mathrm{d}\omega),
\end{equation}
where the infimum is over all complex-valued measures $\hat{H}$ satisfying \eqref{eq:DefGammaB} and 
\begin{equation}
    \lvert\omega\rvert_{B} \coloneqq \sup_{x\in B} \lvert\omega \cdot x\rvert.
\end{equation}
To the best of the authors' knowledge, there is no known method to compute $C_{h}$ given an arbitrary $h\in\Gamma_{B}$. However, in practice, we can take any complex-valued measure $\hat{H}$ satisfying \eqref{eq:DefGammaB} and use it to evaluate the bound
\begin{equation}
\label{eq:BarronExtension1}
    C_{h} \leq \mathrm{rad}(B) \int_{\mathbb{R}^{d}} \lvert\omega\rvert \lvert\hat{H}\rvert(\mathrm{d}\omega),
\end{equation}
where $\displaystyle \mathrm{rad}(B) \coloneqq \sup_{x\in B} \lvert x \rvert$.

Under mild assumptions, the Barron constant could be related to the Fourier transform. Recall that, for a function $h\in L^{1}(\mathbb{R}^{d})$, its Fourier transform $\hat{h}:\mathbb{R}^{d}\to\mathbb{C}$ is defined as
\begin{equation}
    \hat{h}(\omega) \coloneqq \frac{1}{(2\pi)^{d/2}} \int_{\mathbb{R}^{d}} h(x) e^{-\mathrm{i}\omega \cdot x} \mathrm{d}x.
\end{equation}
If, in addition, $\hat{h}\in L^{1}(\mathbb{R}^{d})$ and $\int \lvert\omega\rvert \lvert\hat{h}(\omega)\rvert \mathrm{d}\omega$ is finite, then the Fourier inversion theorem implies that $h\vert_{B} \in \Gamma_{B}$ and
\begin{equation}
    C_{h\vert_{B}} \leq \frac{\mathrm{rad}(B)}{(2\pi)^{d/2}} \int_{\mathbb{R}^{d}} \lvert\omega\rvert \lvert\hat{h}(\omega)\rvert \mathrm{d}\omega.
\end{equation}

The following proposition establishes, in a quantitative manner, the universal approximating capabilities of two-layer neural networks. Observe that the statement below is a translation of Barron's original formulation \cite[Theorem~1]{barron1993universal} to the case of sigmoidal functions as defined in Section~\ref{Sec:NNMMSEEstimation}.\footnote{Observe that a sigmoidal function $\phi:\mathbb{R}\to[0,1]$ in the sense of Barron \cite[Sec.~I]{barron1993universal} can be converted into a sigmoidal function in the sense of Section~\ref{Sec:NNMMSEEstimation} by means of the transformation $\phi \mapsto 2(\phi-1/2)$.}

\begin{proposition}[Theorem~1, \cite{barron1993universal}]
\label{Prop:Barron}
Let $B\subset\mathbb{R}^{d}$ be a bounded set containing $0$. For every $h\in\Gamma_{B}$ and every probability distribution $P$ over $B$, there exists $h_{k}\in\mathcal{H}_{k}^{\phi}$ such that
\begin{equation}
    \int_{B} \lvert h_{k}(x) - h(x) \rvert^{2} P(\mathrm{d}x) \leq \frac{(2C_{h})^{2}}{k}.
\end{equation}
Furthermore, the coefficients of $h_{k}$ may be restricted to satisfy $|c_0| \leq |h(0)| + C_{h}$ and $\sum_{l=1}^{k} |c_{l}| \leq C_{h}$.
\end{proposition}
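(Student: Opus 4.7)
The plan is to derive this proposition as an immediate corollary of Barron's original theorem via the affine change of activation indicated in the footnote. Setting $\tilde{\phi}(z) := (\phi(z)+1)/2$, the limits in \eqref{eq:DefSigmoidalFunction} imply that $\tilde{\phi}:\mathbb{R}\to[0,1]$ is sigmoidal in Barron's sense, and, crucially, the Barron constant $C_h$ is defined purely through the integral representation \eqref{eq:DefGammaB} and does not depend on any choice of activation, so the same $C_h$ appears on both sides of the translation.

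The main construction has three steps. First, I would apply Barron's original theorem to $h\in\Gamma_B$ with activation $\tilde{\phi}$, probability $P$, and size $k$, obtaining a network
\begin{equation*}
    \tilde{h}_k(x) = h(0) + \sum_{l=1}^{k} \tilde{c}_l\,\tilde{\phi}(a_l \cdot x + b_l)
\end{equation*}
satisfying $\int_B |\tilde{h}_k - h|^{2}\,\mathrm{d}P \leq (2C_h)^{2}/k$, with bias fixed at $h(0)$ (a consequence of the fact that the integrand in \eqref{eq:DefGammaB} vanishes at $x=0$) and inner coefficients obeying $\sum_{l=1}^{k}|\tilde{c}_l| \leq 2C_h$. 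Second, I would substitute $\tilde{\phi}(z)=(\phi(z)+1)/2$ to rewrite $\tilde{h}_k \equiv h_k$ as an element of $\mathcal{H}_k^{\phi}$ with new coefficients $c_l = \tilde{c}_l/2$ for $l\geq 1$ and $c_0 = h(0) + \tfrac{1}{2}\sum_{l=1}^{k}\tilde{c}_l$. Third, the triangle inequality immediately yields $\sum_l |c_l|\leq C_h$ and $|c_0|\leq |h(0)|+C_h$, while the $L^2(P)$ error is inherited unchanged since $h_k \equiv \tilde{h}_k$.

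The only substantive obstacle is confirming the exact form of Barron's original statement being invoked, in particular the bound $\sum_l|\tilde{c}_l|\leq 2C_h$ on the inner coefficients (as opposed to, say, $C_h$ itself). This constant is dictated by Barron's Maurey-style approximation of \eqref{eq:DefGammaB}: each integrand $e^{\mathrm{i}\omega \cdot x}-1$ is split into its real and imaginary components, each component is approximated by sigmoidal ridge functions whose coefficient magnitudes are governed by $|\omega|_B$, and the resulting factor of $2$ traces back to this decomposition together with the $\pm 1$ signs that appear when writing step/sigmoidal approximations as signed sums. Once this constant and the claim that the bias can be taken equal to $h(0)$ are confirmed by a careful reading of Barron's proof, the translation above completes the argument.
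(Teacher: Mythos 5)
Your proof is correct and takes the same route the paper implicitly relies on: the paper simply cites Barron's Theorem~1 and notes in a footnote that the two normalizations of sigmoidal activation differ by the affine map $\phi\mapsto 2(\phi-1/2)$, which is exactly the rescaling you invert via $\tilde\phi=(\phi+1)/2$. Your bookkeeping — starting from Barron's $\sum_{l}|\tilde c_l|\leq 2C_h$ with bias fixed at $h(0)$, substituting the activation, and reading off $c_l=\tilde c_l/2$ and $c_0=h(0)+\tfrac12\sum_l\tilde c_l$ so that $\sum_l|c_l|\leq C_h$ and $|c_0|\leq|h(0)|+C_h$ — is precisely the translation the paper's footnote gestures at, and the $L^2(P)$ error bound carries over unchanged since $h_k$ and $\tilde h_k$ are the same function.
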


\subsection{Additive Gaussian Mechanism}
\label{Section:PrivacyPreservingMechanisms}

To motivate the forthcoming applications, consider the following setting. Assume that $X\in\mathbb{R}^{d}$ are sensible features of an individual which are correlated with a private attribute $Y\in\{\pm1\}$, e.g., $X$ could be height and $Y$ gender. Due to privacy concerns, a data analyst might not be able to observe $X$ but a \emph{sanitized} version of it. In this work we focus on the so-called additive Gaussian mechanism, a popular sanitization method in the information-theoretic and the differential privacy literature, see, e.g., \cite{dwork2014algorithmic,asoodeh2016privacy,asoodeh2016information}. Given $\sigma>0$, we define
\begin{equation}
    X_{i}^{\sigma} \coloneqq X_{i} + \sigma Z_{i},
\end{equation}
where $Z_{1},\ldots,Z_{n}$ are i.i.d.\ standard Gaussian vectors. Since the Gaussian distribution has unbounded support, it is often convenient to further process \emph{extreme values} of the random variables $X_{i}^{\sigma}$. We consider two processing techniques.

\subsubsection{Extreme Values Truncation}
\label{Subsection:ExtremeValuesTruncation}

Let $B\subset\mathbb{R}^{d}$ be a bounded set. Extreme values truncation is the data processing technique that discards all samples with $X_{i}^{\sigma}$ outside the set\footnote{Observe that this technique potentially reduces the sample size, although, the reduction is negligible when $B$ is large. In any case, for ease of notation, we let $n$ denote the effective sample size after truncation.} $B$. Let $\tilde{f}_{\pm}^{\sigma}$ be the conditional density of $X^{\sigma}$ after truncation given $Y=\pm1$. It is straightforward to verify that
\begin{equation}
\label{eq:DensityTruncation}
    \tilde{f}_{\pm}^{\sigma}(x) = \frac{(f_{\pm} \ast K_{\sigma})(x)}{\mathbb{P}\left(X^{\sigma} \in B \vert Y=\pm1 \right)} \mathbbm{1}_{x\in B},
\end{equation}
where $f_{\pm}$ is conditional density of $X$ given $Y = \pm 1$, $\ast$ is the convolution operator, and $K_{\sigma}$ is the density of $\sigma Z$.

\subsubsection{Extreme Values Randomization}
\label{Subsection:ExtremeValuesRandomization}

Let $B\subset\mathbb{R}^{d}$ be a bounded set. Extreme values randomization is the data processing technique that takes each $X_{i}^{\sigma}$ outside the set $B$ and replaces it with a random value on $B$. As before, let $\tilde{f}_{\pm}^{\sigma}$ be the conditional density of $X^{\sigma}$ after randomization given $Y=\pm1$. It is straightforward to verify that
\begin{equation}
\label{eq:DensityRandomization}
    \tilde{f}_{\pm}^{\sigma}(x) = \left[(f_{\pm} \ast K_{\sigma})(x) + \frac{\mathbb{P}\left(X^{\sigma} \not\in B \vert Y=\pm1 \right)}{\mathrm{vol}(B)}\right] \mathbbm{1}_{x\in B},
\end{equation}
where $\mathrm{vol}(B)$ denotes the volume of $B$ w.r.t.\ the Lebesgue measure on $\mathbb{R}^{d}$.

Under mild assumptions on $f_{\pm}$, e.g., bounded and compactly supported, both \eqref{eq:DensityTruncation} and \eqref{eq:DensityRandomization} define non-negative\footnote{In fact, $\tilde{f}_{\pm}^{\sigma}$ are bounded away from 0 on $B$.} smooth functions on the interior of $B$. If $B$ is closed, a routine application of Whitney's extension theorem \cite{whitney1934analytic} shows that, for any smooth function $h:\mathbb{R}\to\mathbb{R}$, the function $h\circ(\tilde{f}_{+}^{\sigma}/\tilde{f}_{-}^{\sigma})$ can be extended to a rapidly-decreasing smooth function over $\mathbb{R}^{d}$ \cite[Ch.~7]{rudin1991functional} and, in particular, $h\circ(\tilde{f}_{+}^{\sigma}/\tilde{f}_{-}^{\sigma})$ belongs to $\Gamma_{B}$.

\section{MMSE Lower Bounds}
\label{Section:NNEstimation}

In this section we provide lower bounds for $\mathrm{mmse}(Y \vert X)$ based on $\mathrm{mmse}_{k,n}(Y \vert X)$ and $\mathrm{mmse}^{\ast}_{k,n}(Y \vert X)$, as envisioned in \eqref{eq:MainResultPrototype}.

\subsection{Output Activation Function: Identity}
\label{Section:Estimation2LNN}

The following theorem establishes a lower bound for the MMSE in estimating $Y$ given $X$ based on the estimator $\mathrm{mmse}_{k,n}(Y \vert X)$, as defined in \eqref{eq:DefMMSEEstimator2}, and the Barron constant of the conditional expectation of $Y$ given $X$.

\begin{theorem}
\label{Theorem:2LNNIdentity}
Let $k,n\in\mathbb{N}$ and $B\subset\mathbb{R}^{d}$ be a bounded set containing $0$. If $Y\in[-1,1]$, $X$ is supported on $B$, and the conditional expectation $\eta(x) \coloneqq \mathbb{E}[Y \vert X = x]$ belongs to $\Gamma_{B}$, then, with probability at least $1-\delta$,
\begin{equation}
\label{eq:2LNNMainResultRepresentation}
    \mathrm{mmse}_{k,n}(Y \vert X) - \epsilon_{k,n,\delta} \leq \mathrm{mmse}(Y \vert X),
\end{equation}
where
\begin{equation}
\label{eq:2LNNMainResultEpsilon}
    \epsilon_{k,n,\delta} = 2(1+C_{\eta})^{2} \sqrt{\frac{2\log(1/\delta)}{n}} + \frac{4C_{\eta}^{2}}{k} + \frac{8C_{\eta}}{\sqrt{k}}.
\end{equation}
\end{theorem}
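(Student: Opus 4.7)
The plan is to split the gap $\mathrm{mmse}_{k,n}(Y\vert X)-\mathrm{mmse}(Y\vert X)$ into an \emph{approximation error} (bounded via Barron's theorem) and an \emph{estimation error} (bounded via Hoeffding's inequality). Since $\mathrm{mmse}_{k,n}(Y\vert X)$ is defined as an infimum, it suffices to exhibit a single $h_k\in\mathcal{H}_k^{\phi}$ whose empirical loss is close to $\mathrm{mmse}(Y\vert X)$ with high probability.

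First I would apply Proposition~\ref{Prop:Barron} to $h=\eta$ and $P=P_X$ (the law of $X$ on $B$), which is legitimate because $\eta\in\Gamma_B$ by hypothesis. This produces $h_k\in\mathcal{H}_k^{\phi}$ with
\begin{equation*}
\mathbb{E}\bigl[(\eta(X)-h_k(X))^{2}\bigr]\leq \frac{4C_{\eta}^{2}}{k},
\end{equation*}
and whose coefficients satisfy $|c_0|\leq |\eta(0)|+C_\eta$ and $\sum_{l=1}^{k}|c_l|\leq C_\eta$. Because $Y\in[-1,1]$, $|\eta(0)|=|\mathbb{E}[Y\vert X=0]|\leq 1$, and since $|\phi|\leq 1$ this yields the uniform bound $|h_k(x)|\leq 1+2C_\eta$ on $B$. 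Consequently $(Y_i-h_k(X_i))^2$ takes values in the interval $[0,4(1+C_\eta)^2]$.

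Next I would handle the estimation error. The random variables $\xi_i\coloneqq(Y_i-h_k(X_i))^2$ are i.i.d.\ and uniformly bounded by $4(1+C_\eta)^2$, so Hoeffding's inequality gives that, with probability at least $1-\delta$,
\begin{equation*}
\frac{1}{n}\sum_{i=1}^{n}\xi_i \leq \mathbb{E}[\xi_1] + 4(1+C_\eta)^2\sqrt{\frac{\log(1/\delta)}{2n}}=\mathbb{E}[\xi_1]+2(1+C_\eta)^2\sqrt{\frac{2\log(1/\delta)}{n}},
\end{equation*}
which is precisely the first term of $\epsilon_{k,n,\delta}$. Combined with the trivial inequality $\mathrm{mmse}_{k,n}(Y\vert X)\leq \frac{1}{n}\sum_i\xi_i$, this reduces everything to bounding $\mathbb{E}[(Y-h_k(X))^2]$ in terms of $\mathrm{mmse}(Y\vert X)$.

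For that last step I would use the triangle inequality in $L^2(\mathbb{P})$ together with Barron's bound:
\begin{equation*}
\sqrt{\mathbb{E}[(Y-h_k(X))^{2}]}\leq \sqrt{\mathrm{mmse}(Y\vert X)}+\sqrt{\mathbb{E}[(\eta(X)-h_k(X))^{2}]}\leq \sqrt{\mathrm{mmse}(Y\vert X)}+\frac{2C_\eta}{\sqrt{k}}.
\end{equation*}
Squaring and using $\sqrt{\mathrm{mmse}(Y\vert X)}\leq 2$ (since $Y\in[-1,1]$ implies the MMSE is at most $4$) produces the remaining two terms $8C_\eta/\sqrt{k}$ and $4C_\eta^2/k$. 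Chaining these estimates yields \eqref{eq:2LNNMainResultRepresentation}. The main obstacle is really a bookkeeping one: a sharper $L^2$-orthogonality argument (using $\mathbb{E}[(Y-\eta(X))(\eta(X)-h_k(X))]=0$) would eliminate the $8C_\eta/\sqrt{k}$ cross term entirely, so the mild technical choice is to prefer the triangle-inequality route, which matches the stated constants and, more importantly, generalizes cleanly to the $\tanh$-output variant $\mathrm{mmse}^{\ast}_{k,n}$ treated next, where the orthogonality of residuals is unavailable.
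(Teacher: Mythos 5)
Your proof is correct and follows essentially the same route as the paper: Barron's theorem to produce $\eta_k\in\mathcal{H}_k^\phi$ with controlled coefficients, Hoeffding on the bounded losses $(Y_i-\eta_k(X_i))^2\leq 4(1+C_\eta)^2$, and then the approximation term via Cauchy--Schwarz. The only cosmetic difference is that you phrase the approximation step as an $L^2$-triangle inequality, whereas the paper expands $L(\eta_k)-L(\eta)=\|\eta_k-\eta\|_2^2+2\,\mathbb{E}[(\eta-\eta_k)(Y-\eta)]$ and applies Cauchy--Schwarz directly; these yield identical bounds, and your remark that neither uses the available orthogonality $\mathbb{E}[(\eta-\eta_k)(Y-\eta)]=0$ so as to stay parallel with the $\tanh$-output case correctly identifies the paper's (implicit) design choice.
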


\begin{proof}
For ease of notation, we define
\begin{equation}
\label{eq:Proof2LNNDelta}
    \Delta \coloneqq \mathrm{mmse}_{k,n}(Y \vert X) - \mathrm{mmse}(Y \vert X).
\end{equation}
Also, we define $L(h) \coloneqq \mathbb{E}[(Y-h(X))^{2}]$ and
\begin{equation}
    \hat{L}_{n}(h) \coloneqq \frac{1}{n} \sum_{i=1}^{n} (Y_{i} - h(X_{i}))^{2}.
\end{equation}
Recall that the infimum defining $\mathrm{mmse}(Y \vert X)$ is attained by the conditional expectation $\eta$, see \eqref{eq:MMSEMinimizer}. Thus, we have that
\begin{align}
    \Delta &= \inf_{h\in\mathcal{H}^{\phi}_{k}} \hat{L}_{n}(h) - \inf_{h \textnormal{ meas.}} L(h)\\
    \label{eq:Proof2LNNDelta2} &= \inf_{h\in\mathcal{H}^{\phi}_{k}} \hat{L}_{n}(h) - L(\eta).
\end{align}

Since $\eta\in\Gamma_{B}$ by assumption, Barron's theorem (Proposition~\ref{Prop:Barron}) implies that there exists $\eta_k\in\mathcal{H}^{\phi}_k$ such that
\begin{equation}
\label{eq:ProofRepresentationBarron}
    \lVert \eta_k - \eta \rVert_{2} \leq \frac{2C_{\eta}}{\sqrt{k}},
\end{equation}
where $\lVert \cdot \rVert_{2}$ is the $2$-norm w.r.t.\ the distribution of $X$, i.e.,
\begin{equation}
    \lVert h \rVert_{2}^{2} = \int_{B} \lvert h(x) \rvert^{2} P_{X}(\mathrm{d}x).
\end{equation}
Furthermore, if we let
\begin{equation}
\label{eq:Proof2LNNetak}
    \eta_{k}(x) = c_{0} + \sum_{l=1}^{k} c_{l} \phi(a_{l} \cdot x + b_{l}),
\end{equation}
the coefficients $c_{0},c_{1},\ldots,c_{k}$ can be restricted to satisfy that $c_{0} \leq |\eta(0)| + C_{\eta}$ and $\sum_{l=1}^{k} |c_l| \leq C_{\eta}$. Observe that, by \eqref{eq:Proof2LNNDelta2},
\begin{equation}
\label{inq:RepresentationDeltaDecomposition}
    \Delta \leq \hat{L}_{n}(\eta_{k}) - L(\eta_{k}) + L(\eta_{k}) - L(\eta).
\end{equation}

Since $Y\in[-1,1]$, we have that $\lvert\eta(x)\rvert \leq 1$ for all $x\in B$. Recall that, by assumption, $\phi(z)\in[-1,1]$ for all $z\in\mathbb{R}$. Thus, by our choice of the coefficients $c_{0},c_{1},\ldots,c_{k}$ in \eqref{eq:Proof2LNNetak},
\begin{equation}
\label{eq:BoundInfNormEtak}
    \lvert\eta_{k}(x)\rvert \leq 1 + 2 C_{\eta}, \quad \quad x\in\mathbb{R}^{d}.
\end{equation}
Therefore, $(Y_{i}-\eta_{k}(X_{i}))^2 \leq 4(1+C_{\eta})^{2}$ for all $i\in\{1,\ldots,n\}$. As a result, a routine application of Hoeffding's inequality \cite[Sec.~4.2]{shalev2014understanding} implies that, with probability at least $1-\delta$,
\begin{equation}
\label{inq:RepresentationGeneralization}
    \hat{L}_{n}(\eta_{k}) - L(\eta_{k}) \leq 2(1+C_{\eta})^2 \sqrt{\frac{2\log(1/\delta)}{n}}.
 \end{equation}

It is straightforward to verify that
\begin{align}
    L(\eta_{k}) - L(\eta) &= \lVert\eta_{k} - \eta\rVert_{2}^{2} + 2\mathbb{E}[(\eta(X)-\eta_{k}(X))(Y-\eta(X)].
\end{align}
Recall that $\lvert\eta(x)\rvert \leq 1$ for all $x\in B$. Therefore, an application of the Cauchy–Schwarz inequality implies that
\begin{equation}
\label{eq:LipschitzcalL}
    \lvert L(\eta_{k}) - L(\eta) \rvert \leq \lVert\eta_{k} - \eta\rVert_{2}(4+\lVert\eta_{k} - \eta\rVert_{2}).
\end{equation}
By plugging \eqref{eq:ProofRepresentationBarron} in \eqref{eq:LipschitzcalL}, we conclude that
\begin{equation}
\label{inq:RepresentationApproximation}
    |L(\eta_k) - L(\eta)| \leq \frac{2C_{\eta}}{\sqrt{k}} \left(4 + \frac{2C_{\eta}}{\sqrt{k}}\right).
\end{equation}
The theorem follows by plugging \eqref{inq:RepresentationGeneralization} and \eqref{inq:RepresentationApproximation} in \eqref{inq:RepresentationDeltaDecomposition}.
\end{proof}

Regarding the assumptions of the previous theorem, it is important to remark that, in general, it might be non-trivial to verify that the conditional expectation $\eta$ belongs to $\Gamma_{B}$. Nonetheless, as shown in Section~\ref{Section:BarronConstantPostProcessing}, this assumption is automatically satisfied when data is post-processed by the additive Gaussian mechanism.

Note that $\epsilon_{k,n,\delta}$, as defined in \eqref{eq:2LNNMainResultEpsilon}, is non-increasing in $k$. Since $\mathcal{H}^{\phi}_{k}\subseteq\mathcal{H}^{\phi}_{k+1}$ for all $k\in\mathbb{N}$, $\mathrm{mmse}_{k,n}(Y \vert X)$ is also non-increasing in $k$. As a result, the lower bound in \eqref{eq:2LNNMainResultRepresentation} does not necessarily improve by making $k$ larger. Indeed, it is known that if $k$ is large enough then $\mathrm{mmse}_{k,n}(Y \vert X)$ is equal to 0, see, e.g., \cite{baum1988capabilities,bubeck2020network}, which makes the lower bound in \eqref{eq:2LNNMainResultRepresentation} trivial. Together with the fact that the minimization defining $\mathrm{mmse}_{k,n}(Y \vert X)$ becomes harder as $k$ increases, the previous observations reveal the non-trivial nature of finding the value of $k$ that produces the best numerical results. We expand on this discussion in Section~\ref{Section:NumericalConsiderations}.

\subsection{Output Activation Function: Hyperbolic Tangent}

The following theorem establishes a lower bound for the MMSE in estimating $Y$ given $X$ based on the estimator $\mathrm{mmse}^{\ast}_{k,n}(Y \vert X)$, as defined in \eqref{eq:DefMMSEEstimator3}, and the Barron constant of the log-likelihood ratio defined in \eqref{eq:3LNNDefTheta} below. Note that, unlike Theorem~\ref{Theorem:2LNNIdentity}, the next theorem requires $Y$ to be binary.

\begin{theorem}
\label{Theorem:2LNNtanh}
Let $k,n\in\mathbb{N}$ and $B\subset\mathbb{R}^{d}$ be a bounded set containing $0$. Assume that $Y\in\{\pm1\}$, $X$ is supported on $B$, and the conditional density of $X$ given $Y=\pm1$, denoted by $f_{\pm}$, is positive on $B$. Let $p \coloneqq \mathbb{P}(Y=1)$ and
\begin{equation}
\label{eq:3LNNDefTheta}
    \theta(x) \coloneqq \frac{1}{2} \log\left(\frac{pf_{+}(x)}{\bar{p}f_{-}(x)}\right), \quad \quad x\in B.
\end{equation}
If $\theta$ belongs to $\Gamma_{B}$, then, with probability at least $1-\delta$,
\begin{equation}
\label{eq:3LNNMainResultRepresentation}
    \mathrm{mmse}^{\ast}_{k,n}(Y \vert X) - \epsilon^{\ast}_{k,n,\delta} \leq \mathrm{mmse}(Y \vert X),
\end{equation}
where
\begin{equation}
    \epsilon^{\ast}_{k,n,\delta} = 2\sqrt{\frac{2\log(1/\delta)}{n}} + \frac{4C_{\theta}^{2}}{k} + \frac{8C_{\theta}}{\sqrt{k}}.
\end{equation}
\end{theorem}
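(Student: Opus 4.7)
The plan is to mirror the proof of Theorem~\ref{Theorem:2LNNIdentity}, but with the crucial additional observation that, when $Y\in\{\pm1\}$, the conditional expectation factors through $\tanh$ and the log-likelihood ratio $\theta$. Specifically, a direct computation from Bayes' rule gives
\begin{equation*}
    \eta(x) = \mathbb{E}[Y\vert X=x] = \frac{pf_{+}(x) - \bar{p}f_{-}(x)}{pf_{+}(x) + \bar{p}f_{-}(x)} = \tanh(\theta(x)),
\end{equation*}
since $e^{2\theta(x)} = pf_{+}(x)/(\bar{p}f_{-}(x))$. This identity is what allows us to transport Barron's approximation from $\theta$ to $\eta$: while Theorem~\ref{Theorem:2LNNIdentity} required a Barron representation of $\eta$ itself, here the tanh output layer lets us approximate $\theta$ instead, which is the more natural object under the additive Gaussian mechanism.

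Next, I would introduce $L(h)\coloneqq\mathbb{E}[(Y-h(X))^{2}]$ and $\hat{L}_{n}(h)\coloneqq n^{-1}\sum_{i}(Y_{i}-h(X_{i}))^{2}$, and write
\begin{equation*}
    \Delta \coloneqq \mathrm{mmse}^{\ast}_{k,n}(Y \vert X) - \mathrm{mmse}(Y \vert X) = \inf_{h\in\mathcal{H}^{\phi}_{k}} \hat{L}_{n}(\tanh\circ h) - L(\eta).
\end{equation*}
Applying Proposition~\ref{Prop:Barron} to $\theta\in\Gamma_{B}$ yields some $\theta_{k}\in\mathcal{H}^{\phi}_{k}$ with $\lVert\theta_{k}-\theta\rVert_{2}\leq 2C_{\theta}/\sqrt{k}$, and since $\tanh$ is $1$-Lipschitz,
\begin{equation*}
    \lVert\tanh\circ\theta_{k} - \eta\rVert_{2} = \lVert\tanh\circ\theta_{k} - \tanh\circ\theta\rVert_{2} \leq \frac{2C_{\theta}}{\sqrt{k}}.
\end{equation*}
Using $\tanh\circ\theta_{k}$ as the test candidate gives the decomposition
\begin{equation*}
    \Delta \leq [\hat{L}_{n}(\tanh\circ\theta_{k}) - L(\tanh\circ\theta_{k})] + [L(\tanh\circ\theta_{k}) - L(\eta)].
\end{equation*}

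For the generalization term, the key simplification compared with Theorem~\ref{Theorem:2LNNIdentity} is that $|\tanh(\theta_{k}(X_{i}))|\leq 1$ and $|Y_{i}|\leq 1$ give $(Y_{i}-\tanh(\theta_{k}(X_{i})))^{2}\in[0,4]$ unconditionally, independently of the size of $\theta_{k}$; Hoeffding's inequality then yields the clean bound $2\sqrt{2\log(1/\delta)/n}$, with no $(1+C_{\theta})^{2}$ factor. For the approximation term, expanding $L(\tanh\circ\theta_{k})-L(\eta) = \lVert\tanh\circ\theta_{k} - \eta\rVert_{2}^{2} + 2\mathbb{E}[(\eta(X)-\tanh(\theta_{k}(X)))(Y-\eta(X))]$ and invoking Cauchy--Schwarz with $|Y-\eta(X)|\leq 2$ gives
\begin{equation*}
    |L(\tanh\circ\theta_{k}) - L(\eta)| \leq \lVert\tanh\circ\theta_{k} - \eta\rVert_{2}(4 + \lVert\tanh\circ\theta_{k} - \eta\rVert_{2}) \leq \frac{8C_{\theta}}{\sqrt{k}} + \frac{4C_{\theta}^{2}}{k},
\end{equation*}
exactly as in Theorem~\ref{Theorem:2LNNIdentity}. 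Summing the two bounds yields $\epsilon^{\ast}_{k,n,\delta}$.

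The main conceptual step, and where I would focus the write-up, is the identification $\eta = \tanh\circ\theta$: this is what makes the tanh output layer natural and what replaces the direct Barron assumption on $\eta$ by the (in many cases easier to verify) assumption $\theta\in\Gamma_{B}$. The rest is structurally identical to Theorem~\ref{Theorem:2LNNIdentity}, with the $1$-Lipschitz property of $\tanh$ doing the work of preserving the $O(1/\sqrt{k})$ approximation rate. No technical obstacle beyond this identity is anticipated; in particular, the fact that $\theta_{k}$ can have unbounded range is harmless because it is precomposed with the bounded function $\tanh$ before entering the loss.
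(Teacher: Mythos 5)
Your proposal is correct and follows essentially the same route as the paper's proof: the identity $\eta=\tanh\circ\theta$, the split into generalization and approximation errors around $\tanh\circ\theta_k$, Hoeffding with the unconditional bound $4$ on the loss, and the Cauchy--Schwarz step combined with $1$-Lipschitzness of $\tanh$. The only difference is notational (you write $L(\tanh\circ h)$ where the paper defines $L^{*}(h)$), so there is nothing to compare beyond that.
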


\begin{proof}
For ease of notation, we define
\begin{equation}
\label{eq:Proof3LNNDelta}
    \Delta^{\ast} \coloneqq \mathrm{mmse}^{\ast}_{k,n}(Y \vert X) - \mathrm{mmse}(Y \vert X).
\end{equation}
Also, we define $L^{\ast}(h) \coloneqq \mathbb{E}[(Y-\tanh(h(X)))^{2}]$ and
\begin{equation}
    \hat{L}^{\ast}_{n}(h) \coloneqq \frac{1}{n} \sum_{i=1}^{n} (Y_{i} - \tanh(h(X_{i})))^{2}.
\end{equation}
Recall that the infimum defining $\mathrm{mmse}(Y \vert X)$ is attained by the conditional expectation $\eta$, see \eqref{eq:MMSEMinimizer}. A straightforward computation shows that, for every $x\in B$,
\begin{align}
    \eta(x) &= \frac{pf_{+}(x) - \bar{p}f_{-}(x)}{pf_{+}(x) + \bar{p}f_{-}(x)}\\
    &= \tanh\left(\frac{1}{2}\log\left(\frac{pf_{+}(x)}{\bar{p}f_{-}(x)}\right)\right)\\
    \label{eq:EtaThetaRelation} &= \tanh\left(\theta(x)\right).
\end{align}
Therefore, we have that
\begin{align}
    \mathrm{mmse}(Y \vert X) &= \mathbb{E}[(Y-\eta(X))^{2}]\\
    &= \mathbb{E}[(Y-\tanh(\theta(X)))^{2}]\\
    &= L^{\ast}(\theta),
\end{align}
and, as a result,
\begin{equation}
\label{eq:Proof3LNNDelta2}
    \Delta^{\ast} = \inf_{h\in\mathcal{H}^{\phi}_{k}} \hat{L}^{\ast}_{n}(h) - L^{\ast}(\theta).
\end{equation}

Since $\theta\in\Gamma_{B}$ by assumption, Barron's theorem (Proposition~\ref{Prop:Barron}) implies that there exists $\theta_k\in\mathcal{H}^{\phi}_k$ such that
\begin{equation}
\label{eq:Proof3LNNBarron}
    \lVert \theta_k - \theta \rVert_{2} \leq \frac{2C_{\theta}}{\sqrt{k}},
\end{equation}
where $\lVert \cdot \rVert_{2}$ is the $2$-norm w.r.t.\ the distribution of $X$. Furthermore, if we let
\begin{equation}
\label{eq:etak}
    \theta_{k}(x) = c_{0} + \sum_{l=1}^{k} c_{l} \phi(a_{l} \cdot x + b_{l}),
\end{equation}
the coefficients $c_{0},c_{1},\ldots,c_{k}$ can be restricted to satisfy that $c_{0} \leq |\theta(0)| + C_{\theta}$ and $\sum_{l=1}^{k} |c_l| \leq C_{\theta}$. Observe that, by \eqref{eq:Proof3LNNDelta2},
\begin{equation}
\label{inq:Proof3LNNDelta}
    \Delta \leq \hat{L}^{\ast}_{n}(\theta_{k}) - L^{\ast}(\theta_{k}) + L^{\ast}(\theta_{k}) - L^{\ast}(\theta).
\end{equation}

Since $Y\in\{\pm1\}$ and $\tanh:\mathbb{R}\to(-1,1)$, it is immediate to verify that $(Y_{i}-\tanh(\theta_{k}(X_{i})))^2 \leq 4$ for all $i\in\{1,\ldots,n\}$. As a result, a routine application of Hoeffding's inequality \cite[Sec.~4.2]{shalev2014understanding} implies that, with probability at least $1-\delta$,
\begin{equation}
\label{inq:Proof3LNNLargeDeviations}
    \hat{L}^{\ast}_{n}(\theta_{k}) - L^{\ast}(\theta_{k}) \leq 2\sqrt{\frac{2\log(1/\delta)}{n}}.
 \end{equation}

It is straightforward to verify that
\begin{align}
    L^{\ast}(\theta_{k}) - L^{\ast}(\theta) =& \lVert\tanh\circ\theta_{k} - \tanh\circ\theta\rVert_{2}^{2} + 2\mathbb{E}[(\tanh(\theta(X))-\tanh(\theta_{k}(X)))(Y-\tanh(\theta(X)))].
\end{align}
Thus, the Cauchy–Schwarz inequality leads to
\begin{align}
    \lvert L^{\ast}(\theta_{k}) - L^{\ast}(\theta) \rvert &\leq \lVert\tanh\circ\theta_{k} - \tanh\circ\theta\rVert_{2}(4+\lVert\tanh\circ\theta_{k} - \tanh\circ\theta\rVert_{2}).
\end{align}
Since $\tanh$ is a 1-Lipschitz function, it can be verified that $\lVert\tanh\circ\theta_{k} - \tanh\circ\theta\rVert_{2} \leq \lVert\theta_{k} - \theta\rVert_{2}$. Therefore, we have that
\begin{equation}
\label{eq:Proof3LNNLipschitz}
    \lvert L^{\ast}(\theta_{k}) - L^{\ast}(\theta) \rvert \leq \lVert\theta_{k} - \theta\rVert_{2} (4+\lVert\theta_{k} - \theta\rVert_{2}).
\end{equation}
By plugging \eqref{eq:Proof3LNNBarron} in \eqref{eq:Proof3LNNLipschitz}, we conclude that
\begin{equation}
\label{inq:Proof3LNNFunctionApproximation}
    \lvert L^{\ast}(\theta_{k}) - L^{\ast}(\theta) \rvert \leq \frac{2C_{\theta}}{\sqrt{k}} \left(4 + \frac{2C_{\theta}}{\sqrt{k}}\right).
\end{equation}
The theorem follows by plugging \eqref{inq:Proof3LNNLargeDeviations} and \eqref{inq:Proof3LNNFunctionApproximation} in \eqref{inq:Proof3LNNDelta}.
\end{proof}

As with Theorem~\ref{Theorem:2LNNIdentity}, the hypotheses of the previous theorem are automatically satisfied when data is post-processed by the additive Gaussian mechanism. We discuss this claim in detail in Section~\ref{Section:BarronConstantPostProcessing}.

Observe that, as established in \eqref{eq:EtaThetaRelation}, the conditional expectation $\eta$ and the log-likelihood ratio $\theta$ satisfy that
\begin{equation}
    \eta = \tanh \circ \theta.
\end{equation}
In view of this relation, the choice of the neural network defining $\mathrm{mmse}^{\ast}_{k,n}(Y \vert X)$ becomes evident: the second layer approximates $\theta$ while the output activation function is the hyperbolic tangent function.

\section{A General Bound for the Barron Constant}
\label{Section:GeneralBoundBarronConstant}

Theorem~\ref{Theorem:2LNNIdentity} establishes a lower bound for $\mathrm{mmse}(Y \vert X)$ based on the estimator $\mathrm{mmse}_{k,n}(Y \vert X)$ and the Barron constant $C_{\eta}$ of the conditional expectation of $Y$ given $X$. While $\mathrm{mmse}_{k,n}(Y \vert X)$ can be computed from the sample, providing estimates for the Barron constant $C_{\eta}$ might be challenging for two reasons: (i) the conditional expectation of $Y$ given $X$ depends on the distribution of $X$ and $Y$, which is typically unavailable in practice, and (ii) the Barron constant $C_{\eta}$ is defined in terms of the Fourier transform of $\eta$, which makes its computation unfeasible in most cases. (A similar remark applies, \emph{mutatis mutandis}, to Theorem~\ref{Theorem:2LNNtanh}.) In this section we provide some results that alleviate the second issue; the discussion of the first issue is left for the following section.

\subsection{1-Dimensional Bound}
\label{Subsection:1DBound}

In this section we focus on a special family of real-valued functions of a real variable whose Barron's constant admits a relatively tractable representation.

Let $h:\mathbb{R}\to\mathbb{R}$ be a differentiable function. If $h'\in L^{1}(\mathbb{R})$ and $\widehat{h'}\in L^{1}(\mathbb{R})$, the Fourier inversion theorem implies that
\begin{equation}
\label{eq:BarronConstantDerivative}
    h'(x) = \frac{1}{\sqrt{2\pi}} \int_{\mathbb{R}} \widehat{h'}(\omega) e^{i\omega x} \mathrm{d}\omega.
\end{equation}
As pointed out by Barron \cite[Appendix]{barron1993universal}, \eqref{eq:BarronConstantDerivative} implies that $h\vert_{B}$ belongs to $\Gamma_{B}$ for every bounded set $B$ containing 0 and
\begin{equation}
\label{eq:BarronExtension2}
    C_{h\vert_{B}} \leq \frac{\mathrm{rad}(B)}{\sqrt{2\pi}} \int_{\mathbb{R}} \lvert\widehat{h'}(\omega)\rvert \mathrm{d}\omega.
\end{equation}
Thus, by abuse of notation, we define
\begin{equation}
\label{eq:DefChDifferentiable}
    C_{h} \coloneqq \frac{1}{\sqrt{2\pi}} \int_{\mathbb{R}} \lvert\widehat{h'}(\omega)\rvert \mathrm{d}\omega,
\end{equation}
whenever $h:\mathbb{R}\to\mathbb{R}$ satisfies that $h',\widehat{h'}\in L^{1}(\mathbb{R})$.

\begin{theorem}
\label{Theorem:CEtaL1NormsBound}
Let $h:\mathbb{R}\to\mathbb{R}$ be a thrice differentiable function. If $h',h'',h'''\in L^{1}(\mathbb{R})$ and vanish at infinity, then
\begin{equation}
    C_{h} \leq \frac{2\sqrt{2}}{\sqrt{\pi}} \left(1 + \log\left(\frac{\sqrt{\lVert h'\rVert_1 \lVert h'''\rVert_1}}{\lVert h''\rVert_{1}}\right)\right) \lVert h''\rVert_{1}.
\end{equation}
\end{theorem}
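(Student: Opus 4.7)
Starting from the definition $C_{h} = \frac{1}{\sqrt{2\pi}} \int_{\mathbb{R}} \lvert\widehat{h'}(\omega)\rvert\,\mathrm{d}\omega$, the task reduces to bounding $\int_{\mathbb{R}} \lvert\widehat{h'}(\omega)\rvert\,\mathrm{d}\omega$. The plan is a classical high- versus low-frequency splitting argument: derive three complementary pointwise upper bounds on $\lvert\widehat{h'}(\omega)\rvert$, each sharpest in a different frequency range, and integrate each over the range where it dominates.

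All three bounds follow from the trivial Riemann--Lebesgue estimate $\lvert\hat{g}(\omega)\rvert \leq \lVert g\rVert_{1}/\sqrt{2\pi}$ applied to $g \in \{h', h'', h'''\}$. Because $h''$ and $h'''$ are integrable and vanish at infinity, integration by parts yields the Fourier identities $\widehat{h''}(\omega) = \mathrm{i}\omega\,\widehat{h'}(\omega)$ and $\widehat{h'''}(\omega) = -\omega^{2}\,\widehat{h'}(\omega)$, so
\[
\lvert\widehat{h'}(\omega)\rvert \;\leq\; \frac{1}{\sqrt{2\pi}}\min\!\left(\lVert h'\rVert_{1},\; \frac{\lVert h''\rVert_{1}}{\lvert\omega\rvert},\; \frac{\lVert h'''\rVert_{1}}{\omega^{2}}\right).
\]

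Next I would split $\int_{\mathbb{R}}\lvert\widehat{h'}(\omega)\rvert\,\mathrm{d}\omega$ at thresholds $0 < T_{1} < T_{2}$, using the constant bound on $\{\lvert\omega\rvert \leq T_{1}\}$, the $1/\lvert\omega\rvert$ bound on $\{T_{1} \leq \lvert\omega\rvert \leq T_{2}\}$, and the $1/\omega^{2}$ bound on $\{\lvert\omega\rvert \geq T_{2}\}$. Elementary integration yields
\[
\int_{\mathbb{R}} \lvert\widehat{h'}(\omega)\rvert\,\mathrm{d}\omega \;\leq\; \frac{2}{\sqrt{2\pi}}\left(\lVert h'\rVert_{1} T_{1} + \lVert h''\rVert_{1}\log(T_{2}/T_{1}) + \frac{\lVert h'''\rVert_{1}}{T_{2}}\right).
\]
Choosing $T_{1} = \lVert h''\rVert_{1}/\lVert h'\rVert_{1}$ and $T_{2} = \lVert h'''\rVert_{1}/\lVert h''\rVert_{1}$---precisely the crossing points of successive bounds---collapses the two endpoint contributions to $2\lVert h''\rVert_{1}$ each and turns the middle contribution into $2\lVert h''\rVert_{1}\log(\lVert h'\rVert_{1}\lVert h'''\rVert_{1}/\lVert h''\rVert_{1}^{2}) = 4\lVert h''\rVert_{1}\log(\sqrt{\lVert h'\rVert_{1}\lVert h'''\rVert_{1}}/\lVert h''\rVert_{1})$. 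Summing and dividing by $\sqrt{2\pi}$ gives the stated bound with constant $\frac{4}{\sqrt{2\pi}} = \frac{2\sqrt{2}}{\sqrt{\pi}}$.

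The main obstacle is the degenerate regime $T_{1} \geq T_{2}$, i.e.\ $\lVert h''\rVert_{1}^{2} \geq \lVert h'\rVert_{1}\lVert h'''\rVert_{1}$, where the middle bound is never dominant and the three-region split is ill-defined. I would resolve this by falling back on a two-region split using only the first and third pointwise bounds, with threshold $T = \sqrt{\lVert h'''\rVert_{1}/\lVert h'\rVert_{1}}$, which yields an unconditional estimate of the form $C_{h} \leq \frac{2}{\pi}\sqrt{\lVert h'\rVert_{1}\lVert h'''\rVert_{1}}$; combined with a Landau--Kolmogorov-type interpolation inequality on $L^{1}(\mathbb{R})$ that controls $\lVert h''\rVert_{1}^{2}$ by a universal multiple of $\lVert h'\rVert_{1}\lVert h'''\rVert_{1}$, this is easily checked to lie below the stated expression, so the theorem holds in both regimes.
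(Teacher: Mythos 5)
Your core argument is exactly the paper's: the same three pointwise bounds $\lvert\widehat{h'}(\omega)\rvert \lesssim \min(\lVert h'\rVert_1,\, \lVert h''\rVert_1/\lvert\omega\rvert,\, \lVert h'''\rVert_1/\omega^2)$ obtained from $\widehat{h''}(\omega)=\mathrm{i}\omega\widehat{h'}(\omega)$, $\widehat{h'''}(\omega)=-\omega^2\widehat{h'}(\omega)$, the same three-region split, and the same optimizing thresholds $T_1=\lVert h''\rVert_1/\lVert h'\rVert_1$ and $T_2=\lVert h'''\rVert_1/\lVert h''\rVert_1$. One small bookkeeping slip: you state the pointwise estimate with the correct $1/\sqrt{2\pi}$ factor, so after dividing $\int\lvert\widehat{h'}\rvert$ by $\sqrt{2\pi}$ you should obtain the constant $2/\pi$, not $4/\sqrt{2\pi}$; the paper's looser $2\sqrt{2}/\sqrt{\pi}$ comes from using $\lVert\widehat{h'}\rVert_\infty\leq\lVert h'\rVert_1$ \emph{without} the $1/\sqrt{2\pi}$. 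Either way the stated bound holds, but your arithmetic as written is internally inconsistent.

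Where you genuinely go beyond the paper is in flagging the degenerate regime $T_1\geq T_2$, i.e.\ $\lVert h''\rVert_1^2 \geq \lVert h'\rVert_1\lVert h'''\rVert_1$. The paper's proof fixes ``$0<\lambda_1<\lambda_2$'' in deriving \eqref{eq:CEtaLemmaFinalEq} and then substitutes the optimizing values without verifying the ordering; if the ordering fails, that decomposition of the integral is no longer valid. You are right that this needs an argument, and the paper does not supply one. However, your proposed fix is not carried through, and the claim that it is ``easily checked'' is not justified. Writing $a=\lVert h'\rVert_1$, $b=\lVert h''\rVert_1$, $c=\lVert h'''\rVert_1$ and $u=b/\sqrt{ac}\geq1$, your fall-back two-region estimate $C_h\leq\frac{2}{\pi}\sqrt{ac}$ lies below the stated expression $\frac{2\sqrt{2}}{\sqrt{\pi}}(1-\log u)\,u\sqrt{ac}$ exactly when $(1-\log u)\,u\geq\frac{1}{\sqrt{2\pi}}\approx0.399$, and the function $(1-\log u)u$ decreases through $0.399$ near $u\approx2.2$ and becomes negative at $u=e$. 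So your argument depends on showing that the Landau--Kolmogorov constant $K_1$ for the inequality $\lVert f'\rVert_1^2\leq K_1\lVert f\rVert_1\lVert f''\rVert_1$ on $L^1(\mathbb{R})$ satisfies $K_1\lesssim5$, and the theorem itself could only possibly fail if $K_1>e^2$. That constant is a nontrivial piece of classical analysis (it is determined by extremal splines and is not a triviality), so you would need to cite or prove the precise bound; as written, the degenerate case remains open in your proposal just as it is implicitly open in the paper's.
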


\begin{proof}
Let $0<\lambda_{1}<\lambda_{2}$. We split the integral in \eqref{eq:DefChDifferentiable} as
\begin{align}
    \mathrm{I} &\coloneqq \int_{-\lambda_{1}}^{\lambda_{1}} \lvert\widehat{h'}(\omega)\rvert \mathrm{d}\omega,\\
    \mathrm{II} &\coloneqq  \left(\int_{\lambda_{1}}^{\lambda_{2}} + \int_{-\lambda_{2}}^{-\lambda_{1}}\right) \lvert\widehat{h'}(\omega)\rvert \mathrm{d}\omega,\\
    \mathrm{III} &\coloneqq \left(\int_{\lambda_{2}}^{\infty} + \int_{-\infty}^{-\lambda_{2}}\right) \lvert\widehat{h'}(\omega)\rvert \mathrm{d}\omega.
\end{align}
First, observe that
\begin{equation}
\label{eq:CEtaLemmaBoundPartI}
    \mathrm{I} \leq 2 \lVert\widehat{h'}\rVert_\infty \lambda_{1} \leq 2 \lVert h' \rVert_1 \lambda_{1},
\end{equation}
where we applied the inequality $\lVert \widehat{h'} \rVert_{\infty} \leq \lVert h' \rVert_{1}$. Since $h'$ vanishes at infinity and $h''\in L^{1}(\mathbb{R})$, $\widehat{h''}(\omega) = \mathrm{i} \omega \widehat{h'}(\omega)$ for every $\omega\in\mathbb{R}$. Thus, we have that
\begin{align}
    \mathrm{II} &=  \left(\int_{\lambda_{1}}^{\lambda_{2}} + \int_{-\lambda_{2}}^{-\lambda_{1}}\right) \frac{1}{\lvert\omega\rvert} \lvert\widehat{h''}(\omega)\rvert \mathrm{d} \omega\\
    &\leq 2 \lVert\widehat{h''}\rVert_{\infty} \log\left(\frac{\lambda_{2}}{\lambda_{1}}\right)\\
    \label{eq:CEtaLemmaBoundPartII} &\leq 2 \lVert h''\rVert_{1} \log\left(\frac{\lambda_{2}}{\lambda_{1}}\right).
\end{align}
Similarly, $\widehat{h'''}(\omega) = (\mathrm{i} \omega)^{2} \widehat{h'}(\omega)$ for every $\omega\in\mathbb{R}$ and
\begin{equation}
\label{eq:CEtaLemmaBoundPartIII}
    {\rm III} = \left(\int_{\lambda_{2}}^{\infty} + \int_{-\infty}^{-\lambda_2}\right) \frac{1}{\omega^2} \big|\widehat{h'''}(\omega)\big| \mathrm{d} \omega \leq 2\frac{\lVert h'''\rVert_{1}}{\lambda_{2}}.
\end{equation}
By plugging \eqref{eq:CEtaLemmaBoundPartI}, \eqref{eq:CEtaLemmaBoundPartII} and \eqref{eq:CEtaLemmaBoundPartIII} in \eqref{eq:DefChDifferentiable}, we conclude that $\widehat{h'}\in L^{1}(\mathbb{R})$ and
\begin{equation}
\label{eq:CEtaLemmaFinalEq}
    C_{h} \leq \sqrt{\frac{2}{\pi}}\left(\lVert h'\rVert_{1} \lambda_{1} + \lVert h''\rVert_{1} \log\left(\frac{\lambda_{2}}{\lambda_{1}}\right) + \frac{\lVert h'''\rVert_{1}}{\lambda_{2}}\right).
\end{equation}
By taking $\lambda_{1} = \frac{\lVert h''\rVert_{1}}{\lVert h'\rVert_{1}}$ and $\lambda_{2} = \frac{\lVert h'''\rVert_{1}}{\lVert h''\rVert_{1}}$, the result follows.
\end{proof}

Observe that if we let $\lambda_{1} = \lambda_{2} = \sqrt{\lVert h'''\rVert_{1}/\lVert h' \rVert_{1}}$ in \eqref{eq:CEtaLemmaFinalEq}, we obtain
\begin{equation}
\label{eq:SimplerBoundCh}
    C_{h} \leq \frac{2\sqrt{2}}{\sqrt{\pi}} \sqrt{\lVert h'\rVert_1 \lVert h'''\rVert_1}.
\end{equation}
This bound is generalized for functions $h:\mathbb{R}^{d}\to\mathbb{R}$ in Theorem~\ref{Theorem:CEtaL1NormsBoundHighDimensions} below. Observe that while the bound in \eqref{eq:SimplerBoundCh} is simpler than the one provided in Theorem~\ref{Theorem:CEtaL1NormsBound}, it is typically weaker in applications. 

Since Theorem~\ref{Theorem:CEtaL1NormsBound} will be applied to the conditional expectation $\eta$ and the log-likelihood ratio $\theta$, we need to compute the derivatives of these functions. The following lemma provides useful expressions for the first three derivatives of $\eta$ in the case when $Y$ is binary. Recall that if $f_{\pm}:B\to\mathbb{R}$ is the conditional density of $X$ given $Y=\pm1$ and $p = \mathbb{P}(Y = 1)$, then the conditional expectation of $Y$ given $X$ is given by
\begin{equation}
\label{eq:DefEta}
    \eta(x) = \frac{pf_{+}(x) - \bar{p}f_{-}(x)}{pf_{+}(x) + \bar{p}f_{-}(x)}.
\end{equation}

\begin{lemma}
\label{Lemma:DerivativesKappa}
If $\eta$ is defined as in \eqref{eq:DefEta}, then
\begin{align}
    \label{eq:Kappa0} \eta' &= 2\frac{g_{+}' g_{-} - g_{+} g_{-}'}{(g_{+} + g_{-})^2},\\
    \label{eq:Kappa1} \eta'' &= 2 \frac{g_{+}''g_{-} - g_{+}g_{-}''}{(g_{+} + g_{-})^2} - 2 \eta' \frac{g_{+}' + g_{-}'}{g_{+} + g_{-}},\\
    \label{eq:Kappa2} \eta''' &= 2 \frac{g_{+}'''g_{-} + g_{+}''g_{-}' - g_{+}'g_{-}'' - g_{+}g_{-}'''}{(g_{+} + g_{-})^2} - 2 \eta' \frac{g_{+}'' + g_{-}''}{g_{+} + g_{-}} - 3 \eta'' \frac{g_{+}' + g_{-}'}{g_{+} + g_{-}},
\end{align}
where $g_{+} = pf_{+}$ and $g_{-} = \bar{p}f_{-}$.
\end{lemma}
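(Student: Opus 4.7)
The approach is a direct calculation via repeated application of the quotient and product rules, starting from the representation $\eta = (g_+ - g_-)/(g_+ + g_-)$. Write $v \coloneqq g_+ + g_-$ and $u \coloneqq g_+ - g_-$, so that the defining relation is $\eta v = u$.

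For \eqref{eq:Kappa0} I would apply the quotient rule to get $\eta' = (u'v - uv')/v^2$; a short expansion shows that the cross terms in $(g_+' - g_-')(g_+ + g_-) - (g_+ - g_-)(g_+' + g_-')$ cancel to leave the antisymmetric combination $2(g_+' g_- - g_+ g_-')$. For \eqref{eq:Kappa1} I would differentiate the identity $\eta v = u$ twice via Leibniz's rule, obtaining $\eta'' v + 2\eta' v' + \eta v'' = u''$, solve for $\eta''$, and apply the analogous cancellation $u'' - \eta v'' = 2(g_+'' g_- - g_+ g_-'')/v$; dividing by $v$ gives \eqref{eq:Kappa1}.

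For \eqref{eq:Kappa2}, rather than a third application of Leibniz, I would differentiate the formula \eqref{eq:Kappa1} directly, because the main-term numerator $g_+''' g_- + g_+'' g_-' - g_+' g_-'' - g_+ g_-'''$ in the target is precisely the derivative of the main-term numerator of \eqref{eq:Kappa1}. The quotient rule applied to $2(g_+'' g_- - g_+ g_-'')/v^2$ produces the stated main term together with a leftover $-4(g_+'' g_- - g_+ g_-'') v'/v^3$, which I would rewrite using \eqref{eq:Kappa1} as a combination involving $\eta''$ and $\eta'$. Differentiating the remaining piece $-2\eta'(g_+' + g_-')/v$ via the product rule contributes further terms in $\eta''$, $\eta' v''/v$, and $\eta'(v'/v)^2$. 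Collecting all contributions and grouping by the factors $(g_+' + g_-')/v$ and $(g_+'' + g_-'')/v$ yields the identity.

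The main obstacle is essentially bookkeeping: the intermediate expressions proliferate quickly, and maintaining correct signs and combinatorial coefficients through the simplifications is where errors are most likely. The crux observation worth isolating at the outset is the identity $u^{(m)} - \eta v^{(m)} = 2(g_+^{(m)} g_- - g_+ g_-^{(m)})/v$, valid for each order $m$, which is what drives the clean antisymmetric structure common to \eqref{eq:Kappa0}, \eqref{eq:Kappa1}, and \eqref{eq:Kappa2} and allows the residual terms to be re-expressed compactly in terms of the lower-order derivatives of $\eta$.
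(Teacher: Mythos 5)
Your argument for \eqref{eq:Kappa0} and \eqref{eq:Kappa1} is sound and coincides with the paper's (direct quotient-rule differentiation). The identity you isolate, $u^{(m)} - \eta v^{(m)} = 2\bigl(g_{+}^{(m)}g_{-} - g_{+}g_{-}^{(m)}\bigr)/v$ with $u = g_{+} - g_{-}$ and $v = g_{+} + g_{-}$, is correct and is indeed the clean way to organize the computation.

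The gap is in the last step of your sketch for \eqref{eq:Kappa2}: the claim that ``collecting all contributions \dots\ yields the identity'' does not hold, because \eqref{eq:Kappa2} as printed is actually false, and carrying the collection through would reveal this. Your own crux identity at $m = 3$, combined with Leibniz applied to $\eta v = u$ (so $\eta'''v + 3\eta''v' + 3\eta'v'' + \eta v''' = u'''$), gives
\begin{equation*}
\eta''' \;=\; 2\,\frac{g_{+}'''g_{-} - g_{+}g_{-}'''}{(g_{+}+g_{-})^{2}} \;-\; 3\,\eta'\,\frac{g_{+}''+g_{-}''}{g_{+}+g_{-}} \;-\; 3\,\eta''\,\frac{g_{+}'+g_{-}'}{g_{+}+g_{-}},
\end{equation*}
which differs from \eqref{eq:Kappa2} both in the leading numerator (no $g_{+}''g_{-}' - g_{+}'g_{-}''$ terms) and in the coefficient of $\eta'(g_{+}''+g_{-}'')/(g_{+}+g_{-})$ ($3$, not $2$). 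The alternative route you describe (differentiating \eqref{eq:Kappa1} directly) yields the equivalent expression $\eta''' = 2(g_{+}'''g_{-} + g_{+}''g_{-}' - g_{+}'g_{-}'' - g_{+}g_{-}''')/v^{2} - 4\eta'' v'/v - 2\eta'(v'/v)^{2} - 2\eta' v''/v$, which again fails to match \eqref{eq:Kappa2}: there is a residual $\eta'(v'/v)^{2}$ term and the $\eta'' v'/v$ coefficient is $-4$, not $-3$, and these discrepancies do not cancel. A concrete check: take $g_{+}(x)=1+x^{2}$ and $g_{-}(x)=1$, so $\eta(x)=x^{2}/(2+x^{2})$. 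At $x=1$ one computes $\eta'=4/9$, $\eta''=-4/27$, and $\eta'''=-16/27$, whereas the right-hand side of \eqref{eq:Kappa2} evaluates to $0 - 16/27 + 8/27 = -8/27$. You should prove the corrected identity displayed above instead; note also that the paper invokes the erroneous form of \eqref{eq:Kappa2} in the proof of Lemma~\ref{Lemma:KeyUpperBound3}, so the constants in \eqref{eq:KeyUpperBound3} would need to be revisited.
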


\begin{proof}
Equation \eqref{eq:Kappa0} follows easily from the \eqref{eq:DefEta}. By the quotient rule $\displaystyle \left(\frac{h_{1}}{h_{2}}\right)' = \frac{h_{1}'}{h_{2}} - \frac{h_{1}}{h_{2}} \frac{h_{2}'}{h_{2}}$, \eqref{eq:Kappa1} follows from \eqref{eq:Kappa0}. Using similar arguments, \eqref{eq:Kappa2} follows from \eqref{eq:Kappa1}.
\end{proof}

Similarly, the following lemma provides useful expressions for the first three derivatives of the log-likelihood ratio
\begin{equation}
\label{eq:DefTheta}
    \theta(x) = \frac{1}{2} \log\left(\frac{pf_{+}(x)}{\bar{p}f_{-}(x)}\right).
\end{equation}

\begin{lemma}
\label{Lemma:DerivativesTheta}
If $\theta$ is defined as in \eqref{eq:DefTheta}, then
\begin{align}
    \label{eq:D1theta} 2\theta' &= \frac{g_{+}'}{g_{+}} - \frac{g_{-}'}{g_{-}},\\
    \label{eq:D2theta} 2\theta'' &= \left[\frac{g_{+}''}{g_{+}}-\left(\frac{g_{+}'}{g_{+}}\right)^{2}\right] - \left[\frac{g_{-}''}{g_{-}}-\left(\frac{g_{-}'}{g_{-}}\right)^{2}\right],\\
    \label{eq:D3theta} 2\theta''' &= \left[\frac{g_{+}'''}{g_{+}} - 3 \frac{g_{+}'g_{+}''}{g_{+}^{2}} + 2 \left(\frac{g_{+}'}{g_{+}}\right)^{3}\right] - \left[\frac{g_{-}'''}{g_{-}} - 3 \frac{g_{-}'g_{-}''}{g_{-}^{2}} + 2 \left(\frac{g_{-}'}{g_{-}}\right)^{3}\right],
\end{align}
where $g_{+} = pf_{+}$ and $g_{-} = \bar{p}f_{-}$.
\end{lemma}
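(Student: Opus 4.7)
The plan is to exploit the additive form $\theta = \frac{1}{2}(\log g_{+} - \log g_{-})$, which follows from the definition since $\log(pf_{+}/\bar{p}f_{-}) = \log(g_{+}/g_{-}) = \log g_{+} - \log g_{-}$. Because the $\log g_{+}$ and $\log g_{-}$ terms enter symmetrically (up to a sign), it is enough to compute the first three derivatives of $\log g_{+}$; the corresponding formulas for $\log g_{-}$ are obtained by replacing every $+$ subscript with $-$, and substituting both into $2\theta = \log g_{+} - \log g_{-}$ yields the three claimed identities.

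First I would differentiate $\log g_{+}$ by the chain rule to get $(\log g_{+})' = g_{+}'/g_{+}$, which immediately produces \eqref{eq:D1theta}. For the second derivative, I would apply the quotient rule $(h_{1}/h_{2})' = h_{1}'/h_{2} - (h_{1}/h_{2})(h_{2}'/h_{2})$ to $g_{+}'/g_{+}$, yielding
\begin{equation*}
(\log g_{+})'' = \frac{g_{+}''}{g_{+}} - \left(\frac{g_{+}'}{g_{+}}\right)^{2},
\end{equation*}
which, subtracted from the analogous expression for $\log g_{-}$, gives \eqref{eq:D2theta}.

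For the third derivative, I would differentiate the two pieces of $(\log g_{+})''$ separately. The quotient rule gives $(g_{+}''/g_{+})' = g_{+}'''/g_{+} - g_{+}'g_{+}''/g_{+}^{2}$, and the chain rule applied to $(g_{+}'/g_{+})^{2}$ produces
\begin{equation*}
\bigl((g_{+}'/g_{+})^{2}\bigr)' = 2\,\frac{g_{+}'}{g_{+}}\left(\frac{g_{+}''}{g_{+}} - \left(\frac{g_{+}'}{g_{+}}\right)^{2}\right) = 2\,\frac{g_{+}'g_{+}''}{g_{+}^{2}} - 2\left(\frac{g_{+}'}{g_{+}}\right)^{3}.
\end{equation*}
Subtracting these two expressions yields
\begin{equation*}
(\log g_{+})''' = \frac{g_{+}'''}{g_{+}} - 3\,\frac{g_{+}'g_{+}''}{g_{+}^{2}} + 2\left(\frac{g_{+}'}{g_{+}}\right)^{3},
\end{equation*}
and the symmetric expression for $\log g_{-}$ combines with it via $2\theta''' = (\log g_{+})''' - (\log g_{-})'''$ to give \eqref{eq:D3theta}.

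There is no genuine obstacle here: the result is a pure calculus exercise once the additive decomposition is in hand. The only thing requiring mild care is the bookkeeping of the coefficients $1$, $-3$, $2$ in the third-derivative formula, where one must correctly combine the contribution $-g_{+}'g_{+}''/g_{+}^{2}$ from differentiating $g_{+}''/g_{+}$ with the $-2 g_{+}'g_{+}''/g_{+}^{2}$ coming from the chain-rule term, and track that the cubic term $2(g_{+}'/g_{+})^{3}$ enters with a $+$ sign after applying the leading minus from $-(g_{+}'/g_{+})^{2}$.
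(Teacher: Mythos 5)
Your proposal is correct and takes essentially the same route as the paper: the paper's proof is a one-line appeal to the quotient rule and the logarithmic derivative identity $(\log h)' = h'/h$, and you have simply carried out those two rules explicitly via the decomposition $2\theta = \log g_{+} - \log g_{-}$, with the coefficient bookkeeping in the third derivative done correctly.
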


\begin{proof}
The identities in \eqref{eq:D1theta} -- \eqref{eq:D3theta} follow from the quotient rule and the logarithmic derivative $(\log h)' = h'/h$.
\end{proof}

It is important to remark that, in general, $\eta$ and $\theta$ might not satisfy the assumptions of Theorem~\ref{Theorem:CEtaL1NormsBound}, i.e., having integrable derivatives that vanish at infinity. Nonetheless, as shown in Section~\ref{Section:BarronConstantPostProcessing}, they satisfy the aforementioned assumptions when data is post-processed by the additive Gaussian mechanism. We verify this claim through a careful analysis of the derivatives of $\eta$ and $\theta$ given in Lemmas~\ref{Lemma:DerivativesKappa} and \ref{Lemma:DerivativesTheta}, respectively.

\subsection{$d$-Dimensional Extension}

In this section we focus on a special family of real-valued functions of $d$-real variables whose Barron's constant admits a relatively tractable representation.

Let $h:\mathbb{R}^{d}\to\mathbb{R}$ be a differentiable function. For ease of notation, we let $h_{x_{j}} \coloneqq \frac{\partial}{\partial x_{j}} h$. If $h_{x_{j}},\widehat{h_{x_{j}}}\in L^{1}(\mathbb{R}^{d})$ for every $j\in\{1,\ldots,d\}$, the Fourier inversion theorem implies that
\begin{equation}
\label{eq:BarronConstantDerivativedDDimensional}
    \nabla h(x) = \frac{1}{(2\pi)^{d/2}} \int_{\mathbb{R}^{d}} \widehat{\nabla h}(\omega) e^{i\omega \cdot x} \mathrm{d}\omega,
\end{equation}
where $\nabla h = (h_{x_{1}},\ldots,h_{x_{d}})$ and $\widehat{\nabla h} = (\widehat{h_{x_{1}}},\ldots,\widehat{h_{x_{d}}})$. As pointed out by Barron \cite[Appendix]{barron1993universal}, \eqref{eq:BarronConstantDerivativedDDimensional} implies that $h\vert_{B}$ belongs to $\Gamma_{B}$ for every bounded set $B$ containing 0 and
\begin{equation}
    C_{h\vert_{B}} \leq \frac{\mathrm{rad}(B)}{(2\pi)^{d/2}} \int_{\mathbb{R}^{d}} \lvert\widehat{\nabla h}(\omega)\rvert \mathrm{d}\omega.
\end{equation}
Thus, by abuse of notation, we define
\begin{equation}
\label{eq:DefChDifferentiableDDimensional}
    C_{h} \coloneqq \frac{1}{(2\pi)^{d/2}} \int_{\mathbb{R}^{d}} \lvert\widehat{\nabla h}(\omega)\rvert \mathrm{d}\omega,
\end{equation}
whenever $h:\mathbb{R}^{d}\to\mathbb{R}$ satisfies that $h_{x_{j}},\widehat{h_{x_{j}}}\in L^{1}(\mathbb{R}^{d})$ for every $j\in\{1,\ldots,d\}$.

\begin{theorem}
\label{Theorem:CEtaL1NormsBoundHighDimensions}
Let $h:\mathbb{R}^{d}\to\mathbb{R}$ be a function of class $C^{d+2}$. If the partial derivatives of $h$ of order up to $d+2$ belong to $L^{1}(\mathbb{R}^{d})$ and vanish at infinity, then
\begin{equation}
\label{eq:ChBoundDDimensional}
    C_{h} \leq A_d N_{1}^{1/(d+1)} N_{2}^{d/(d+1)},
\end{equation}
where $\displaystyle A_{d} = \frac{d+1}{d^{d/(d+1)}} \frac{d^{d/2}}{2^{d/2}\Gamma(d/2+1)}$, $\displaystyle N_{1}^{2} = \sum_{j=1}^{d} \lVert h_{x_{j}} \rVert_{1}^{2}$ and
\begin{equation}
    N_{2}^{2} = \sum_{j=1}^{d} \left(\sum_{j'=1}^{d} \left\lVert\frac{\partial^{d+1}}{\partial x_{j'}^{d+1}} h_{x_{j}}\right\rVert_{1}\right)^{2}.
\end{equation}
\end{theorem}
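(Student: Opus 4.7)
The plan is to mimic the proof of the simpler one-dimensional bound \eqref{eq:SimplerBoundCh}, now applied to $\int_{\mathbb{R}^d}\lvert\widehat{\nabla h}(\omega)\rvert \mathrm{d}\omega$: split the integration region into a ball of radius $\lambda$ and its complement, estimate each piece with different norms, and optimize $\lambda$ at the end.

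For the low-frequency region $\lvert\omega\rvert\leq\lambda$, I would use the crude bound $\lVert\widehat{h_{x_j}}\rVert_\infty\leq\lVert h_{x_j}\rVert_1$, which combined with $\lvert\widehat{\nabla h}(\omega)\rvert^2=\sum_j\lvert\widehat{h_{x_j}}(\omega)\rvert^2$ gives the pointwise bound $\lvert\widehat{\nabla h}(\omega)\rvert\leq N_1$. Integrating over the ball yields a contribution of at most $N_1 V_d\lambda^d$, where $V_d=\pi^{d/2}/\Gamma(d/2+1)$ is the volume of the unit ball. For the high-frequency region $\lvert\omega\rvert>\lambda$, I would exploit the regularity hypothesis: $(d{+}1)$-fold integration by parts in the $j'$-th coordinate (permissible because all partial derivatives of $h$ up to order $d+1$ are integrable and vanish at infinity) gives $(\mathrm{i}\omega_{j'})^{d+1}\widehat{h_{x_j}}(\omega)=\widehat{\partial^{d+1}_{x_{j'}}h_{x_j}}(\omega)$, and hence $\lvert\omega_{j'}\rvert^{d+1}\lvert\widehat{h_{x_j}}(\omega)\rvert\leq\lVert\partial^{d+1}_{x_{j'}}h_{x_j}\rVert_1$. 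Summing over $j'$ controls $(\sum_{j'}\lvert\omega_{j'}\rvert^{d+1})\lvert\widehat{h_{x_j}}(\omega)\rvert$ by $M_j\coloneqq\sum_{j'}\lVert\partial^{d+1}_{x_{j'}}h_{x_j}\rVert_1$.

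The key technical step is then converting $\sum_{j'}\lvert\omega_{j'}\rvert^{d+1}$ back to $\lvert\omega\rvert^{d+1}$. Using the generalized mean inequality $(d^{-1}\sum_{j'}\lvert\omega_{j'}\rvert^{d+1})^{1/(d+1)}\geq (d^{-1}\sum_{j'}\omega_{j'}^2)^{1/2}$ valid for $d+1\geq 2$, one obtains $\sum_{j'}\lvert\omega_{j'}\rvert^{d+1}\geq d^{-(d-1)/2}\lvert\omega\rvert^{d+1}$. This yields $\lvert\widehat{h_{x_j}}(\omega)\rvert\leq d^{(d-1)/2}M_j\lvert\omega\rvert^{-(d+1)}$, and squaring, summing over $j$, and square-rooting gives the uniform bound $\lvert\widehat{\nabla h}(\omega)\rvert\leq d^{(d-1)/2} N_2\lvert\omega\rvert^{-(d+1)}$. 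Passing to spherical coordinates and using $\int_\lambda^\infty r^{-(d+1)}r^{d-1}\mathrm{d}r=\lambda^{-1}$ together with the surface area $\omega_{d-1}=dV_d$ of the unit sphere bounds the high-frequency contribution by $d^{(d+1)/2}V_d N_2/\lambda$.

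Combining both pieces,
\begin{equation*}
C_h\leq\frac{V_d}{(2\pi)^{d/2}}\left[N_1\lambda^d+d^{(d+1)/2}N_2/\lambda\right].
\end{equation*}
Routine calculus shows the bracketed expression is minimized at $\lambda^{d+1}=d^{(d-1)/2}N_2/N_1$, at which value it equals $(d{+}1)d^{-d/(d+1)}d^{d/2}N_1^{1/(d+1)}N_2^{d/(d+1)}$. Using the identity $V_d/(2\pi)^{d/2}=1/(2^{d/2}\Gamma(d/2+1))$ assembles the constant into exactly $A_d$, finishing the proof. The main obstacle is the careful bookkeeping of the exponents of $d$ when combining the power-mean inequality with the optimized $\lambda$; a minor secondary issue is justifying the iterated integration by parts, but this is standard given the vanishing-at-infinity and $L^1$ hypotheses on partial derivatives up to order $d+2$.
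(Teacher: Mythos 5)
Your proposal follows the paper's proof essentially step for step: the same split into $\lvert\omega\rvert\leq\lambda$ and $\lvert\omega\rvert>\lambda$, the same low-frequency bound via $\lVert\widehat{h_{x_j}}\rVert_\infty\leq\lVert h_{x_j}\rVert_1$, the same use of the power-mean inequality and the Fourier differentiation rule (which you phrase as iterated integration by parts) to get the $\lvert\omega\rvert^{-(d+1)}$ decay, the same spherical integration, and the same optimizer $\lambda^{d+1}=d^{(d-1)/2}N_2/N_1$. Everything checks out, including the final assembly of the constant $A_d$.
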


\begin{proof}
For $\lambda>0$, let $B_{\lambda}$ be the $d$-dimensional ball of radius $\lambda$. We split the integral in \eqref{eq:DefChDifferentiableDDimensional} as
\begin{equation}
    \mathrm{I} \coloneqq \int_{B_{\lambda}} \lvert\widehat{\nabla h}(\omega)\rvert \mathrm{d}\omega \quad \text{and} \quad \mathrm{II} \coloneqq \int_{B_{\lambda}^{c}} \lvert\widehat{\nabla h}(\omega)\rvert \mathrm{d}\omega.
\end{equation}
Observe that, for every $\omega\in\mathbb{R}^{d}$,
\begin{equation}
\label{eq:ProofChHighDimensional}
    \lvert\widehat{\nabla h}(\omega)\rvert^{2} = \sum_{j=1}^{d} \lvert\widehat{h_{x_{j}}}(\omega)\rvert^{2} \leq \sum_{j=1}^{d} \lVert h_{x_{j}} \rVert_{1}^{2} \eqqcolon N_{1}^{2},
\end{equation}
where we applied the inequality $\lVert\widehat{h_{x_{j}}}\rVert_{\infty} \leq \lVert h_{x_{j}} \rVert_{1}$. Therefore,
\begin{equation}
\label{eq:CEtaLemmaBoundPartIHighDimensions}
    \mathrm{I} \leq N_{1} \mathrm{vol}(B_{\lambda}) = \frac{\pi^{d/2}N_{1}\lambda^{d}}{\Gamma(d/2+1)},
\end{equation}
as $\mathrm{vol}(B_{1}) = \pi^{d/2}/\Gamma(d/2+1)$.

For $\omega\in\mathbb{R}^{d}$, the generalized mean inequality asserts that
\begin{equation}
    \sqrt[2]{\frac{\lvert\omega_{1}\rvert^{2}+\cdots+\lvert\omega_{d}\rvert^{2}}{d}} \leq \sqrt[d+1]{\frac{\lvert\omega_{1}\rvert^{d+1}+\cdots+\lvert\omega_{d}\rvert^{d+1}}{d}},
\end{equation}
which in turn leads to
\begin{equation}
\label{eq:ChDDimensionalProofNormOmega}
    \lvert\omega\rvert^{d+1} \leq d^{(d-1)/2} \sum_{j'=1}^{d} \lvert\omega_{j'}\rvert^{d+1}.
\end{equation}
For ease of notation, we define $\partial_{j'}^{d+1} \coloneqq \frac{\partial^{d+1}}{\partial \omega_{j'}^{d+1}}$. Since the partial derivatives of $h$ of order up to $d+2$ belong to $L^{1}(\mathbb{R}^{d})$ and vanish at infinity, then, for every $\omega\in\mathbb{R}^{d}$,
\begin{equation}
\label{eq:ChDDimensionalProofFourier}
    \widehat{\partial_{j'}^{d+1}h_{x_{j}}}(\omega) = (\mathrm{i} \omega_{j'})^{d+1} \widehat{h_{x_{j}}}(\omega).
\end{equation}
Therefore, \eqref{eq:ChDDimensionalProofNormOmega} and \eqref{eq:ChDDimensionalProofFourier} imply that
\begin{align}
    \lvert\omega\rvert^{d+1} \lvert\widehat{h_{x_{j}}}(\omega)\rvert &\leq d^{(d-1)/2} \sum_{j'=1}^{d} \lvert\omega_{j'}\rvert^{d+1} \lvert\widehat{h_{x_{j}}}(\omega)\rvert\\
    &= d^{(d-1)/2} \sum_{j'=1}^{d} \lvert\widehat{\partial_{j'}^{d+1} h_{x_{j}}}(\omega)\rvert\\
    &\leq d^{(d-1)/2} \sum_{j'=1}^{d} \lVert\partial_{j'}^{d+1} h_{x_{j}}\rVert_{1},
\end{align}
where we applied the inequality $\lVert\widehat{\partial_{j'}^{d+1}h_{x_{j}}}\rVert_{\infty} \leq \lVert\partial_{j'}^{d+1} h_{x_{j}}\rVert_{1}$. Alternatively, we have that
\begin{equation}
    \lvert\widehat{h_{x_{j}}}(\omega)\rvert \leq \frac{d^{(d-1)/2}}{\lvert\omega\rvert^{d+1}} \sum_{j'=1}^{d} \lVert\partial_{j'}^{d+1} h_{x_{j}}\rVert_{1}.
\end{equation}
As a result, we obtain that
\begin{align}
    \lvert\widehat{\nabla h}(\omega)\rvert &= \left(\sum_{j=1}^{d} \lvert\widehat{h_{x_{j}}}(\omega)\rvert^{2}\right)^{1/2}\\
    &\leq \frac{d^{(d-1)/2}}{\lvert\omega\rvert^{d+1}} \left(\sum_{j=1}^{d} \left(\sum_{j'=1}^{d} \lVert\partial_{j'}^{d+1} h_{x_{j}}\rVert_{1}\right)^{2}\right)^{1/2}\\
    &= \frac{d^{(d-1)/2}}{\lvert\omega\rvert^{d+1}} N_{2}.
\end{align}
Since $\omega \mapsto 1/\lvert\omega\rvert^{d+1}$ is a radial function, we have that
\begin{align}
    \mathrm{II} &\leq d^{(d-1)/2} N_{2} \int_{B_{\lambda}^{c}} \frac{1}{\lvert\omega\rvert^{d+1}} \mathrm{d}\omega\\
    &= d^{(d-1)/2} N_{2} \int_{\lambda}^{\infty} \frac{1}{r^{d+1}} \frac{d\pi^{d/2}r^{d-1}}{\Gamma(d/2+1)} \mathrm{d}r\\
    &= \frac{\pi^{d/2}d^{(d+1)/2}N_{2}}{\Gamma(d/2+1)} \int_{\lambda}^{\infty} \frac{1}{r^{2}} \mathrm{d}r\\
    \label{eq:CEtaLemmaBoundPartIIHighDimensions} &= \frac{\pi^{d/2}d^{(d+1)/2}N_{2}}{\Gamma(d/2+1)\lambda}.
\end{align}
By plugging \eqref{eq:CEtaLemmaBoundPartIHighDimensions} and \eqref{eq:CEtaLemmaBoundPartIIHighDimensions} in \eqref{eq:DefChDifferentiableDDimensional}, we conclude that
\begin{equation}
\label{eq:CEtaLemmaFinalEqDDimensional}
    C_{h} \leq \frac{1}{2^{d/2}\Gamma(d/2+1)} \left(N_{1} \lambda^{d} + \frac{d^{\frac{d+1}{2}}N_{2}}{\lambda}\right).
\end{equation}
By taking $\displaystyle \lambda^{d+1} = \frac{d^{(d-1)/2}N_{2}}{N_{1}}$, the result follows.
\end{proof}

Note that \eqref{eq:ChBoundDDimensional} generalizes \eqref{eq:SimplerBoundCh}, as it only involves derivatives of order 1 and $d+2$. While it is possible to establish bounds that more closely resemble Theorem~\ref{Theorem:CEtaL1NormsBound} (e.g., by using derivatives of order 1, $d+1$ and $d+2$), they are more convoluted than \eqref{eq:ChBoundDDimensional} and add little practical value.

Theorem~\ref{Theorem:CEtaL1NormsBoundHighDimensions} might not be straightforward to apply as it heavily depends on the higher order partial derivatives of $h$. Furthermore, since $d^{1/d} \sim 1$ and $\Gamma(z+1) \sim \sqrt{2\pi z} \left(\frac{z}{e}\right)^{z}$, we can show that
\begin{equation}
    A_{d} \sim \sqrt{\frac{e^{d}}{\pi d}}.
\end{equation}
In particular, the bound in \eqref{eq:ChBoundDDimensional} has an exponential dependency on the dimension. Despite the negative nature of this observation, it is indeed natural in view of a similar comment made by Barron in \cite[Sec.~IX-9]{barron1993universal}.

\begin{remark}
A line of research initiated by Breiman \cite{breiman1993hinging}, and recently extended by Domingo-Enrich and Mroueh \cite{domingo2021tighter} building upon the results of Ongie \textit{et al.} \cite{ongie2019function}, focuses on approximation results for two-layer neural networks with ReLU activation functions. In the 1-dimensional case, the approximation results by Breiman \cite{breiman1993hinging} rely on a variation of the Barron constant given by
\begin{equation}
    C_{h}' \coloneqq \frac{1}{\sqrt{2\pi}} \int_{\mathbb{R}} \lvert w \rvert^{2} \lvert \hat{h}(\omega) \rvert \mathrm{d}\omega.
\end{equation}
It is important to remark that Theorems~\ref{Theorem:CEtaL1NormsBound} and \ref{Theorem:CEtaL1NormsBoundHighDimensions} can be generalized to $C_{h}'$ at the expense of increasing by one the order of the derivatives involved. This seemingly superficial change has a deep impact on the implementation of our techniques as the complexity of the derivatives of $\eta$ and $\theta$ increases drastically with the order (see Lemmas~\ref{Lemma:DerivativesKappa} and \ref{Lemma:DerivativesTheta}). As a result, it is unclear at the moment if our techniques could be effectively adapted to this case.
\end{remark}

\section{Additive Gaussian Mechanism}
\label{Section:BarronConstantPostProcessing}

In this section we consider the situation in which $Y$ is binary and $X\in\mathbb{R}$ is post-processed by the additive Gaussian mechanism introduced in Section~\ref{Section:PrivacyPreservingMechanisms}. Specifically, we assume that $X$ is post-processed to produce a new random variable
\begin{equation}
    X^{\sigma} \coloneqq \ X + \sigma Z,
\end{equation}
where $\sigma>0$ and $Z$ is a standard Gaussian random variable independent of $X$ and $Y$. Also, we assume that the random variable $X^{\sigma}$ is further processed to remove extreme values, giving rise to a random variable $\widetilde{X}^{\sigma}$. Specifically, we consider extreme values truncation and extreme values randomization as introduced in Sections~\ref{Subsection:ExtremeValuesTruncation} and \ref{Subsection:ExtremeValuesRandomization}, respectively. Given the different nature of these two processing techniques, below we provide estimates for the Barron constant of (i) the conditional expectation under truncation and (ii) the log-likelihood ratio under randomization.

\subsection{Extreme Values Truncation}

Consider extreme values truncation with $B = [-r,r]$ for some $r>0$. As before, we let $p \coloneqq \mathbb{P}(Y=1)$ and $\tilde{f}_{\pm}^{\sigma}$ be the conditional density of $\widetilde{X}^{\sigma}$ given $Y = \pm1$. The conditional expectation of $Y$ given $\widetilde{X}^{\sigma}$ is equal to
\begin{align}
    \label{eq:EVTtildeeta} \tilde{\eta}^{\sigma}(x) &= \frac{p\tilde{f}_{+}^{\sigma}(x) - \bar{p}\tilde{f}_{-}^{\sigma}(x)}{p\tilde{f}_{+}^{\sigma}(x) + \bar{p}\tilde{f}_{-}^{\sigma}(x)}\mathbbm{1}_{\lvert x \rvert \leq r}\\
    &= \tanh\left(\frac{1}{2} \log\left(\frac{p\tilde{f}_{+}^{\sigma}(x)}{\bar{p}\tilde{f}_{-}^{\sigma}(x)}\right)\right) \mathbbm{1}_{\lvert x \rvert \leq r}.
\end{align}
As established in \eqref{eq:DensityTruncation},
\begin{equation}
\label{eq:EVTtildef}
    \tilde{f}_{\pm}^{\sigma}(x) = \frac{(f_{\pm} \ast K_{\sigma})(x)}{\mathbb{P}\left(\lvert X^{\sigma} \rvert \leq r \vert Y=\pm1 \right)} \mathbbm{1}_{\lvert x \rvert \leq r},
\end{equation}
where $f_{\pm}$ is the conditional density of $X$ given $Y = \pm 1$ and
\begin{equation}
    K_{\sigma}(x) = \frac{1}{\sqrt{2\pi\sigma^2}} e^{-x^2/2\sigma^2}, \quad \quad x\in\mathbb{R}.
\end{equation}
From \eqref{eq:EVTtildef}, we conclude that $x\mapsto\frac{p\tilde{f}_{+}^{\sigma}(x)}{\bar{p}\tilde{f}_{-}^{\sigma}(x)}$ is a non-negative smooth function over $[-r,r]$ and, as a result, $\tilde{\eta}^{\sigma}$ is a smooth function over the same domain as well. As pointed out by Barron \cite[Sec.~IX]{barron1993universal}, this implies that $\tilde{\eta}^{\sigma}$ belongs to $\Gamma_{B}$ and, in particular, $C_{\tilde{\eta}^{\sigma}}$ is finite. Our goal is to find a tractable, yet useful, upper bound for $C_{\tilde{\eta}^{\sigma}}$.

As discussed in \eqref{eq:BarronExtension1}, a first step in order to find an upper bound for the Barron constant of $\tilde{\eta}^{\sigma}$ is to find a function, say $\eta^{\sigma}$, such that $\eta^{\sigma}$ is defined over $\mathbb{R}$ and $\tilde{\eta}^{\sigma} = \eta^{\sigma}\vert_{B}$. In this situation, we have that
\begin{equation}
\label{eq:BarronConstantEtaTilde}
    C_{\tilde{\eta}^{\sigma}} \leq \frac{r}{\sqrt{2\pi}} \int_{\mathbb{R}} \lvert\omega\rvert \lvert\widehat{\eta^{\sigma}}(\omega)\rvert \mathrm{d}\omega \eqqcolon r C_{\eta^{\sigma}}.
\end{equation}
Motivated by \eqref{eq:EVTtildeeta} and \eqref{eq:EVTtildef}, we define $\eta^{\sigma}:\mathbb{R}\to\mathbb{R}$ by
\begin{equation}
\label{eq:DefEtaSigma}
    \eta^{\sigma}(x) = \frac{\lambda_{+} f_{+}^{\sigma}(x) - \lambda_{-} f_{-}^{\sigma}(x)}{\lambda_{+} f_{+}^{\sigma}(x) + \lambda_{-} f_{-}^{\sigma}(x)},
\end{equation}
where $f_{\pm}^{\sigma} = f_{\pm} \ast K_{\sigma}$ and
\begin{equation}
    \lambda_{\pm} = \frac{\frac{1}{2} \pm \left(p-\frac{1}{2}\right)}{\mathbb{P}(\lvert X^{\sigma} \rvert \leq r \vert Y = \pm1)}.
\end{equation}
Note that, by large deviations arguments, $\lambda_{\pm}$ can be estimated with relatively high precision as it only depends on the probabilities of the events $\{Y=\pm1\}$ and $\{\lvert X^{\sigma} \rvert \leq r,Y=\pm1\}$. Furthermore, it can be shown that $\displaystyle \lim_{\sigma\to\infty} \tfrac{\lambda_{+}}{\lambda_{-}} = \tfrac{p}{\bar{p}}$, making the estimation of $\lambda_{\pm}$ unnecessary for large $\sigma$.

To gain some intuition about the behavior of the Barron constant as a function of $\sigma$, in the next proposition we compute $C_{\eta^{\sigma}}$ in a simple case.

\begin{proposition}
\label{Prop:BarronConstantGMM}
If $Y \sim \mathrm{Unif}(\{\pm1\})$ and $X = Y$, then,
\begin{equation}
    C_{\eta^{\sigma}} = \frac{1}{\sigma^{2}}, \quad \quad \sigma>0.
\end{equation}
\end{proposition}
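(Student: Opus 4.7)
The plan is to exploit the symmetry of the setup to write down $\eta^{\sigma}$ in closed form, reduce $C_{\eta^{\sigma}}$ to the Barron constant of a fixed univariate function by a scaling argument, and then evaluate that fixed integral via Fourier inversion after establishing the positivity of the relevant Fourier transform.

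First I would compute $\eta^{\sigma}$ explicitly. Because $X = Y \in \{\pm 1\}$, the conditional densities satisfy $f_{\pm}^{\sigma}(x) = K_{\sigma}(x\mp 1)$. Since $Y\sim\mathrm{Unif}(\{\pm 1\})$ and the problem is invariant under $x\mapsto -x$, the probabilities $\mathbb{P}(|X^{\sigma}|\leq r\mid Y=\pm 1)$ coincide and $\lambda_{+}=\lambda_{-}$ in \eqref{eq:DefEtaSigma}. Plugging in the Gaussian densities and using $\tfrac{a-b}{a+b} = \tanh\!\left(\tfrac{1}{2}\log(a/b)\right)$ with $\log(f_{+}^{\sigma}(x)/f_{-}^{\sigma}(x)) = 2x/\sigma^{2}$ gives
\begin{equation*}
\eta^{\sigma}(x) \;=\; \tanh\!\left(\frac{x}{\sigma^{2}}\right), \qquad (\eta^{\sigma})'(x) \;=\; \frac{1}{\sigma^{2}}\operatorname{sech}^{2}\!\left(\frac{x}{\sigma^{2}}\right).
\end{equation*}

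Next, I would invoke the representation $C_{\eta^{\sigma}} = \tfrac{1}{\sqrt{2\pi}}\int_{\mathbb{R}}|\widehat{(\eta^{\sigma})'}(\omega)|\,\mathrm{d}\omega$ from \eqref{eq:DefChDifferentiable}. Setting $g(u)\coloneqq\operatorname{sech}^{2}(u)$, a change of variables shows $\widehat{(\eta^{\sigma})'}(\omega) = \widehat{g}(\sigma^{2}\omega)$, and hence
\begin{equation*}
C_{\eta^{\sigma}} \;=\; \frac{1}{\sigma^{2}}\cdot\frac{1}{\sqrt{2\pi}}\int_{\mathbb{R}}|\widehat{g}(\xi)|\,\mathrm{d}\xi.
\end{equation*}

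The key step is to show that the remaining integral equals $1$. For this I would evaluate $\widehat{g}$ by a contour integration on a rectangle of height $\pi$, exploiting $\operatorname{sech}^{2}(z+\mathrm{i}\pi) = \operatorname{sech}^{2}(z)$ and computing the residue of $g(z)e^{-\mathrm{i}\xi z}$ at the double pole $z = \mathrm{i}\pi/2$; this yields the classical identity
\begin{equation*}
\widehat{g}(\xi) \;=\; \frac{1}{\sqrt{2\pi}}\cdot\frac{\pi\xi}{\sinh(\pi\xi/2)},
\end{equation*}
which is manifestly non-negative on $\mathbb{R}$. Consequently $|\widehat{g}| = \widehat{g}$, and Fourier inversion applied to $g$ at $0$ gives $\int_{\mathbb{R}}\widehat{g}(\xi)\,\mathrm{d}\xi = \sqrt{2\pi}\,g(0) = \sqrt{2\pi}$. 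Substituting back yields $C_{\eta^{\sigma}} = 1/\sigma^{2}$.

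The main obstacle is the positivity of $\widehat{g}$: without it, one cannot simply drop the absolute value and invoke Fourier inversion, and would instead have to estimate an oscillatory integral directly. The contour computation (or equivalently, citing the tabulated transform of $\operatorname{sech}^{2}$) sidesteps this cleanly; everything else is routine algebra and a dilation change of variables.
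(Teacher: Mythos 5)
Your proposal is correct and follows essentially the same route as the paper: symmetry to get $\lambda_{+}=\lambda_{-}$, the closed form $\eta^{\sigma}(x)=\tanh(x/\sigma^{2})$, the classical Fourier transform of $\operatorname{sech}^{2}$ (via contour integration) whose positivity lets you drop the absolute value, and Fourier inversion at $0$. The only superficial difference is that you make the dilation change of variables explicit, whereas the paper applies inversion directly to $(\eta^{\sigma})'$ and reads off $\tfrac{1}{\sigma^{2}}\operatorname{sech}^{2}(0)$.
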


\begin{proof}
By symmetry, we have that $\lambda_{+} = \lambda_{-}$. Thus, by \eqref{eq:DefEtaSigma},
\begin{equation}
    \eta^{\sigma}(x) = \frac{f_{+}^{\sigma}(x) - f_{-}^{\sigma}(x)}{f_{+}^{\sigma}(x) + f_{-}^{\sigma}(x)}.
\end{equation}
A direct computation shows that
\begin{equation}
    f_{\pm}^{\sigma}(x) = \frac{1}{\sqrt{2\pi\sigma^{2}}} e^{-(x\mp 1)^{2}/2\sigma^{2}}.
\end{equation}
Therefore, for all $x\in\mathbb{R}$,
\begin{equation}
\label{eq:ExperimentEta}
    \eta^{\sigma}(x) = \tanh\left(\frac{x}{\sigma^{2}}\right).
\end{equation}
Observe that $\displaystyle (\eta^{\sigma})'(x) = \frac{1}{\sigma^{2}} \mathrm{sech}^{2}\left(\frac{x}{\sigma^{2}}\right)$. Using contour integration, it can be verified that
\begin{equation}
\label{eq:BarronConstantGMMFT}
    \widehat{\textnormal{sech}^2}(\omega) = \sqrt{\frac{\pi}{2}} \omega\,\textnormal{csch}\left(\frac{\pi}{2}\omega\right).
\end{equation}
In particular, $(\eta^{\sigma})',\widehat{(\eta^{\sigma})'}\in L^{1}(\mathbb{R})$ and, by \eqref{eq:DefChDifferentiable},
\begin{equation}
\label{eq:RepresentationExampleBarronConstant}
    C_{\eta^{\sigma}} = \frac{1}{\sqrt{2\pi}} \int_\mathbb{R} \lvert\widehat{(\eta^{\sigma})'}(\omega)\rvert \mathrm{d}\omega.
\end{equation}
By \eqref{eq:BarronConstantGMMFT}, we have that $\widehat{(\eta^{\sigma})'}(\omega)$ is non-negative for all $\omega\in\mathbb{R}$. Therefore, by the Fourier inversion theorem, \eqref{eq:RepresentationExampleBarronConstant} implies that $\displaystyle C_{\eta^{\sigma}} = \frac{1}{\sigma^{2}} \textnormal{sech}^{2}(0)$.
\end{proof}

The next theorem provides an upper bound for the Barron constant of $\eta^{\sigma}$, as defined in \eqref{eq:DefEtaSigma}, under minimal assumptions on the distribution of $X$.

\begin{theorem}
\label{Theorem:MainTheoremPostProcessing}
If $\mathrm{Supp}(f_{\pm}) \subset [-1,1]$, then, for every $\sigma>0$,
\begin{equation}
    C_{\eta^{\sigma}} \leq \frac{2\sqrt{2}}{e\sqrt{\pi}} + \frac{16\sqrt{2}M_{0}^{\sigma}}{\sqrt{\pi}\sigma^{4}} \left(1 + \frac{1}{2} \log\left(\frac{M^{\sigma}}{\sigma^{8}}\right)\right),
\end{equation}
where
\begin{align}
    \label{eq:DefMomentsMp} M_{\alpha}^{\sigma} &\coloneqq \int_{\mathbb{R}} \lvert x \rvert^{\alpha} \frac{\lambda_{+} f_{+}^{\sigma}(x)\lambda_{-}f_{-}^{\sigma}(x)}{(\lambda_{+} f_{+}^{\sigma}(x)+\lambda_{-}f_{-}^{\sigma}(x))^{2}} \mathrm{d} x,\\
    M^{\sigma} &\coloneqq M_{0}^{\sigma} (64M_{2}^{\sigma} + 176M_{1}^{\sigma} + (136 + 48\sigma^{2})M_{0}^{\sigma}).
\end{align}
Furthermore, if $\sqrt[4]{8eM_{0}^{\sigma}} \leq \sigma$, then
\begin{equation}
\label{eq:MainTheoremPostProcessingLargeSigma}
    C_{\eta^{\sigma}} \leq \frac{16\sqrt{2}M_{0}^{\sigma}}{\sqrt{\pi}\sigma^{4}} \left(1 + \frac{1}{2} \log\left(\frac{M_{2}^{\sigma}}{M_{0}^{\sigma}}+3\frac{M_{1}^{\sigma}}{M_{0}^{\sigma}} + 3+\sigma^{2}\right)\right).
\end{equation}
\end{theorem}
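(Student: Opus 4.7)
The plan is to apply Theorem~\ref{Theorem:CEtaL1NormsBound} to $h = \eta^{\sigma}$, which requires first verifying that $(\eta^{\sigma})'$, $(\eta^{\sigma})''$ and $(\eta^{\sigma})'''$ lie in $L^1(\mathbb{R})$ and vanish at infinity, and then producing explicit upper bounds on their $L^1$ norms. The regularity hypothesis is immediate: since $\mathrm{Supp}(f_\pm)\subset[-1,1]$, the convolutions $g_\pm=\lambda_\pm(f_\pm\ast K_\sigma)$ are $C^\infty$, strictly positive on $\mathbb{R}$, and rapidly decreasing. Consequently so is $\eta^{\sigma}$, and each of its derivatives inherits Schwartz-like decay from the Gaussian kernel.

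The heart of the proof is estimating $\|(\eta^\sigma)^{(k)}\|_1$ for $k=1,2,3$. The crucial algebraic identity, which is what makes the moments $M_\alpha^{\sigma}$ appear, is
\begin{equation*}
1-(\eta^{\sigma})^2 = \frac{4 g_+g_-}{(g_++g_-)^2},
\end{equation*}
so that $\int |x|^\alpha (1-(\eta^\sigma)^2)\,\mathrm{d}x = 4 M_\alpha^{\sigma}$. Every term appearing in the formulas for $\eta'$, $\eta''$, $\eta'''$ in Lemma~\ref{Lemma:DerivativesKappa} has the structure $g_+g_-/(g_++g_-)^2$ multiplied by products of ratios $g_\pm^{(j)}/g_\pm$, $j\geq 1$. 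I would bound the latter ratios using the Hermite representation $K_\sigma'(u)/K_\sigma(u)=-u/\sigma^2$, $K_\sigma''(u)/K_\sigma(u)=(u^2-\sigma^2)/\sigma^4$, and $K_\sigma'''(u)/K_\sigma(u)=u(3\sigma^2-u^2)/\sigma^6$. Combining these with $|x-z|\leq |x|+1$ for $z\in\mathrm{Supp}(f_\pm)$ yields $|g_\pm^{(j)}(x)|/g_\pm(x)\leq P_j(|x|+1,\sigma)/\sigma^{2j}$ for explicit polynomials $P_j$; note that $P_2$ and $P_3$ contain a $\sigma^2$ term, which is the source of the $(136+48\sigma^2)M_0^\sigma$ factor in $M^\sigma$. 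Integrating against $1-(\eta^\sigma)^2$ translates these pointwise bounds into estimates of the form
\begin{equation*}
\|(\eta^\sigma)^{(k)}\|_1 \leq \sigma^{-2k}\cdot\mathrm{polynomial}(M_0^\sigma,M_1^\sigma,M_2^\sigma,\sigma^2).
\end{equation*}
With careful bookkeeping, these specialize to $\|(\eta^\sigma)''\|_1 \leq 8M_0^\sigma/\sigma^4$ (after grouping so that only $M_0^\sigma$ appears in front), together with a bound on the ratio $\|(\eta^\sigma)'\|_1\|(\eta^\sigma)'''\|_1/\|(\eta^\sigma)''\|_1^2$ equal to $M^\sigma/\sigma^8$.

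To obtain the first displayed inequality, I would not use the optimal $\lambda_1=\|h''\|_1/\|h'\|_1$ in the Fourier splitting from the proof of Theorem~\ref{Theorem:CEtaL1NormsBound}, since that choice is invalid when $\|h''\|_1^2>\|h'\|_1\|h'''\|_1$, which can occur for small $\sigma$. Instead, I would pick $\lambda_1$ so that $\sqrt{2/\pi}\,\lambda_1\|(\eta^\sigma)'\|_1$ equals the constant $\tfrac{2\sqrt{2}}{e\sqrt{\pi}}$, and then optimize $\lambda_2=\|(\eta^\sigma)'''\|_1/\|(\eta^\sigma)''\|_1$ on the remaining two terms. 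This produces a bound that is uniformly valid in $\sigma$ and has exactly the stated additive form. For the simplified bound \eqref{eq:MainTheoremPostProcessingLargeSigma}, the hypothesis $\sqrt[4]{8eM_0^\sigma}\leq \sigma$ is equivalent to $M_0^\sigma/\sigma^4\leq 1/(8e)$, which guarantees that $\tfrac{2\sqrt{2}}{e\sqrt{\pi}}$ is absorbed into $\tfrac{16\sqrt{2}M_0^\sigma}{\sqrt{\pi}\sigma^4}$; the cleaner expression $M_2^\sigma/M_0^\sigma+3M_1^\sigma/M_0^\sigma+3+\sigma^2$ inside the logarithm is obtained by using the same assumption to eliminate redundant $M_0^\sigma$ factors from $M^\sigma/\sigma^8$.

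The principal technical obstacle is combinatorial rather than conceptual: faithfully tracking the constants produced by the three formulas of Lemma~\ref{Lemma:DerivativesKappa} together with the Hermite expansion of $K_\sigma^{(j)}$, in order to arrive at the specific coefficients $64,176,136,48$ in the definition of $M^\sigma$. Any conservative bookkeeping produces a bound of the same qualitative shape, but the non-trivial step is grouping terms so that only $M_0^\sigma$ multiplies the logarithmic factor; this is essential for the asymptotic rate $O(\sigma^{-4}\log\sigma)$ that \eqref{eq:MainTheoremPostProcessingLargeSigma} conveys.
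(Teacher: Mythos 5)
The main gap is in how you propose to bound $\lVert(\eta^{\sigma})'\rVert_{1}$ and $\lVert(\eta^{\sigma})''\rVert_{1}$. Your plan is to write each term of Lemma~\ref{Lemma:DerivativesKappa} as $\tfrac{g_{+}g_{-}}{(g_{+}+g_{-})^{2}}$ times products of ratios $g_{\pm}^{(j)}/g_{\pm}$, and to control those ratios pointwise via $|g_{\pm}^{(j)}(x)|/g_{\pm}(x)\leq P_{j}(|x|+1,\sigma)/\sigma^{2j}$. Applied to $\eta'=2(g_{+}'g_{-}-g_{+}g_{-}')/(g_{+}+g_{-})^{2}$ this gives, via the triangle inequality, $|\eta'(x)|\leq\tfrac{4(|x|+1)}{\sigma^{2}}\cdot\tfrac{g_{+}g_{-}}{(g_{+}+g_{-})^{2}}$, hence $\lVert\eta'\rVert_{1}\leq 4(M_{1}^{\sigma}+M_{0}^{\sigma})/\sigma^{2}$, and similarly $\lVert\eta''\rVert_{1}$ picks up $M_{2}^{\sigma}$ and $M_{1}^{\sigma}$ contributions. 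But the theorem's prefactor $\tfrac{16\sqrt{2}M_{0}^{\sigma}}{\sqrt{\pi}\sigma^{4}}$ requires exactly $\lVert\eta''\rVert_{1}\leq 8M_{0}^{\sigma}/\sigma^{4}$ with no $M_{1}^{\sigma},M_{2}^{\sigma}$ terms; since Propositions~\ref{Proposition:MpSeparated}--\ref{Proposition:MpOverlapping} show $M_{\alpha}^{\sigma}=O(\sigma^{2(1+\alpha)})$, leaving $M_{1}^{\sigma}$ or $M_{2}^{\sigma}$ in the prefactor would destroy the $O(\sigma^{-4}\log\sigma)$ rate that is the point of the result. You flag this as ``the non-trivial step,'' but no amount of grouping fixes it once the triangle inequality has been applied term by term. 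The paper's actual argument (Appendix~\ref{Appendix:MainTheoremPostProcessing}, Lemmas~\ref{Lemma:KeyUpperBound1}--\ref{Lemma:KeyUpperBound3}) keeps the numerators \emph{intact}, represents them as integrals $\int Q(s;x)K_{\delta_{1},\ldots,\delta_{n}}(s;x)\,\mathrm{d}s$ over $[-1,1]^{n}$, and computes the polynomial $Q$ explicitly; for $\eta'$ all $x$-dependence cancels ($Q=2(s_{1}-s_{2})/\sigma^{2}$), and for $\eta''$ it cancels after an AM--GM step, leaving only a constant. That algebraic cancellation is what makes the coefficient of $M_{0}^{\sigma}$ dimension-free in $x$, and it is the missing ingredient in your proposal.

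Two secondary remarks. First, your re-tuning of $\lambda_{1}$ so that the first term of \eqref{eq:CEtaLemmaFinalEq} equals $\tfrac{2\sqrt{2}}{e\sqrt{\pi}}$ is a legitimate way to produce an additive constant, but it leaves a factor $1/\lVert\eta''\rVert_{1}$ inside the logarithm, which would require a \emph{lower} bound on $\lVert\eta''\rVert_{1}$ to be useful. The paper instead applies Theorem~\ref{Theorem:CEtaL1NormsBound} verbatim, splits off the term $-\tfrac{2\sqrt{2}}{\sqrt{\pi}}\lVert\eta''\rVert_{1}\log\lVert\eta''\rVert_{1}$, and bounds it by $\tfrac{2\sqrt{2}}{e\sqrt{\pi}}$ via $-z\log z\leq 1/e$; the hypothesis $\sqrt[4]{8eM_{0}^{\sigma}}\leq\sigma$ is then used with the monotonicity of $z\mapsto -z\log z$ on $[0,1/e]$ to absorb this term, which only needs upper bounds on $\lVert\eta''\rVert_{1}$. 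Second, your identity $1-(\eta^{\sigma})^{2}=4g_{+}g_{-}/(g_{+}+g_{-})^{2}$ is correct and is indeed the reason the $M_{\alpha}^{\sigma}$ appear, so the framing of your argument is sound; the issue is entirely in the pointwise bounds for the lower-order derivatives.
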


The proof of the previous theorem, which can be found in Appendix~\ref{Appendix:MainTheoremPostProcessing}, relies on Theorem~\ref{Theorem:CEtaL1NormsBound} and careful estimates of the $L^{1}$-norms of the derivatives of $\eta^{\sigma}$. Specifically, we exploit the cancellations that occur between the terms in the numerators of \eqref{eq:Kappa0} -- \eqref{eq:Kappa2}.

Note that the bounds in the previous theorem only depend on the moment-like quantities $M_{\alpha}^{\sigma}$, as defined in \eqref{eq:DefMomentsMp}. As we show below, in some canonical situations $M_{\alpha}^{\sigma} = O(\sigma^{2(1+\alpha)})$ as $\sigma\to\infty$. Therefore, in the large noise regime ($\sigma\gg1$),
\begin{equation}
\label{eq:CEtaLargeNoiseRegime}
    C_{\eta^{\sigma}} \leq O\left(\frac{\log(\sigma)}{\sigma^{2}}\right).
\end{equation}
In view of Proposition~\ref{Prop:BarronConstantGMM}, we conclude that the previous bound is order optimal up to logarithmic factors. Below we also show that in some situations $M_{\alpha}^{\sigma} = O(1)$ as $\sigma\to0^{+}$. Therefore, in the small noise regime ($\sigma\ll1$),
\begin{equation}
\label{eq:CEtaSmallNoiseRegime}
    C_{\eta^{\sigma}} \leq O\left(\frac{\log(1/\sigma)}{\sigma^{4}}\right).
\end{equation}
Although the order optimality of this bound is unclear, it is by no means trivial. Observe that, as $\sigma\to0^{+}$, $\eta^{\sigma}$ converges pointwise to $\eta$ which in principle might have an unbounded Barron constant. Thus, \eqref{eq:CEtaSmallNoiseRegime} shows that even if $C_{\eta^{\sigma}}$ diverges to infinity as $\sigma\to0^{+}$, it does it polynomially in $1/\sigma$.

We end this section providing an upper bound for the moment-like quantities $M_{\alpha}^{\sigma}$ under different structural properties of the support of $f_{\pm}$. In the next proposition, we do so in the case where the supports of $f_{+}$ and $f_{-}$ are well-separated by some margin.

\begin{proposition}
\label{Proposition:MpSeparated}
Let $M_{\alpha}^{\sigma}$ be the quantities defined in \eqref{eq:DefMomentsMp}. If there exist $\gamma\in(0,1)$ such that $\mathrm{Supp}(f_{+}) \subset [\gamma,1]$ and $\mathrm{Supp}(f_{-}) \subset [-1,-\gamma]$, then, for every $\sigma>0$,
\begin{align}
    M_{0}^{\sigma} &\leq 2 + \frac{\sigma^{2}}{2\gamma} \frac{\lambda_{+}^{2}+\lambda_{-}^{2}}{\lambda_{+}\lambda_{-}} e^{-2\gamma/\sigma^{2}},\\
    M_{1}^{\sigma} &\leq 2 + \left(\frac{\sigma^{4}}{4\gamma^{2}} + \frac{\sigma^{2}}{2\gamma}\right) \frac{\lambda_{+}^{2}+\lambda_{-}^{2}}{\lambda_{+}\lambda_{-}} e^{-2\gamma/\sigma^{2}},\\
    M_{2}^{\sigma} &\leq 2 + \left(\frac{\sigma^{6}}{4\gamma^{3}} + \frac{\sigma^{4}}{2\gamma^{2}} + \frac{\sigma^{2}}{2\gamma}\right) \frac{\lambda_{+}^{2}+\lambda_{-}^{2}}{\lambda_{+}\lambda_{-}} e^{-2\gamma/\sigma^{2}}.
\end{align}
In particular, $M_{\alpha}^{\sigma} = O(\sigma^{2(1+\alpha)})$ as $\sigma\to\infty$ and $M_{\alpha}^{\sigma} = O(1)$ as $\sigma\to0^{+}$.
\end{proposition}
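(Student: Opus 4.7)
The plan is to split $M_{\alpha}^{\sigma}$ into an integral over $[-1,1]$ and its complement, bounding each piece with a different pointwise inequality for the integrand $\frac{ab}{(a+b)^{2}}$ where $a \coloneqq \lambda_{+} f_{+}^{\sigma}(x)$ and $b \coloneqq \lambda_{-} f_{-}^{\sigma}(x)$.

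On the interior $[-1,1]$, the crude bounds $|x|^{\alpha}\leq 1$ and $\frac{ab}{(a+b)^{2}}\leq 1$ immediately give
\begin{equation*}
\int_{-1}^{1} |x|^{\alpha} \frac{ab}{(a+b)^{2}} \mathrm{d}x \leq 2,
\end{equation*}
which accounts for the additive constant $2$ appearing uniformly in all three bounds of the proposition. For the tails, I will exploit the universally valid inequality $\frac{ab}{(a+b)^{2}} \leq \min(a/b,\,b/a)$, which follows from $(a+b)^{2}\geq \max(a,b)^{2}$ and requires no assumption on which of $a,b$ dominates. This lets me work on each tail with the appropriate Gaussian ratio without casework inside the integral.

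The support assumption produces clean pointwise estimates on $f_{\pm}^{\sigma}$ in the tails. For $x>1$: since every $y\in\mathrm{Supp}(f_{+})\subset[\gamma,1]$ satisfies $x-y\leq x-\gamma$ and $K_{\sigma}$ is decreasing in $|x-y|$, I get $f_{+}^{\sigma}(x)\geq K_{\sigma}(x-\gamma)$; similarly, every $y\in\mathrm{Supp}(f_{-})\subset[-1,-\gamma]$ satisfies $x-y\geq x+\gamma$, giving $f_{-}^{\sigma}(x)\leq K_{\sigma}(x+\gamma)$. Therefore
\begin{equation*}
\frac{ab}{(a+b)^{2}} \leq \frac{b}{a} \leq \frac{\lambda_{-}}{\lambda_{+}} \frac{K_{\sigma}(x+\gamma)}{K_{\sigma}(x-\gamma)} = \frac{\lambda_{-}}{\lambda_{+}} e^{-2\gamma x/\sigma^{2}}, \quad x>1,
\end{equation*}
and integrating $|x|^{\alpha}$ times this bound yields $(\lambda_{-}/\lambda_{+})\int_{1}^{\infty} x^{\alpha} e^{-2\gamma x/\sigma^{2}}\,\mathrm{d}x$. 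The analogous argument on $(-\infty,-1]$, using $f_{-}^{\sigma}(x)\geq K_{\sigma}(|x|-\gamma)$ and $f_{+}^{\sigma}(x)\leq K_{\sigma}(|x|+\gamma)$, produces the symmetric contribution $(\lambda_{+}/\lambda_{-})\int_{1}^{\infty} u^{\alpha} e^{-2\gamma u/\sigma^{2}}\,\mathrm{d}u$. Summing the two tail contributions gives the prefactor $\lambda_{+}/\lambda_{-} + \lambda_{-}/\lambda_{+} = (\lambda_{+}^{2}+\lambda_{-}^{2})/(\lambda_{+}\lambda_{-})$.

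The remaining work is to compute $\int_{1}^{\infty} x^{\alpha} e^{-\beta x}\,\mathrm{d}x$ for $\beta=2\gamma/\sigma^{2}$ and $\alpha\in\{0,1,2\}$ by repeated integration by parts: this gives $e^{-\beta}/\beta$, $e^{-\beta}(1/\beta+1/\beta^{2})$, and $e^{-\beta}(1/\beta+2/\beta^{2}+2/\beta^{3})$ respectively, which translate to exactly the polynomials $\sigma^{2}/(2\gamma)$, $\sigma^{2}/(2\gamma)+\sigma^{4}/(4\gamma^{2})$, and $\sigma^{2}/(2\gamma)+\sigma^{4}/(2\gamma^{2})+\sigma^{6}/(4\gamma^{3})$ multiplying $e^{-2\gamma/\sigma^{2}}$ in the statement. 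There is no substantive obstacle here: the entire argument hinges on spotting the always-valid bound $\frac{ab}{(a+b)^{2}}\leq b/a$ (so that no case split on the sign of $a-b$ is needed), after which everything reduces to a routine Gaussian integration by parts. The asymptotic claims $M_{\alpha}^{\sigma}=O(\sigma^{2(1+\alpha)})$ as $\sigma\to\infty$ and $M_{\alpha}^{\sigma}=O(1)$ as $\sigma\to 0^{+}$ follow immediately by inspecting each term in the derived bound.
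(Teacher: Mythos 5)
Your proof is correct and follows essentially the same route as the paper's: the same three-piece split at $\pm 1$, the same trivial bound of $1$ on the middle piece giving the additive constant $2$, the same reduction of the tail integrand to a Gaussian ratio $K_{\sigma}(x+\gamma)/K_{\sigma}(x-\gamma)=e^{-2\gamma x/\sigma^{2}}$ via the support assumption, and the same elementary integrals of $x^{\alpha}e^{-\beta x}$. The only cosmetic difference is that you state the always-valid inequality $\frac{ab}{(a+b)^{2}}\leq b/a$ directly, whereas the paper factors the integrand as $\frac{a}{a+b}\cdot\frac{b}{a+b}$, bounds one factor by $1$ and the other by $b/a$; these are the same estimate.
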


\begin{proof}
See Appendix~\ref{Appendix:ProofMPSeparated}.
\end{proof}

In the next proposition we provide upper bounds for $M_{\alpha}^{\sigma}$ in the case where the supports of $f_{\pm}$ overlap but extreme values determine the value of $Y$, i.e., there exists $\gamma_{0}$ such that if $X > \gamma_{0}$ then $Y = 1$, and if $X < -\gamma_{0}$ then $Y = -1$.

\begin{proposition}
\label{Proposition:MpOverlapping}
Let $M_{\alpha}^{\sigma}$ be the quantities defined in \eqref{eq:DefMomentsMp}. If there exist $\gamma_{0}\in(0,1)$ such that
\begin{align}
    [\gamma_{0},1] &\subset \mathrm{Supp}(f_{+}) \subset (-\gamma_{0},1],\\
    [-1,-\gamma_{0}] &\subset \mathrm{Supp}(f_{-}) \subset [-1,\gamma_{0}),
\end{align}
then, for every $\gamma\in(\gamma_{0},1)$ and $\sigma>0$,
\begin{align}
    M_{0}^{\sigma} &\leq 2 + \frac{\sigma^{2}}{\gamma-\gamma_{0}} \Lambda,\\
    M_{1}^{\sigma} &\leq 2 + \left(\frac{\sigma^{4}}{(\gamma-\gamma_{0})^{2}} + \frac{\sigma^{2}}{\gamma-\gamma_{0}}\right) \Lambda,\\
    M_{2}^{\sigma} &\leq 2 + \left(\frac{2\sigma^{6}}{(\gamma-\gamma_{0})^{3}} + \frac{2\sigma^{4}}{(\gamma-\gamma_{0})^{2}} + \frac{\sigma^{2}}{\gamma-\gamma_{0}}\right) \Lambda,
\end{align}
where $\displaystyle \Lambda = \frac{\delta_{+}\lambda_{+}^{2}+\delta_{-}\lambda_{-}^{2}}{\delta_{+}\lambda_{+}\delta_{-}\lambda_{-}}$,
\begin{equation}
    \delta_{+} = \int_{\gamma}^{1} f_{+}(s) \mathrm{d} s \quad \textnormal{and} \quad \delta_{-} = \int_{-1}^{-\gamma} f_{+}(s) \mathrm{d} s.
\end{equation}
In particular, $M_{\alpha}^{\sigma} = O(\sigma^{2(1+\alpha)})$ as $\sigma\to\infty$ and $M_{\alpha}^{\sigma} = O(1)$ as $\sigma\to0^{+}$.
\end{proposition}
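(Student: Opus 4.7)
The plan is to split the integral defining $M_\alpha^\sigma$ into three regions and bound the integrand separately on each. Let $\gamma' \coloneqq (1+\gamma)/2 \in (\gamma, 1)$ act as a ``buffer'' between the overlap region and the tails. On the central region $|x| \le \gamma'$, the universal AM-GM bound $\frac{ab}{(a+b)^2} \le \frac{1}{4}$ suffices and yields a contribution at most $\int_{-\gamma'}^{\gamma'} |x|^\alpha/4 \, \mathrm{d}x = (\gamma')^{\alpha+1}/(2(\alpha+1)) \le 1/2$, which is absorbed into the constant $2$ appearing in each stated bound.

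For the right tail $x \ge \gamma'$, I would use $\frac{ab}{(a+b)^2} \le b/a$ with $a = \lambda_+ f_+^\sigma(x)$ and $b = \lambda_- f_-^\sigma(x)$, and then control the pointwise ratio $f_-^\sigma(x)/f_+^\sigma(x)$. Since $\mathrm{Supp}(f_-) \subset [-1, \gamma_0)$, every $s$ in the support satisfies $x - s \ge x - \gamma_0 > 0$, whence $K_\sigma(x-s) \le K_\sigma(x-\gamma_0)$ and therefore $f_-^\sigma(x) \le K_\sigma(x-\gamma_0)$. For the lower bound, the key geometric observation is that the choice $x \ge \gamma'$ ensures $2x - \gamma \ge 1$, so $|x - s| \le x - \gamma$ holds uniformly on $s \in [\gamma, 1]$; consequently $K_\sigma(x-s) \ge K_\sigma(x-\gamma)$ and $f_+^\sigma(x) \ge \delta_+ K_\sigma(x-\gamma)$. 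Combining yields
\begin{equation*}
\frac{\lambda_- f_-^\sigma(x)}{\lambda_+ f_+^\sigma(x)} \le \frac{\lambda_-}{\lambda_+ \delta_+} \exp\!\left(-\frac{(\gamma-\gamma_0)(2x-\gamma-\gamma_0)}{2\sigma^2}\right).
\end{equation*}
The substitution $u = (\gamma-\gamma_0)(2x - \gamma - \gamma_0)/(2\sigma^2)$, so that $x = \bar\gamma + u\sigma^2/(\gamma-\gamma_0)$ with $\bar\gamma = (\gamma+\gamma_0)/2$, reduces the right-tail integral to
\begin{equation*}
\frac{\lambda_-}{\lambda_+ \delta_+} \cdot \frac{\sigma^2}{\gamma-\gamma_0} \int_{u_0}^{\infty} \left(\bar\gamma + \frac{u\sigma^2}{\gamma-\gamma_0}\right)^{\alpha} e^{-u}\, \mathrm{d}u,
\end{equation*}
where $u_0 = (\gamma-\gamma_0)(1-\gamma_0)/(2\sigma^2)$. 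Expanding the polynomial via the binomial theorem and using $\int_0^\infty u^k e^{-u}\, \mathrm{d}u = k!$, together with the two identities $\bar\gamma + (1-\gamma_0)/2 = \gamma' \le 1$ and $2\bar\gamma + (1-\gamma_0) = 1+\gamma \le 2$, produces a polynomial in $\sigma^2/(\gamma-\gamma_0)$ with coefficients $1$, $(1, 1)$, and $(1, 2, 2)$ for $\alpha = 0, 1, 2$ respectively.

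A mirror argument for the left tail $x \le -\gamma'$ (exchanging the roles of $f_+$ and $f_-$ and flipping signs) contributes the analogous expression with $\frac{\lambda_+}{\lambda_- \delta_-}$ in place of $\frac{\lambda_-}{\lambda_+ \delta_+}$; summing the two tails yields the factor $\Lambda = \frac{\lambda_-}{\lambda_+\delta_+} + \frac{\lambda_+}{\lambda_-\delta_-} = \frac{\delta_+\lambda_+^2 + \delta_-\lambda_-^2}{\delta_+\delta_-\lambda_+\lambda_-}$, matching the proposition. The asymptotic claims then follow from inspecting the leading power of $\sigma$: it is $\sigma^{2(\alpha+1)}/(\gamma-\gamma_0)^{\alpha+1}$ as $\sigma\to\infty$, while every $\sigma$-powered term vanishes as $\sigma\to 0^+$, leaving only the constant $2$. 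The main technical obstacle is the $\alpha = 2$ bookkeeping, where the Gamma integrals generate cross-terms of the form $u_0^j \cdot (\sigma^2/(\gamma-\gamma_0))^k$ that must collapse into the stated geometric form $\sigma^2 + 2\sigma^4 + 2\sigma^6$ (in units of $\gamma-\gamma_0$); this collapse is neither automatic nor obvious, and works precisely because the product $u_0 \cdot \sigma^2/(\gamma-\gamma_0) = (1-\gamma_0)/2$ is a $\sigma$-independent constant bounded by $1$.
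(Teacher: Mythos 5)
Your proof is correct and follows the same high-level strategy as the paper: split the integral defining $M_\alpha^\sigma$ into a bounded central region where the integrand $g_+g_-/(g_++g_-)^2$ is trivially bounded, and two tails where the ratio $g_\mp^\sigma/g_\pm^\sigma$ decays exponentially via a pointwise comparison of Gaussian kernels at shifted arguments. The differences are cosmetic — you cut at $\pm\gamma'=\pm(1+\gamma)/2$ rather than at $\pm1$, use the AM--GM bound $1/4$ rather than $1$ on the center, and retain the exact quadratic exponent instead of the paper's looser $e^{-(\gamma-\gamma_0)(\lvert x\rvert-1)/\sigma^2}$ — giving slightly tighter intermediate constants but the same tail integrals (incomplete gamma functions in disguise) and the same final form.
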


\begin{proof}
See Appendix~\ref{Appendix:ProofMPOverlapping}.
\end{proof}

\subsection{Extreme Values Randomization}

Consider extreme values randomization with $B = [-r,r]$ for some $r>0$. As before, we let $p \coloneqq \mathbb{P}(Y=1)$ and $\tilde{f}_{\pm}^{\sigma}$ be the conditional density of $\widetilde{X}^{\sigma}$ given $Y = \pm1$. Recall the definition of the log-likelihood function
\begin{align}
    \label{eq:EVRTildeTheta} \tilde{\theta}^{\sigma}(x) = \frac{1}{2} \log\left(\frac{p\tilde{f}_{+}^{\sigma}(x)}{\bar{p}\tilde{f}_{-}^{\sigma}(x)}\right) \mathbbm{1}_{\lvert x \rvert \leq r}.
\end{align}
As established in \eqref{eq:DensityRandomization},
\begin{equation}
\label{eq:EVRTildeF}
    \tilde{f}_{\pm}^{\sigma}(x) = \left[(f_{\pm} \ast K_{\sigma})(x) + \frac{\mathbb{P}\left(\lvert X^{\sigma}\rvert > r \vert Y=\pm1 \right)}{2r}\right] \mathbbm{1}_{\lvert x \rvert \leq r},
\end{equation}
where $f_{\pm}$ is the conditional density of $X$ given $Y = \pm 1$ and
\begin{equation}
    K_{\sigma}(x) = \frac{1}{\sqrt{2\pi\sigma^2}} e^{-x^2/2\sigma^2}, \quad \quad x\in\mathbb{R}.
\end{equation}
From \eqref{eq:EVRTildeF}, we conclude that $x\mapsto\frac{p\tilde{f}_{+}^{\sigma}(x)}{\bar{p}\tilde{f}_{-}^{\sigma}(x)}$ is a positive smooth function over $[-r,r]$ and, as a result, $\tilde{\theta}^{\sigma}$ is a smooth function over the same domain as well. As pointed out by Barron \cite[Sec.~IX]{barron1993universal}, this implies that $\tilde{\theta}^{\sigma}$ belongs to $\Gamma_{B}$ and, in particular, $C_{\tilde{\theta}^{\sigma}}$ is finite. Our goal is to find a tractable, yet useful, upper bound for $C_{\tilde{\theta}^{\sigma}}$.

As discussed in \eqref{eq:BarronExtension1}, a first step in order to find an upper bound for the Barron constant of $\tilde{\theta}^{\sigma}$ is to find a function, say $\theta^{\sigma}$, such that $\theta^{\sigma}$ is defined over $\mathbb{R}$ and $\tilde{\theta}^{\sigma} = \theta^{\sigma}\vert_{B}$. In this situation, we have that
\begin{equation}
\label{eq:BarronConstantThetaTilde}
    C_{\tilde{\theta}^{\sigma}} \leq \frac{r}{\sqrt{2\pi}} \int_{\mathbb{R}} \lvert\omega\rvert \lvert\widehat{\theta^{\sigma}}(\omega)\rvert \mathrm{d}\omega \eqqcolon r C_{\theta^{\sigma}}.
\end{equation}
Motivated by \eqref{eq:EVRTildeTheta} and \eqref{eq:EVRTildeF}, we define $\theta^{\sigma}:\mathbb{R}\to\mathbb{R}$ by
\begin{equation}
\label{eq:DefThetaSigma}
    \theta^{\sigma}(x) = \frac{1}{2} \log\left(\frac{pf_{+}^{\sigma}(x) + \lambda_{+}}{\bar{p}f_{-}^{\sigma}(x) + \lambda_{-}}\right),
\end{equation}
where $f_{\pm}^{\sigma} = f_{\pm} \ast K_{\sigma}$ and
\begin{equation}
\label{eq:DefLambdaTheta}
    \lambda_{\pm} = \frac{\mathbb{P}\left(\lvert X^{\sigma}\rvert > r \vert Y=\pm1 \right)}{2r}.
\end{equation}
Note that, by large deviations arguments, $\lambda_{\pm}$ can be estimated with relatively high precision as it only depends on the proba\-bilities of the events $\{Y=\pm1\}$ and $\{\lvert X^{\sigma} \rvert > r,Y=\pm1\}$. Furthermore, it can be shown that $\displaystyle \lim_{\sigma\to\infty} \lambda_{\pm} = \tfrac{1}{2r}$, making the estimation of $\lambda_{\pm}$ unnecessary for large $\sigma$.

The next theorem provides an upper bound for the Barron constant of $\theta^{\sigma}$, as defined in \eqref{eq:DefThetaSigma}, under minimal assumptions on the distribution of $X$.

\begin{theorem}
\label{Theorem:MainTheoremPostProcessingTheta}
If $f_{\pm}$ is a probability density function, then, for every $\sigma>0$,
\begin{equation}
\label{eq:MainTheoremPostProcessingTheta}
    C_{\theta^{\sigma}} \leq \frac{2\sqrt{2}}{e\sqrt{\pi}} + \frac{2\sqrt{2}N_{2}^{\sigma}}{\sqrt{\pi}} \left(1 + \frac{1}{2}\log\left(N_{1}^{\sigma} N_{3}^{\sigma} \right)\right),
\end{equation}
where
\begin{align}
    \label{eq:DefN1Sigma} N_{1}^{\sigma} &\coloneqq \frac{\Lambda_{1}}{\sqrt{2\pi}\sigma},\\
    \label{eq:DefN2Sigma} N_{2}^{\sigma} &\coloneqq \frac{\Lambda_{1}}{\sigma^{2}} + \frac{\Lambda_{2}}{8\sqrt{\pi}\sigma^{3}},\\
    \label{eq:DefN3Sigma} N_{3}^{\sigma} &\coloneqq \frac{5\Lambda_{1}}{\sqrt{2\pi}\sigma^{3}} + \frac{3\sqrt{3}\Lambda_{2}}{8\sqrt{2\pi}\sigma^{4}} + \frac{\sqrt{2}\Lambda_{3}}{9\sqrt{\pi^{3}}\sigma^{5}},
\end{align}
with $\displaystyle \Lambda_{\alpha} = \frac{p^{\alpha}}{\lambda_{+}^{\alpha}} + \frac{\bar{p}^{\alpha}}{\lambda_{-}^{\alpha}}$. Moreover, if $N_{2}^{\sigma} \leq 1/e$, then
\begin{equation}
\label{eq:MainTheoremPostProcessingThetaLargeSigma}
    C_{\theta^{\sigma}} \leq \frac{2\sqrt{2}N_{2}^{\sigma}}{\sqrt{\pi}} \left(1 + \frac{1}{2} \log\left(\frac{N_{1}^{\sigma}N_{3}^{\sigma}}{(N_{2}^{\sigma})^{2}}\right)\right).
\end{equation}
\end{theorem}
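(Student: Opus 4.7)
The plan is to invoke Theorem~\ref{Theorem:CEtaL1NormsBound} with $h = \theta^{\sigma}$. This requires me to (i) verify the regularity and integrability hypotheses on $\theta^{\sigma}$ and (ii) supply explicit upper bounds for $\|(\theta^{\sigma})'\|_{1}$, $\|(\theta^{\sigma})''\|_{1}$, and $\|(\theta^{\sigma})'''\|_{1}$. Writing $g_{+} \coloneqq p f_{+}^{\sigma} + \lambda_{+}$ and $g_{-} \coloneqq \bar{p} f_{-}^{\sigma} + \lambda_{-}$, the decisive structural observation is that $g_{\pm}(x) \geq \lambda_{\pm} > 0$ uniformly in $x$, so the denominators appearing in the expressions of Lemma~\ref{Lemma:DerivativesTheta} are bounded away from zero. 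Since each $f_{\pm}^{\sigma}$ inherits Schwartz-class decay from $K_{\sigma}$, all derivatives of $\theta^{\sigma}$ vanish at infinity, and their integrability will fall out of the $L^{1}$ estimates constructed in the next step.

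Second, I would bound the three $L^{1}$ norms by combining $g_{\pm} \geq \lambda_{\pm}$ with Young's convolution inequality applied to $f_{\pm}^{\sigma\,(j)} = f_{\pm} * K_{\sigma}^{(j)}$, which yields $\|g_{\pm}^{(j)}\|_{1} \leq p\|K_{\sigma}^{(j)}\|_{1}$ and $\|g_{\pm}^{(j)}\|_{\infty} \leq p\|K_{\sigma}^{(j)}\|_{\infty}$. The relevant kernel norms are $\|K_{\sigma}^{(j)}\|_{1} = \sigma^{-j}\,\mathbb{E}|H_{j}(Z)|$ and $\|K_{\sigma}^{(j)}\|_{\infty} = \sigma^{-j-1}\,\sup_{z}\phi(z)|H_{j}(z)|$, with $H_{j}$ the Hermite polynomials and $Z$ a standard Gaussian. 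Applying this template to the formulas of Lemma~\ref{Lemma:DerivativesTheta}: the identity $2(\theta^{\sigma})' = g_{+}'/g_{+} - g_{-}'/g_{-}$ directly produces $\|(\theta^{\sigma})'\|_{1} \leq N_{1}^{\sigma}$; the quadratic summands in $(\theta^{\sigma})''$ are controlled via $\int (g_{\pm}'/g_{\pm})^{2} \leq \|g_{\pm}'\|_{\infty}\|g_{\pm}'\|_{1}/\lambda_{\pm}^{2}$; and the mixed and cubic summands in $(\theta^{\sigma})'''$ succumb to $\int (g_{\pm}'/g_{\pm})^{3} \leq \|g_{\pm}'\|_{\infty}^{2}\|g_{\pm}'\|_{1}/\lambda_{\pm}^{3}$ together with analogous splittings for the $g'g''/g^{2}$ terms. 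Tracking powers of $\sigma$ and the constants $p/\lambda_{+}$, $\bar{p}/\lambda_{-}$ reproduces the definitions \eqref{eq:DefN1Sigma}--\eqref{eq:DefN3Sigma} together with the factors $\Lambda_{1},\Lambda_{2},\Lambda_{3}$.

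Third, I would substitute $\|(\theta^{\sigma})^{(i)}\|_{1} \leq N_{i}^{\sigma}$ into Theorem~\ref{Theorem:CEtaL1NormsBound} and expand the logarithm to obtain
\[
C_{\theta^{\sigma}} \leq \tfrac{2\sqrt{2}}{\sqrt{\pi}}\Bigl[\|(\theta^{\sigma})''\|_{1}\bigl(1 + \tfrac{1}{2}\log(N_{1}^{\sigma} N_{3}^{\sigma})\bigr) - \|(\theta^{\sigma})''\|_{1}\log\|(\theta^{\sigma})''\|_{1}\Bigr].
\]
For the unconditional bound \eqref{eq:MainTheoremPostProcessingTheta}, the universal inequality $-t\log t \leq 1/e$ (valid for $t>0$) absorbs the last summand into the constant $\tfrac{2\sqrt{2}}{e\sqrt{\pi}}$, while the first summand is non-decreasing in $\|(\theta^{\sigma})''\|_{1}$ (whenever the logarithmic coefficient is non-negative, an easily handled case distinction) and therefore is majorized by its value at $N_{2}^{\sigma}$. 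For the sharper estimate \eqref{eq:MainTheoremPostProcessingThetaLargeSigma}, the hypothesis $N_{2}^{\sigma} \leq 1/e$ allows me to exploit the monotonicity of $t\mapsto -t\log t$ on $(0,1/e]$, so that $-\|(\theta^{\sigma})''\|_{1}\log\|(\theta^{\sigma})''\|_{1} \leq -N_{2}^{\sigma}\log N_{2}^{\sigma}$, after which regrouping yields the tighter expression with $(N_{2}^{\sigma})^{2}$ in the denominator.

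The main technical obstacle will be step two for the third derivative: according to Lemma~\ref{Lemma:DerivativesTheta}, $(\theta^{\sigma})'''$ decomposes into six summands of up to cubic nonlinearity in $g_{\pm}'$, and one must combine the triangle inequality with several delicate $L^{1}$/$L^{\infty}$ splittings of the Gaussian kernel derivatives to recover the precise numerical constants $5/\sqrt{2\pi}$, $3\sqrt{3}/(8\sqrt{2\pi})$, and $\sqrt{2}/(9\sqrt{\pi^{3}})$ that appear in $N_{3}^{\sigma}$.
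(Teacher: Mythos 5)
Your high-level skeleton is the same as the paper's: apply Theorem~\ref{Theorem:CEtaL1NormsBound} to $\theta^{\sigma}$, use $g_{\pm}\geq\lambda_{\pm}$ to kill the denominators in Lemma~\ref{Lemma:DerivativesTheta}, bound the resulting $L^{1}$ norms, then handle the $-t\log t$ term via $-t\log t\leq 1/e$ (unconditional) or monotonicity on $(0,1/e]$ (conditional). However, there is a concrete gap in step two: the $L^{\infty}\times L^{1}$ splittings you propose do \emph{not} reproduce the stated constants $N_{2}^{\sigma}$, $N_{3}^{\sigma}$.

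The paper bounds the nonlinear terms via genuine $L^{p}$ norms of the Gaussian kernel derivatives, not $L^{\infty}\times L^{1}$ splits. Concretely, for the quadratic term it uses
\[
\int \left\lvert\frac{g_{\pm}'}{g_{\pm}}\right\rvert^{2} \leq \frac{\lVert g_{\pm}'\rVert_{2}^{2}}{\lambda_{\pm}^{2}}, \qquad
\lVert g_{\pm}'\rVert_{2} \leq p_{\pm}\lVert f_{\pm}\rVert_{1}\lVert K_{\sigma}'\rVert_{2},
\]
(Young's inequality with $r=r_{2}=2$, $r_{1}=1$) and then computes $\lVert K_{\sigma}'\rVert_{2}^{2}=\tfrac{1}{4\sqrt{\pi}\sigma^{3}}$ exactly; the cubic term uses $\lVert K_{\sigma}'\rVert_{3}^{3}=\tfrac{\sqrt{2}}{9\sqrt{\pi^{3}}\sigma^{5}}$, and the mixed term $g'g''/g^{2}$ uses H\"older $\lVert g'g''\rVert_{1}\leq\lVert g'\rVert_{2}\lVert g''\rVert_{2}$ together with $\lVert K_{\sigma}''\rVert_{2}=\tfrac{\sqrt{3}}{2\sqrt{2}\sqrt[4]{\pi}\sigma^{5/2}}$. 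Your proposal instead bounds $\int\lvert g_{\pm}'\rvert^{2}\leq\lVert g_{\pm}'\rVert_{\infty}\lVert g_{\pm}'\rVert_{1}$. This is a valid inequality but strictly weaker: with $\lVert K_{\sigma}'\rVert_{\infty}=\tfrac{e^{-1/2}}{\sqrt{2\pi}\sigma^{2}}$ one obtains the factor $e^{-1/2}/\pi\approx 0.193$ where the paper gets $1/(4\sqrt{\pi})\approx 0.141$. Carried through, your route would yield a coefficient strictly larger than the $1/(8\sqrt{\pi})$ appearing in $N_{2}^{\sigma}$, and similarly for $N_{3}^{\sigma}$, so the claim that "tracking powers of $\sigma$ and the constants reproduces the definitions" does not hold. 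To obtain the theorem as stated, you must compute the $L^{2}$ and $L^{3}$ norms of $K_{\sigma}'$ and the $L^{1}$, $L^{2}$ norms of $K_{\sigma}''$ directly (they are Gaussian moment computations), apply Young's convolution inequality in those $L^{p}$ spaces, and use H\"older for $\lVert g'g''\rVert_{1}$. The $L^{\infty}$ route only gives a coarser upper bound for $C_{\theta^{\sigma}}$, which is a valid but different theorem.
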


The proof of the previous theorem, which can be found in Appendix~\ref{Appendix:MainTheoremPostProcessingTheta}, relies on Theorem~\ref{Theorem:CEtaL1NormsBound} and careful estimates of the $L^{1}$-norms of the derivatives of $\theta^{\sigma}$.

Note that the previous theorem does not assume anything about $f_{\pm}$ apart from its existence. Furthermore, \eqref{eq:MainTheoremPostProcessingThetaLargeSigma} implies that, in the large noise regime ($\sigma \gg 1$),
\begin{equation}
\label{eq:CThetaLargeNoiseRegime}
    C_{\theta^{\sigma}} = O\left(\frac{1}{\sigma^{2}}\right).
\end{equation}
By \cite[Sec.~IX-14]{barron1993universal}, it can be verified that in the context of Proposition~\ref{Prop:BarronConstantGMM} with $r=\infty$, we have $C_{\theta^{\sigma}} = 1/\sigma^{2}$. Therefore, the previous bound is in fact order optimal. Similarly, \eqref{eq:MainTheoremPostProcessingThetaLargeSigma} implies that, in the small noise regime ($\sigma \ll 1$),
\begin{equation}
\label{eq:CThetaSmallNoiseRegime}
    C_{\theta^{\sigma}} = O\left(\frac{\log(1/\sigma)}{\sigma^{3}}\right).
\end{equation}
As with \eqref{eq:CEtaSmallNoiseRegime}, it is unclear if \eqref{eq:CThetaSmallNoiseRegime} is order optimal.

\section{Numerical Considerations}
\label{Section:NumericalConsiderations}

In this section we explore some numerical aspects of our lower bounds for the MMSE. Specifically, we evaluate the upper bounds for the Barron constant produced by Theorems~\ref{Theorem:MainTheoremPostProcessing} and \ref{Theorem:MainTheoremPostProcessingTheta} in a particular setting. Also, we study the effect of the values of $k$ and $n$ on the proposed estimator, and propose an optimization method to approximate the value of $\mathrm{mmse}_{k,n}^{\ast}(Y \vert X)$ efficiently. We finish this section with a numerical illustration of our lower bounds for the MMSE.

\subsection{Upper Bounds for the Barron Constant}
\label{Section:NumericalConsiderationsBarronConstant}

In this section we evaluate the upper bounds for the Barron constant derived in Theorems~\ref{Theorem:MainTheoremPostProcessing} and \ref{Theorem:MainTheoremPostProcessingTheta}. To this end, we consider the setting where $Y \sim \textnormal{Unif}(\{\pm1\})$, $X = Y$ and $r=2$.

Observe that under extreme value truncation, \eqref{eq:BarronConstantEtaTilde} and Proposition~\ref{Prop:BarronConstantGMM} imply that
\begin{equation}
\label{eq:EvaluationBarronConstantProp2}
    C_{\tilde{\eta}^{\sigma}} \leq \frac{2}{\sigma^{2}}.
\end{equation}
Since this bound depends on the exact computation performed in Proposition~\ref{Prop:BarronConstantGMM}, we use \eqref{eq:EvaluationBarronConstantProp2} as a benchmark for the upper bounds obtained using Theorems~\ref{Theorem:MainTheoremPostProcessing} and \ref{Theorem:MainTheoremPostProcessingTheta}.

Note that in the current setting the assumption of Proposition~\ref{Proposition:MpSeparated} is satisfied with $\gamma=1$. Also, by symmetry,
\begin{equation}
    \mathbb{P}(\lvert X^{\sigma} \rvert \leq r \vert Y = +1) = \mathbb{P}(\lvert X^{\sigma} \rvert \leq r \vert Y = -1),
\end{equation}
which implies that $\lambda_{+}=\lambda_{-}$. Thus, Proposition~\ref{Proposition:MpSeparated} leads to
\begin{align}
    M_{0}^{\sigma} &\leq 2 + \sigma^{2} e^{-2/\sigma^{2}},\\
    M_{1}^{\sigma} &\leq 2+\left(\frac{\sigma^{4}}{2}+\sigma^{2}\right) e^{-2/\sigma^{2}},\\
    M_{2}^{\sigma} &\leq 2\left(\frac{\sigma^{6}}{2} + \sigma^{4} + \sigma^{2}\right) e^{-2/\sigma^{2}}.
\end{align}
Finally, \eqref{eq:BarronConstantEtaTilde} and Theorem~\ref{Theorem:MainTheoremPostProcessing} lead to the upper bound
\begin{equation}
\label{eq:EvaluationBarronConstantThm5}
    C_{\tilde{\eta}^{\sigma}} \leq \frac{32\sqrt{2}M_{0}^{\sigma}}{\sqrt{\pi}\sigma^{4}} \left(1 + \frac{1}{2} \log\left(\frac{M_{2}^{\sigma}}{M_{0}^{\sigma}}+3\frac{M_{1}^{\sigma}}{M_{0}^{\sigma}} + 3+\sigma^{2}\right)\right),
\end{equation}
which holds true whenever $8e(2+\sigma^{2}e^{-2/\sigma^{2}})\leq\sigma^{4}$.  It can be verified that the previous inequality holds for $\sigma\geq4.7$.

In the current setting under extreme value randomization, it can be shown that
\begin{equation}
\label{eq:EvaluationTheorem6}
    \mathbb{P}\left(\lvert X^{\sigma} \rvert > 2 \vert Y = \pm1 \right) = Q(1/\sigma) + Q(3/\sigma),
\end{equation}
where $\displaystyle Q(x) = \frac{1}{\sqrt{2\pi}} \int_{x}^{\infty} e^{-x^{2}/2} \mathrm{d}x$. In particular, in the notation of Theorem~\ref{Theorem:MainTheoremPostProcessingTheta}, we have that
\begin{equation}
    \lambda_{\pm} = \frac{Q(1/\sigma) + Q(3/\sigma)}{4},
\end{equation}
and
\begin{equation}
    \Lambda_{\alpha} = \frac{2^{\alpha+1}}{(Q(1/\sigma) + Q(3/\sigma))^{\alpha}}.
\end{equation}
Using the previous expressions, we can provide upper bounds for $N_{1}^{\sigma}$, $N_{2}^{\sigma}$ and $N_{3}^{\sigma}$ as defined in Theorem~\ref{Theorem:MainTheoremPostProcessingTheta}. The latter theorem and \eqref{eq:BarronConstantThetaTilde} lead to
\begin{equation}
\label{eq:EvaluationBarronConstantThm6}
    C_{\tilde{\theta}^{\sigma}} \leq \frac{4\sqrt{2}N_{2}^{\sigma}}{\sqrt{\pi}} \left(1 + \frac{1}{2} \log\left(\frac{N_{1}^{\sigma}N_{3}^{\sigma}}{(N_{2}^{\sigma})^{2}}\right)\right),
\end{equation}
which holds true whenever $\frac{\Lambda_{1}}{\sigma^{2}}+\frac{\Lambda_{2}}{8\sqrt{\pi}\sigma^{3}} \leq \frac{1}{e}$. It can be verified that the previous inequality holds for $\sigma\geq4.25$.

The bounds \eqref{eq:EvaluationBarronConstantProp2}, \eqref{eq:EvaluationBarronConstantThm5} and \eqref{eq:EvaluationBarronConstantThm6} are illustrated in Figure~\ref{Fig:BarronConstantBounds}. We would like to remark that, in order to evaluate the bound produced by Theorem~\ref{Theorem:MainTheoremPostProcessingTheta}, we used the fact that $\gamma=1$ to obtain the exact probability in \eqref{eq:EvaluationTheorem6}. While this assumption is rather strong as it amounts to know that $X=Y$, a similar assumption was made to evaluate the bound produced by Theorem~\ref{Theorem:MainTheoremPostProcessing}. Hence, the comparison of these bounds is fair and suggests that Theorem~\ref{Theorem:MainTheoremPostProcessingTheta} produces better bounds in practice than Theorem~\ref{Theorem:MainTheoremPostProcessing}. Hence, we focus on the numerical evaluation of $\mathrm{mmse}_{k,n}^{\ast}(Y \vert X)$ for the remainder of this section.

\begin{figure}
    \centering
    \includegraphics[width=0.3\textwidth]{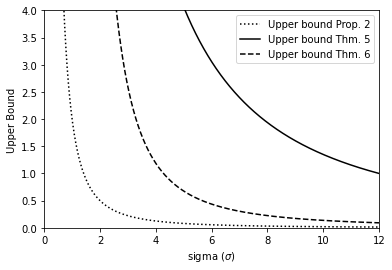}
    \caption{Bounds for the Barron constant produced by Proposition~\ref{Prop:BarronConstantGMM}, Theorem~\ref{Theorem:MainTheoremPostProcessing} and Theorem~\ref{Theorem:MainTheoremPostProcessingTheta}. These bounds hold for $\sigma\geq0$, $\sigma\geq4.7$, and $\sigma\geq4.25$, respectively. The bound produced by Proposition~\ref{Prop:BarronConstantGMM} depends on an exact computation and serves as a benchmark. Overall, Theorem~\ref{Theorem:MainTheoremPostProcessingTheta} seems to produce better bounds in practice than Theorem~\ref{Theorem:MainTheoremPostProcessing}.}
    \label{Fig:BarronConstantBounds}
\end{figure}

\subsection{Computation of $\mathrm{mmse}_{k,n}^{\ast}(Y \vert X)$}

A key difficulty to instantiate the proposed lower bounds for the MMSE is to determine the appropriate values of $k$ and $n$. In this section we study this problem and propose an optimization method to approximate the value of $\mathrm{mmse}_{k,n}^{\ast}(Y \vert X)$ efficiently.

For each $k\in\mathbb{N}$, let $\mathcal{S}_{k}$ be the set of functions $g:\mathbb{R}\to\mathbb{R}$ of the form
\begin{equation}
\label{eq:GeneralFormSk}
    g(x) = \sum_{l=0}^{k} s_{l} \mathbbm{1}_{[t_{l},t_{l+1})}(x),
\end{equation}
where $s_{l}\in\{-1,0,+1\}$ and
\begin{equation}
    -\infty \equiv t_{0} < t_{1} < \cdots < t_{k} < t_{k+1} \equiv \infty.
\end{equation}
We refer to $t_{1},\ldots,t_{k}$ as the threshold points of $g$. Observe that $\mathcal{S}_{k}$ captures the set of \emph{class probability estimators} \cite{reid2010composite} that are either completely confident (i.e., $g(x)=\pm1$) or completely uncertain (i.e., $g(x)=0$) about their predictions.

The next lemma establishes a key structural property of the hypothesis class $\tanh\circ\mathcal{H}_{k}^{\phi}$: any function $g\in\mathcal{S}_{k}$ can be approximated, in the uniform norm outside a neighborhood of the threshold points of $g$, by functions on $\tanh\circ\mathcal{H}_{k}^{\phi}$.

\begin{lemma}
\label{Lemma:StructuralPropertyHk}
Let $k\in\mathbb{N}$ and $\epsilon,\delta>0$. If $g\in\mathcal{S}_{k}$, then there exists $h\in\mathcal{H}_{k}^{\phi}$ such that, for all $x\in\mathbb{R}\setminus\bigcup_{l=1}^{k} (t_{l}-\delta,t_{l}+\delta)$,
\begin{equation}
\label{eq:ThmStructurlProperty}
    \lvert g(x) - \tanh(h(x)) \lvert \leq \epsilon.
\end{equation}
\end{lemma}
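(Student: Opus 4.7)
The plan is to use each of the $k$ hidden units to absorb one of the at most $k$ jumps of $g$: the $l$-th unit $\phi(a_{l}(x-t_{l}))$, with $a_{l}$ large, will approximate the step $\mathrm{sgn}(x-t_{l})$ outside a $\delta$-neighborhood of $t_{l}$, and I will tune the output coefficients so that after applying $\tanh$ the network reproduces the three target values $\{-1,0,+1\}$. Concretely, first pick $M>0$ such that $1-\tanh(M)\leq \epsilon/2$, and then pick $A>0$ large enough that $|\phi(z)-\mathrm{sgn}(z)|\leq \epsilon/(2kM)$ whenever $|z|\geq A$; this is possible by the sigmoidal property in \eqref{eq:DefSigmoidalFunction}. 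Setting $a_{l} = A/\delta$ and $b_{l} = -a_{l}t_{l}$ guarantees that on the admissible set $\mathcal{A}\coloneqq \mathbb{R}\setminus\bigcup_{l=1}^{k}(t_{l}-\delta,t_{l}+\delta)$ each hidden unit is within $\epsilon/(2kM)$ of $\mathrm{sgn}(x-t_{l})$.

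Next I would pin down the output coefficients. On the $j$-th admissible piece $[t_{j},t_{j+1})\cap\mathcal{A}$, the network decomposes as $h(x) = H_{j} + R(x)$, where $H_{j}\coloneqq c_{0} + \sum_{l\leq j} c_{l} - \sum_{l>j} c_{l}$ is the ``ideal'' level and the remainder obeys $|R(x)| \leq (\epsilon/(2kM)) \sum_{l=1}^{k}|c_{l}|$. Demanding $H_{j} = M s_{j}$ for every $j$ gives the telescoping identity $H_{j} - H_{j-1} = 2c_{j}$, which inverts to
\begin{equation*}
    c_{j} = \frac{M(s_{j} - s_{j-1})}{2} \quad (j\geq 1), \qquad c_{0} = \frac{M(s_{0}+s_{k})}{2}.
\end{equation*}
Since $|s_{j} - s_{j-1}|\leq 2$, one obtains $|c_{l}|\leq M$ for every $l$, so $\sum_{l=1}^{k}|c_{l}|\leq kM$ and hence $|R(x)|\leq \epsilon/2$ on $\mathcal{A}$. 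Because $\tanh$ is $1$-Lipschitz and $\tanh(Ms_{j})=s_{j}$ when $s_{j} = 0$, while $|\tanh(\pm M)\mp 1| = 1-\tanh(M)\leq \epsilon/2$ otherwise, the triangle inequality yields
\begin{equation*}
    |g(x)-\tanh(h(x))| \leq |s_{j}-\tanh(Ms_{j})| + |\tanh(Ms_{j})-\tanh(h(x))| \leq \tfrac{\epsilon}{2}+|R(x)| \leq \epsilon,
\end{equation*}
which is precisely \eqref{eq:ThmStructurlProperty}.

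The only subtle point, rather than a real obstacle, is the order of parameter choices: $M$ must be selected first (to control the saturation error $1-\tanh(M)$), and only afterwards the sigmoid sharpness $A$, so that the cumulative sign-approximation error $\epsilon' \sum_{l}|c_{l}|$ remains below $\epsilon/2$ even though $\sum_{l}|c_{l}|$ grows linearly with $M$. The parameter counting is tight in a pleasant way: $g$ has at most $k$ jumps and $\mathcal{H}_{k}^{\phi}$ has exactly $k$ hidden units, so the linear map from output weights $(c_{l})$ to level values $(H_{j})$ is a bijection and every target piecewise-constant profile in $\mathcal{S}_{k}$ is reachable.
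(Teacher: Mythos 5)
Your construction is correct and is essentially the same as the paper's: after identifying your $M$ with the paper's $\zeta_\epsilon$ and your $A/\delta$ with the paper's $z_\epsilon/\delta$, your output weights $c_{0}=M(s_{0}+s_{k})/2$, $c_{j}=M(s_{j}-s_{j-1})/2$ are exactly what one obtains by expanding the paper's $h(x)=\zeta_{\epsilon}\bigl(s_{0}+\sum_{l}(s_{l}-s_{l-1})\tfrac{\phi(\cdot)+1}{2}\bigr)$. The ordering of parameter choices (saturation level first, then sigmoid sharpness) and the split of the error budget into $\epsilon/2$ for $\tanh$-saturation and $\epsilon/2$ for the $\phi$-approximation are likewise identical to the paper's argument.
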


The proof of the previous lemma, which can be found in Appendix~\ref{Appendix:ProofLemmaStructuralPropertyHk}, relies on standard (uniform) approximation arguments. The next theorem establishes that the minimum empirical square-loss over $\mathcal{S}_{k}$ serves as an upper bound for $\mathrm{mmse}_{k,n}^{\ast}(Y \vert X)$.

\begin{theorem}
\label{Theorem:StructuralPropertyHk}
If $k\in\mathbb{N}$, then
\begin{equation}
    \mathrm{mmse}_{k,n}^{\ast}(Y \vert X) \leq \inf_{g\in\mathcal{S}_{k}} \frac{1}{n} \sum_{i=1}^{n} (Y_{i}-g(X_{i}))^{2}.
\end{equation}
\end{theorem}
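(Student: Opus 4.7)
The plan is to fix $g\in\mathcal{S}_{k}$ and $\eta>0$, and to exhibit $h\in\mathcal{H}_{k}^{\phi}$ satisfying
\begin{equation*}
\frac{1}{n}\sum_{i=1}^{n}(Y_{i}-\tanh(h(X_{i})))^{2}\;\leq\;\frac{1}{n}\sum_{i=1}^{n}(Y_{i}-g(X_{i}))^{2}+\eta.
\end{equation*}
Taking the infimum over such $h$ gives $\mathrm{mmse}_{k,n}^{\ast}(Y\vert X)\leq\frac{1}{n}\sum_{i=1}^{n}(Y_{i}-g(X_{i}))^{2}+\eta$, and then letting $\eta\to 0$ and taking the infimum over $g\in\mathcal{S}_{k}$ finishes the theorem.

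The only subtle point is that Lemma~\ref{Lemma:StructuralPropertyHk} provides no control on $\tanh(h(x))$ for $x$ in a $\delta$-neighborhood of a threshold of $g$, so I first reduce to a configuration in which no sample point lies near such a threshold. Given $g\in\mathcal{S}_{k}$ with thresholds $t_{1}<\cdots<t_{k}$ and values $s_{0},\ldots,s_{k}$, if some $X_{i}$ coincides with a threshold $t_{l}$, I would shift $t_{l}$ slightly to the left, defining $t_{l}'\coloneqq t_{l}-\mu_{l}$ for a small $\mu_{l}>0$. Since $g$ is right-continuous at $t_{l}$, the point $X_{i}=t_{l}$ still lies in the new interval $[t_{l}',t_{l+1}')$, so its value $s_{l}$ is preserved. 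Choosing the $\mu_{l}$ small enough so that no sample point is swept across any (shifted) threshold produces $g'\in\mathcal{S}_{k}$ which agrees with $g$ at every $X_{i}$ and whose thresholds are disjoint from $\{X_{1},\ldots,X_{n}\}$; in particular, its empirical square-loss equals that of $g$.

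With this reduction in hand, set $\delta_{0}\coloneqq\min_{i,l}\lvert X_{i}-t_{l}'\rvert>0$ and apply Lemma~\ref{Lemma:StructuralPropertyHk} to $g'$ with $\delta=\delta_{0}/2$ and a parameter $\epsilon>0$ to be chosen, obtaining $h\in\mathcal{H}_{k}^{\phi}$ for which $\lvert\tanh(h(X_{i}))-g(X_{i})\rvert\leq\epsilon$ at every sample point. Expanding the square as
\begin{equation*}
(Y_{i}-\tanh(h(X_{i})))^{2}=(Y_{i}-g(X_{i}))^{2}+2(Y_{i}-g(X_{i}))(g(X_{i})-\tanh(h(X_{i})))+(g(X_{i})-\tanh(h(X_{i})))^{2},
\end{equation*}
bounding $\lvert g(X_{i})\rvert\leq 1$, and using that the $Y_{i}$ are uniformly bounded on a finite sample ($\lvert Y_{i}\rvert=1$ in the binary setting of interest), I obtain a majorization of the form $(Y_{i}-\tanh(h(X_{i})))^{2}\leq(Y_{i}-g(X_{i}))^{2}+C\epsilon+\epsilon^{2}$ for a constant $C$ depending only on the sample. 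Averaging over $i$ and choosing $\epsilon$ small enough that $C\epsilon+\epsilon^{2}\leq\eta$ completes the argument.

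The main obstacle is the threshold-perturbation step, since Lemma~\ref{Lemma:StructuralPropertyHk} is silent at the threshold points themselves; once the reduction to thresholds disjoint from the sample is made, the rest is a routine expansion of the square combined with a uniform-on-sample application of the lemma.
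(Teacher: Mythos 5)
Your proposal is correct and follows essentially the same strategy as the paper's proof: perturb the thresholds of $g$ slightly so that they avoid the sample points while preserving the empirical loss, apply Lemma~\ref{Lemma:StructuralPropertyHk} with $\delta$ small enough that every $X_i$ lies outside the excluded neighborhoods, expand the square to isolate an $O(\epsilon)$ error, and let $\epsilon\to 0$. The only cosmetic difference is that the paper makes the threshold perturbation explicit via $\tilde{t}_l = t_l - \tfrac{1}{2}\bigl(1\wedge\min\{t_l - X_i : X_i < t_l\}\bigr)$, whereas you leave the shift sizes $\mu_l$ implicit, which is fine at this level of detail.
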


\begin{proof}
Let $g\in\mathcal{S}_{k}$ be given as in \eqref{eq:GeneralFormSk}. For each $l\in[k]$, let
\begin{equation}
    \tilde{t}_{l} \coloneqq t_{l} - \frac{1}{2} \left(1 \wedge \min\left\{t_{l} - X_{i} : X_{i} < t_{l}\right\}\right),
\end{equation}
where we take the minimum of the empty set as $+\infty$. We define the function $\tilde{g}:\mathbb{R}\to\mathbb{R}$ as
\begin{equation}
    \tilde{g}(x) \coloneqq \sum_{l=0}^{k} s_{l} \mathbbm{1}_{[\tilde{t}_{l},\tilde{t}_{l+1})}(x),
\end{equation}
where $\tilde{t}_{0} = - \infty$ and $\tilde{t}_{k+1} = \infty$. It can be verified that $\tilde{g}\in\mathcal{S}_{k}$, $\{t_{1},\ldots,t_{k}\} \cap \{X_{1},\ldots,X_{n}\} = \emptyset$ and
\begin{equation}
\label{eq:ProofMMSESkEqualityTilde}
    \frac{1}{n} \sum_{i=1}^{n} (Y_{i}-g(X_{i}))^{2} = \frac{1}{n} \sum_{i=1}^{n} (Y_{i}-\tilde{g}(X_{i}))^{2}.
\end{equation}
Let $\epsilon>0$. Take $\delta>0$ such that, for every $i\in[n]$,
\begin{equation}
    X_{i} \notin \bigcup_{l=1}^{k} (\tilde{t}_{l}-\delta,\tilde{t}_{l}+\delta).
\end{equation}
By Lemma~\ref{Lemma:StructuralPropertyHk}, there exists $h\in\mathcal{H}_{k}^{\phi}$ such that, for every $i\in[n]$,
\begin{equation}
    \lvert \tilde{g}(X_{i}) - \tanh(h(X_{i})) \lvert \leq \epsilon.
\end{equation}
The previous inequality and the triangle inequality lead to
\begin{align}
    \sum_{i=1}^{n} (Y_{i} - \tanh(h(X_{i})))^{2} &\leq n\epsilon^{2} + 2\epsilon \sum_{i=1}^{n} \lvert Y_{i} - \tilde{g}(X_{i})\rvert + \sum_{i=1}^{n} (Y_{i} - \tilde{g}(X_{i}))^{2}.
\end{align}
Therefore, \eqref{eq:ProofMMSESkEqualityTilde} and the fact that $\epsilon>0$ is arbitrary imply that
\begin{equation}
    \mathrm{mmse}_{k,n}^{\ast}(Y \vert X) \leq \frac{1}{n} \sum_{i=1}^{n} (Y_{i} - g(X_{i}))^{2}.
\end{equation}
Since $g\in\mathcal{S}_{k}$ is also arbitrary, the conclusion follows.
\end{proof}

The next corollary is a straightforward consequence of the previous theorem. Indeed, it follows by taking $g\in\mathcal{S}_{k}$ such that $g(X_{i}) = Y_{i}$ for all $i\leq k$ and $g(X_{i})=0$ for all $i>k$.

\begin{corollary}
\label{Corollary:UpperBoundMMSEkn}
If $k,n\in\mathbb{N}$ with $k\leq n$, then
\begin{equation}
\label{eq:PropSimpleUpperBoundMMSEkn}
    \mathrm{mmse}_{k,n}^{\ast}(Y \vert X) \leq 1 - \frac{k}{n}.
\end{equation}
\end{corollary}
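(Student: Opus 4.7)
The plan is to invoke Theorem~\ref{Theorem:StructuralPropertyHk} with a single, explicitly constructed step function $g \in \mathcal{S}_{k}$ that correctly ``labels'' $k$ of the training samples and defaults to $0$ on the remaining $n-k$. The crucial feature that makes this feasible is that $Y_{i} \in \{\pm 1\}$ lies inside the admissible set $\{-1,0,+1\}$ of heights used by functions in $\mathcal{S}_{k}$, so the constraint on the admissible heights costs us nothing.

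Concretely, I would first assume the sample abscissas are distinct (which holds almost surely in the continuous-$X$ settings of interest) and relabel so that $X_{(1)} < X_{(2)} < \cdots < X_{(n)}$, with corresponding labels $Y_{(1)}, \ldots, Y_{(n)}$. Next, I would place the $k$ threshold points at $t_{l} = (X_{(l)} + X_{(l+1)})/2$ for $l=1,\ldots,k$, so that each of the first $k$ resulting intervals contains exactly one sample point, namely $X_{(l)}$, while the final interval $[t_{k},\infty)$ collects all remaining samples $X_{(k+1)},\ldots,X_{(n)}$. Finally, set the heights $s_{l} = Y_{(l+1)}$ for $l=0,\ldots,k-1$ and $s_{k} = 0$. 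This produces a function $g \in \mathcal{S}_{k}$ satisfying $g(X_{(i)}) = Y_{(i)}$ for $i \leq k$ and $g(X_{(i)}) = 0$ for $i > k$, matching exactly the construction hinted at in the statement.

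Evaluating the empirical squared loss of this $g$ gives
\begin{equation*}
    \frac{1}{n} \sum_{i=1}^{n} (Y_{i} - g(X_{i}))^{2} = \frac{1}{n} \sum_{i > k} Y_{(i)}^{2} = \frac{n-k}{n} = 1 - \frac{k}{n},
\end{equation*}
since the first $k$ terms vanish and each of the remaining $n-k$ terms equals $Y_{(i)}^{2} = 1$. Plugging this particular $g$ into the infimum on the right-hand side of Theorem~\ref{Theorem:StructuralPropertyHk} yields the desired bound. There is no genuine obstacle here beyond the bookkeeping of the threshold placement; the only mildly delicate point is the case in which two samples coincide with opposite labels, which is excluded almost surely in the continuous settings considered in the paper and in any event is handled by perturbing the threshold points infinitesimally, since the construction depends only on the ordering of the $X_{i}$.
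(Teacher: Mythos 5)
Your proof is correct and takes essentially the same route as the paper, which simply states that the bound follows from Theorem~\ref{Theorem:StructuralPropertyHk} by choosing $g\in\mathcal{S}_{k}$ with $g(X_{i})=Y_{i}$ for $i\leq k$ and $g(X_{i})=0$ for $i>k$; you merely supply an explicit threshold placement realizing that $g$. One cosmetic remark: your formula $t_{l}=(X_{(l)}+X_{(l+1)})/2$ is undefined for $l=k$ when $k=n$, but that edge case is handled by choosing $t_{n}$ to be any value larger than $X_{(n)}$, so the argument stands.
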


From \eqref{eq:PropSimpleUpperBoundMMSEkn} we conclude that it is necessary to have $k \ll n$ in order to obtain meaningful bounds for $\mathrm{mmse}(Y \vert X)$. Note that the previous bound recovers the well-known fact that a two-layer neural network of size $k$ can memorize an entire sample of size $n$ whenever $k\geq n$, see, e.g., \cite{bubeck2020network}.

Motivated by Theorem~\ref{Theorem:StructuralPropertyHk}, we propose the following optimization process to approximate the value of $\mathrm{mmse}_{k,n}^{\ast}(Y \vert X)$: minimize the empirical square-loss around 0 using random initialization and gradient descent; minimize the empirical square-loss over $\mathcal{S}_{k}$ using dynamic programming, as described in\footnote{We implicitly assume that $X_{1},\ldots,X_{n}$ are pairwise different, which is the case in most practical cases, e.g., when the distribution of $X$ is absolutely continuous with respect to the Lebesgue measure.} Algorithm~\ref{Algorithm:Optimization}; and take the minimum of those two empirical losses. While this combined minimization process is not guaranteed to find the exact value of $\mathrm{mmse}_{k,n}^{\ast}(Y \vert X)$, it covers two important subsets of the hypothesis class $\tanh\circ\mathcal{H}_{k}^{\phi}$.

\begin{algorithm}[t]
\begingroup
\small
\caption{Empirical Square-Loss Minimization over $\mathcal{S}_{k}$}
\label{Algorithm:Optimization}

\begin{algorithmic}[1]
\State compute a permutation $\pi$ such that $X_{\pi(1)} < \cdots < X_{\pi(n)}$


\State set $L[l,s,i] = \infty$ for $0 \leq l \leq k$, $s\in\{-1,0,+1\}$, $0 \leq i \leq n$\Comment{minimal loss using $l$ thresholds up to $X_{\pi(i)}$ with $s_{l}=s$}

\State set $L[0,s,0]=0$ for $s\in\{-1,0,+1\}$ \Comment{setting $s_{0} = s$}

\For{$i = 1,\ldots,n$}
    \State $L[0,s,i] = L[0,s,i-1] + (Y_{\pi(i)} - s)^{2}$ for $s\in\{-1,0,+1\}$
    
    \For{$l = 1,\ldots,k$}
        
        \State $\displaystyle L[l,s,i] = (L[l,s,i-1] \wedge \min_{s'\neq s}L[l-1,s',i-1]) + (Y_{\pi(i)}-s)^{2}$ for $s\in\{-1,0,+1\}$
    \EndFor
\EndFor

\State \textbf{return} $\displaystyle \frac{1}{n} \min_{l,s} L[l,s,n]$ \Comment{minimal loss over $\mathcal{S}_{k}$}
\end{algorithmic}
\endgroup
\end{algorithm}
%

\subsection{Numerical Experiment}
\label{Subsection:NumericalConsiderationsExperiment}

We end this section applying the tools developed so far in a concrete numerical example. We consider the setting introduced in Section~\ref{Section:NumericalConsiderationsBarronConstant}, where $Y \sim \textnormal{Unif}(\{\pm1\})$, $X = Y$ and $r=2$. As pointed out in that section, in this setting Theorem~\ref{Theorem:MainTheoremPostProcessingTheta} produces better bounds for the Barron constant than Theorem~\ref{Theorem:MainTheoremPostProcessing}. Hence, for the sake of illustration, we focus on the extreme values randomization setting introduced in Section~\ref{Subsection:ExtremeValuesRandomization}.

Motivated by Corollary~\ref{Corollary:UpperBoundMMSEkn}, in our numerical experiments we set $k = n/100$ for $n=10,000$ and $n=100,000$. Recall that our optimization strategy to approximate the value $\mathrm{mmse}_{k,n}^{\ast}(Y \vert X)$ consists in (a) minimize the empirical square-loss around 0 using random initialization and gradient descent\footnote{We initialized the weights of the neural network at random with distribution $\mathcal{N}(0,0.01)$. When a random initialization with empirical square-loss less than 1 was found, 100 iteration of gradient descent with step size equal to 0.1 were performed. For each value of $\sigma$, this experiment was conducted 5 times and the best set of parameters was stored.}; (b) minimize the empirical square-loss over $\mathcal{S}_{k}$ using Algorithm~\ref{Algorithm:Optimization}; and (c) take the minimum of those two empirical losses. In all of our experiments, the minimal empirical square-loss over $\mathcal{S}_{k}$ was no larger than $0.9555$, while the minimal empirical square-loss around 0 was no smaller than $0.9997$. Thus, Algorithm~\ref{Algorithm:Optimization} seems to perform significantly better than standard machine learning techniques for the task of minimizing the empirical square-loss.

In Figure~\ref{Figure:NumericalExperiment} we plot our numerical results for $n=10,000$ and $n=100,000$, and a variety of values of $\sigma$. Note that the quality of the lower bound for the MMSE improves as $n$ and $k$ increase. However, as suggested by Corollary~\ref{Corollary:UpperBoundMMSEkn}, the ratio between $k$ and $n$ should remain bounded from below in order to get a meaningful bound.

We conjecture that the family $\mathcal{S}_{k}$ contains functions with relatively small empirical square-loss in the regime where $k \ll n$. This seems to be the case since $\mathcal{S}_{k}$ models the functions in $\tanh\circ\mathcal{H}_{k}^{\phi}$ that highly overfit to a portion of the data. Since Algorithm~\ref{Algorithm:Optimization} has complexity $O(kn)$, the minimal empirical square-loss over $\mathcal{S}_{k}$ provides a reasonable proxy for $\mathrm{mmse}_{k,n}^{\ast}(Y \vert X)$ that can be computed efficiently.

\begin{figure}[b]
	\centering
    \includegraphics[width=0.24\textwidth]{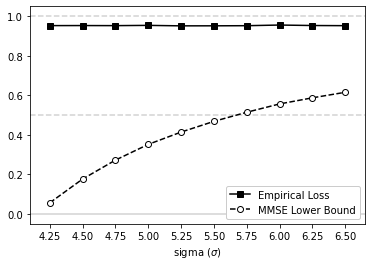} \qquad \qquad \qquad \includegraphics[width=0.24\textwidth]{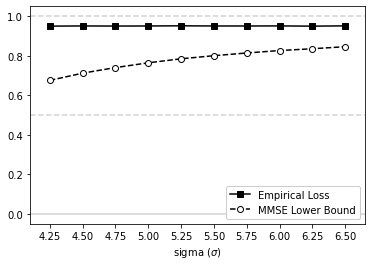}
	\caption{Minimal empirical square-loss over $\mathcal{S}_{k}$, which was significantly smaller than the corresponding loss around 0 in all of our experiments, and the lower bound for the MMSE obtained from Theorem~\ref{Theorem:2LNNtanh} and the bound for the Barron constant in Section~\ref{Section:NumericalConsiderationsBarronConstant}. We performed our experiments for $n=10,000$ (left) and $n=100,000$ (right) with $k = n/100$.}
	\label{Figure:NumericalExperiment}
\end{figure}

\section{Summary and Final Remarks}
\label{Section:Conclusion}

Motivated by estimation-theoretic privacy, in this paper we have established provable lower bounds for the MMSE in estimating a random variable $Y\in\mathbb{R}$ given another random variable $X\in\mathbb{R}^{d}$ (Theorems~\ref{Theorem:2LNNIdentity} and \ref{Theorem:2LNNtanh}). These bounds are based on a two-layer neural network estimator of the MMSE and the Barron constant of an appropriate function of the conditional expectation of $Y$ given $X$. More specifically, we have proposed the minimum empirical square-loss attained by a two-layer neural network of size $k$ as an estimator of the MMSE. We considered two variations of this estimator: the first one, denoted by $\mathrm{mmse}_{k,n}(Y \vert X)$, uses the identity function as the output activation function; while the second one, denoted by $\mathrm{mmse}^{\ast}_{k,n}(Y \vert X)$, uses hyperbolic tangent as the output activation function.

Finding meaningful estimates for the Barron constant is challenging since (i) the underlying conditional expectation is rarely available in practice and (ii) the Barron constant is defined in terms of the Fourier transform of this conditional expectation. To alleviate the second issue, we provided an upper bound for the Barron constant of a function $h:\mathbb{R}\to\mathbb{R}$ based on the $L^{1}$-norms of its derivatives (Theorem~\ref{Theorem:CEtaL1NormsBound}). We have further generalized this result to multivariate functions $h:\mathbb{R}^{d}\to\mathbb{R}$ (Theorem~\ref{Theorem:CEtaL1NormsBoundHighDimensions}), although the complexity of the result and the bound itself increase exponentially with $d$. In addition, we have shown that one can circumvent the first issue in applications where the additive Gaussian mechanism is used (Theorems~\ref{Theorem:MainTheoremPostProcessing} and \ref{Theorem:MainTheoremPostProcessingTheta}). In such applications, our estimates for the Barron constant are order optimal in the large noise regime.

In order to obtain numerical lower bounds for the MMSE in some concrete applications, we analyzed some algorithmic aspects related to the computation of the proposed estimator. First, we empirically found that the bounds for the Barron constant associated with the estimator $\mathrm{mmse}^{\ast}_{k,n}(Y \vert X)$ are tighter than those corresponding to $\mathrm{mmse}_{k,n}(Y \vert X)$. Building upon a structural property of the hypothesis class $\tanh\circ\mathcal{H}_{k}^{\phi}$ (Lemma~\ref{Lemma:StructuralPropertyHk} and Theorem~\ref{Theorem:StructuralPropertyHk}), we showed that the neural network size $k$ should be significantly smaller than the sample size $n$ in order to obtain meaningful lower bounds for the MMSE (Corollary~\ref{Corollary:UpperBoundMMSEkn}). Moreover, motivated by the same structural property, we proposed an optimization process to approximate the value of $\mathrm{mmse}^{\ast}_{k,n}(Y \vert X)$ that performs better than standard machine learning techniques and that can be computed efficiently using dynamic programming. Overall, we developed an effective machinery to obtain theoretical lower bounds for the MMSE.

While we have only considered shallow neural networks, there are fundamental obstructions in trying to generalize the present work to deep neural networks.
\begin{itemize}
    \item From a function approximation perspective, at the moment it seems that there is no analogue of Barron's theorem for deep neural networks\footnote{Lee \textit{et al.} \cite{lee2017ability} have an important effort in this direction, although their results depend on a specific decomposition of the target function.}. While there are many results explaining the approximation power of deep neural networks, see, e.g., \cite{lin2017does,liang2017deep,rolnick2018power}, they are mainly qualitative and, hence, unfitted to produce concrete bounds.
    
    \item From a computational perspective, the optimization landscape of deep neural networks is significantly more complex than its shallow counterpart, see, e.g., \cite{kawaguchi2016deep,yun2018small,li2018visualizing}. As a result, it is harder to guarantee that a deep neural network has been trained to optimality, which is essential for the estimator proposed in this paper.
    
    \item As mentioned at the end of Section~\ref{Section:Estimation2LNN}, if $k$ is sufficiently large then our lower bounds for the MMSE become trivial due to overfitting, i.e., $\mathrm{mmse}_{k,n}(Y \vert X)$ being equal to 0. Given the astonishing expressive power of deep neural networks, see, e.g., \cite{lin2017does,liang2017deep,rolnick2018power,zhang2021understanding}, they seem likely to produce trivial lower bounds.
\end{itemize}
Overall, generalizing the present work to deep neural networks is highly non-trivial and, at the same, it is unclear if it will provide significantly better results.

In this work, we have shown that Barron's approximation theorem could be used to derive non-trivial lower bounds for the MMSE. However, its implementation is challenging and, when data is post-processed by the additive Gaussian mechanism, it seems to work well only in the large noise regime. While Theorem~\ref{Theorem:2LNNIdentity} could be easily generalized to other families of approximating functions beyond neural networks, it is crucial to find a family with good approximation guarantees for the conditional expectations under consideration. We leave the search for such a family and approximation guarantees as future work.

\section*{Acknowledgments}

The authors would like to thank the anonymous reviewers at ISIT 2021 and ITR3@ICML-21 for their valuable comments on early versions of this work. Also, Mario Diaz would like to thank Hao Wang for useful discussions on early versions of Theorem~\ref{Theorem:CEtaL1NormsBound}.

\appendices
\section{Proof of Theorem~\ref{Theorem:MainTheoremPostProcessing}}
\label{Appendix:MainTheoremPostProcessing}

For each $x\in\mathbb{R}$, we define
\begin{equation}
\label{eq:Defgsigma}
    g^{\sigma}_{\pm}(x) \coloneqq \lambda_{\pm} (f_{\pm} \ast K_{\sigma})(x).
\end{equation}
Observe that, with this notation,
\begin{equation}
    \eta^{\sigma}(x) = \frac{g^{\sigma}_{+}(x) - g^{\sigma}_{-}(x)}{g^{\sigma}_{+}(x) + g^{\sigma}_{-}(x)}.
\end{equation}
The following simple lemma provides useful expressions for the derivatives of $g^{\sigma}_{\pm}$.

\begin{lemma}
\label{Lemma:DerivativesTildeG}
If $j\in\{0,1,2,3\}$, then, for every $x\in\mathbb{R}$,
\begin{equation}
    \frac{\mathrm{d}^{j}g^{\sigma}_{\pm}}{\mathrm{d}x^{j}}(x) = \int_{\mathbb{R}} \lambda_{\pm} f_{\pm}(s) P^{\sigma}_{j}(x-s) K_{\sigma}(x-s) \mathrm{d}s,
\end{equation}
where $P^{\sigma}_{0}(x) = 1$, $P^{\sigma}_{1}(x) = - x/\sigma^{2}$, $P^{\sigma}_{2}(x) = (x^2-\sigma^2)/\sigma^4$ and $P^{\sigma}_{3}(x) = -(x^3-3\sigma^{2}x)/\sigma^{6}$.
\end{lemma}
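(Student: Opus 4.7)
The plan is to use the definition $g^{\sigma}_{\pm}(x) = \int_{\mathbb{R}} \lambda_{\pm} f_{\pm}(s) K_{\sigma}(x-s)\,\mathrm{d}s$ and differentiate $j$ times under the integral sign with respect to $x$, obtaining
\[
\frac{\mathrm{d}^{j} g^{\sigma}_{\pm}}{\mathrm{d}x^{j}}(x) = \int_{\mathbb{R}} \lambda_{\pm} f_{\pm}(s) \frac{\mathrm{d}^{j} K_{\sigma}}{\mathrm{d}x^{j}}(x-s)\,\mathrm{d}s.
\]
The exchange of derivative and integral is justified by the standard dominated convergence argument applied to the difference quotients: since $f_{\pm}$ is supported in $[-1,1]$ (by assumption of Theorem~\ref{Theorem:MainTheoremPostProcessing}) and $K_{\sigma}$ is a Schwartz function, on any bounded $x$-neighborhood all the difference quotients of $K_{\sigma}(x-s)$ up to order three are uniformly bounded in $s$ by an integrable function (the integrand has compact support in $s$).

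It then remains to verify that $\frac{\mathrm{d}^{j} K_{\sigma}}{\mathrm{d}x^{j}}(x) = P^{\sigma}_{j}(x) K_{\sigma}(x)$ for $j\in\{0,1,2,3\}$, with the polynomials $P^{\sigma}_{j}$ given in the statement. The case $j=0$ is trivial. For $j=1$, logarithmic differentiation of $K_{\sigma}$ gives $K_{\sigma}'(x)/K_{\sigma}(x) = -x/\sigma^{2}$, matching $P^{\sigma}_{1}$. The cases $j=2,3$ follow by differentiating once and twice more using the product rule and the identity $K_{\sigma}'(x) = -(x/\sigma^{2})K_{\sigma}(x)$; a direct short computation gives
\[
K_{\sigma}''(x) = \frac{x^{2}-\sigma^{2}}{\sigma^{4}} K_{\sigma}(x), \qquad K_{\sigma}'''(x) = -\frac{x^{3}-3\sigma^{2}x}{\sigma^{6}} K_{\sigma}(x),
\]
which match $P^{\sigma}_{2}$ and $P^{\sigma}_{3}$ respectively. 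Substituting these identities into the integral and using $K_{\sigma}^{(j)}(x-s) = P^{\sigma}_{j}(x-s) K_{\sigma}(x-s)$ yields the claim.

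There is essentially no substantive obstacle; the statement is a direct calculation with Gaussian derivatives (which are, up to sign, the standard Hermite-weighted Gaussians). The only point requiring care is the formal justification for differentiating under the integral, which is immediate from the compact support of $f_{\pm}$ and the smoothness of $K_{\sigma}$.
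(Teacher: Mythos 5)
Your proof is correct and takes essentially the same route as the paper: both differentiate the convolution $g^{\sigma}_{\pm} = (\lambda_{\pm}f_{\pm})\ast K_{\sigma}$ by passing the derivative onto $K_{\sigma}$ and then verify the explicit Gaussian-derivative identities $K_{\sigma}^{(j)}(x) = P^{\sigma}_{j}(x)K_{\sigma}(x)$. The only difference is that you spell out the dominated-convergence justification for differentiating under the integral sign, which the paper leaves implicit.
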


\begin{proof}
It can be verified that, for each $j\in\{0,1,2,3\}$,
\begin{equation}
    \frac{\mathrm{d}^{j}K_{\sigma}^{(j)}}{\mathrm{d}x^{j}}(x) = P^{\sigma}_{j}(x) K_{\sigma}(x).
\end{equation}
Observe that $g^{\sigma}_{\pm} = (\lambda_{\pm}f_{\pm}) \ast K_{\sigma}$. Therefore, the lemma follows from the general formula $\displaystyle \frac{\mathrm{d}^{j}}{\mathrm{d} x^{j}}(h_{1} \ast h_{2}) = h_1 \ast \frac{\mathrm{d}^{j}}{\mathrm{d} x^{j}} h_{2}$.
\end{proof}

In order to avoid cumbersome notation, we omit the superscript $\sigma$ when there is no risk of confusion, e.g., $g^{\sigma}_{\pm}$ is written as $g_{\pm}$ and $\eta^{\sigma}$ is written as $\eta$. In order to simplify our calculations, we introduce the following notation.

\begin{definition}
For each $n\in\mathbb{N}$, we define
\begin{equation}
\label{eq:DefBigK}
    K_{\delta_{1},\ldots,\delta_{n}}(s_{1},\ldots,s_{n};x) \coloneqq \prod_{i=1}^{n} \lambda_{\delta_{i}} f_{\delta_{i}}(s_i) K_{\sigma}(x-s_{i}),
\end{equation}
where $\delta_{1},\ldots,\delta_{n}\in\{\pm\}$, $s_{1},\ldots,s_{n}\in\mathbb{R}$ and $x\in\mathbb{R}$. Also, for $j_{1},\ldots,j_{n}\in\{0,1,2,3\}$, we define
\begin{equation}
\label{eq:DefBigP}
    P_{j_{1},\ldots,j_{n}}(s_{1},\ldots,s_{n};x) \coloneqq \prod_{i=1}^{n} P_{j_{i}}(x-s_{i}).
\end{equation}
\end{definition}

With the above notation, Lemma~\ref{Lemma:DerivativesTildeG} implies that for every $n\in\mathbb{N}$, $\delta_{1},\ldots,\delta_{n}\in\{\pm\}$, $j_{1},\ldots,j_{n}\in\{0,1,2,3\}$ and $x\in\mathbb{R}$,
\begin{align}
    \label{eq:IntegralRepresentationProducts} \prod_{i=1}^{n} g^{(j_{i})}_{\delta_{i}}(x) = \int_{\mathbb{R}^{n}} P_{j_{1},\ldots,j_{n}}(s;x) K_{\delta_{1},\ldots,\delta_{n}}(s;x) \mathrm{d} s,
\end{align}
where $s = (s_{1},\ldots,s_{n})$ and $\mathrm{d} s = \mathrm{d} s_{1} \cdots \mathrm{d} s_{n}$. In particular, by taking $j_{i} = 0$ for all $i\in[n]$,
\begin{equation}
\label{eq:IntegralRepresentationProducts0}
    \prod_{i=1}^{n} g_{\delta_{i}}(x) = \int_{\mathbb{R}^{n}} K_{\delta_{1},\ldots,\delta_{n}}(s;x) \mathrm{d} s.
\end{equation}
Finally, observe that for every $x\in\mathbb{R}$,
\begin{equation}
\label{eq:IntegralRepresentationProductSupports}
    \mathrm{Supp}(K_{\delta_{1},\ldots,\delta_{n}}(\cdot;x)) = \mathrm{Supp}(f_{\delta_{1}})\times\cdots\times\mathrm{Supp}(f_{\delta_{n}}).
\end{equation}

Now we derive a pointwise bound for $\eta'$.

\begin{lemma}
\label{Lemma:KeyUpperBound1}
If $\mathrm{Supp}(f_{\pm}) \subset [-1,1]$, then, for all $x\in\mathbb{R}$,
\begin{equation}
    \lvert \eta'(x) \rvert \leq \frac{4}{\sigma^2} \frac{g_{+}(x)g_{-}(x)}{\left(g_{+}(x)+g_{-}(x)\right)^{2}}.
\end{equation}
\end{lemma}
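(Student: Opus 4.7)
The plan is to combine the formula for $\eta'$ from Lemma~\ref{Lemma:DerivativesKappa} with the integral representation \eqref{eq:IntegralRepresentationProducts} in order to exploit a cancellation in the polynomial factors arising from the Gaussian convolution, and then use the compact support assumption $\mathrm{Supp}(f_{\pm})\subset[-1,1]$ to produce the $1/\sigma^{2}$ factor.

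First, I would recall that by Lemma~\ref{Lemma:DerivativesKappa},
\begin{equation*}
    \eta'(x) = 2\,\frac{g_{+}'(x) g_{-}(x) - g_{+}(x) g_{-}'(x)}{(g_{+}(x) + g_{-}(x))^{2}},
\end{equation*}
so the task reduces to bounding the numerator by $\tfrac{2}{\sigma^{2}} g_{+}(x) g_{-}(x)$.

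Next, applying \eqref{eq:IntegralRepresentationProducts} with $n=2$, $(\delta_{1},\delta_{2})=(+,-)$, and using $P_{0}\equiv 1$ and $P_{1}(x-s) = -(x-s)/\sigma^{2}$,
\begin{align*}
    g_{+}'(x) g_{-}(x) - g_{+}(x) g_{-}'(x)
    &= \int_{\mathbb{R}^{2}} \bigl[P_{1}(x-s_{1}) - P_{1}(x-s_{2})\bigr]\, K_{+,-}(s_{1},s_{2};x)\, \mathrm{d}s_{1}\mathrm{d}s_{2} \\
    &= \frac{1}{\sigma^{2}} \int_{\mathbb{R}^{2}} (s_{1} - s_{2})\, K_{+,-}(s_{1},s_{2};x)\, \mathrm{d}s_{1}\mathrm{d}s_{2}.
\end{align*}
The key cancellation is that the $x/\sigma^{2}$ pieces in $P_{1}(x-s_{1})$ and $P_{1}(x-s_{2})$ cancel, leaving only the difference $s_{1}-s_{2}$.

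To finish, by \eqref{eq:IntegralRepresentationProductSupports} the integrand $K_{+,-}(\cdot,\cdot;x)$ vanishes outside $\mathrm{Supp}(f_{+})\times\mathrm{Supp}(f_{-})\subset[-1,1]^{2}$, so $|s_{1}-s_{2}|\leq 2$ on the support of integration. Pulling this bound out and invoking the $n=2$ version of \eqref{eq:IntegralRepresentationProducts0}, which yields $\int_{\mathbb{R}^{2}} K_{+,-}(s_{1},s_{2};x)\, \mathrm{d}s_{1}\mathrm{d}s_{2} = g_{+}(x) g_{-}(x)$, gives
\begin{equation*}
    |g_{+}'(x) g_{-}(x) - g_{+}(x) g_{-}'(x)| \leq \frac{2}{\sigma^{2}}\, g_{+}(x) g_{-}(x),
\end{equation*}
and substituting back into the expression for $\eta'(x)$ yields the claim. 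There is no real obstacle here; the entire argument hinges on the cancellation of the $x$-dependent pieces in $P_{1}$ together with the diameter-$2$ support assumption, and is essentially a direct computation once the integral representation of Lemma~\ref{Lemma:DerivativesTildeG} is in hand.
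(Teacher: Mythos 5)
Your proof is correct and follows essentially the same route as the paper: write $\eta'$ via Lemma~\ref{Lemma:DerivativesKappa}, express $g_{+}'g_{-}-g_{+}g_{-}'$ as a double integral against $K_{+,-}$ using \eqref{eq:IntegralRepresentationProducts}, observe that the $x$-dependent pieces of $P_{1}$ cancel to leave $(s_{1}-s_{2})/\sigma^{2}$, bound $\lvert s_{1}-s_{2}\rvert\leq 2$ on $[-1,1]^{2}$, and recombine with \eqref{eq:IntegralRepresentationProducts0}. The only difference is bookkeeping: you factor out the constant $2$ before integrating, while the paper carries it inside $\mathrm{I}(x)$ and the polynomial $Q$; the computation and the resulting bound are identical.
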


\begin{proof}
In Lemma~\ref{Lemma:DerivativesKappa} we prove that, for all $x\in\mathbb{R}$,
\begin{equation}
\label{eq:KeyUpperBound0Kappa}
    \eta'(x) = \frac{\mathrm{I}(x)}{(g_{+}(x) + g_{-}(x))^2},
\end{equation}
where
\begin{equation}
    \mathrm{I}(x) \coloneqq 2\left(g'_{+}(x) g_{-}(x) - g_{+}(x) g'_{-}(x)\right).
\end{equation}
The integral formula in \eqref{eq:IntegralRepresentationProducts} and \eqref{eq:IntegralRepresentationProductSupports} imply that
\begin{equation}
    g'_{+}(x) g_{-}(x) = \int_{-1}^{1} \int_{-1}^{1} P_{1,0}(s;x) K_{+,-}(s;x) \mathrm{d} s.
\end{equation}
\emph{Mutatis mutandis}, we have that
\begin{equation}
    g_{+}(x) g'_{-}(x) = \int_{-1}^{1} \int_{-1}^{1} P_{0,1}(s;x) K_{+,-}(s;x) \mathrm{d} s.
\end{equation}
Thus, we have that
\begin{equation}
\label{eq:KeyUpperBound0Pre}
    \mathrm{I}(x) = \int_{-1}^{1} \int_{-1}^{1} Q(s;x) K_{+,-}(s;x) \mathrm{d} s,
\end{equation}
where $Q(s;x) = 2P_{1,0}(s;x) - 2P_{0,1}(s;x)$. By the definition of $P_{j_{1},\ldots,j_{n}}$ in \eqref{eq:DefBigP},
\begin{align}
    2P_{1,0}(s;x) &= \frac{-2}{\sigma^{2}} x + \frac{2s_{1}}{\sigma^{2}},\\
    - 2P_{0,1}(s;x) &= \frac{2}{\sigma^{2}} x + \frac{-2s_{2}}{\sigma^{2}}.
\end{align}
As a result, $\displaystyle Q(s_{1},s_{2};x) = \frac{2(s_{1}-s_{2})}{\sigma^{2}}$ and \eqref{eq:KeyUpperBound0Pre} becomes
\begin{equation}
\label{eq:KeyUpperBound0Delta}
    \mathrm{I}(x) = \int_{-1}^{1} \int_{-1}^{1} \frac{2(s_{1}-s_{2})}{\sigma^{2}} K_{+,-}(s_{1},s_{2};x) \mathrm{d} s_{1} \mathrm{d} s_{2}.
\end{equation}
Since $|s_{1} - s_{2}| \leq 2$ whenever $s_{1},s_{2}\in[-1,1]$, we have that
\begin{align}
    \lvert \mathrm{I}(x) \rvert &\leq \frac{4}{\sigma^{2}} \int_{-1}^{1} \int_{-1}^{1} K_{+,-}(s_{1},s_{2};x) \mathrm{d} s_{1} \mathrm{d} s_{2}\\
    \label{eq:KeyUpperBound0} &= \frac{4}{\sigma^2} g_{+}(x) g_{-}(x),
\end{align}
where the equality follows from \eqref{eq:IntegralRepresentationProducts0}. The lemma follows by plugging the previous inequality in \eqref{eq:KeyUpperBound0Kappa}.
\end{proof}

\begin{figure*}[t]
\normalsize
\begin{align}
    \label{eq:LongEq1} P_{3,0}(s_{1},s_{2};x) &= \frac{-x^{3} + 3s_{1}x^{2} + 3(\sigma^{2}-s_{1}^{2})x + s_{1}(s_{1}^{2} - 3\sigma^{2})}{\sigma^{6}}\\
    P_{2,1}(s_{1},s_{2};t) &= \frac{-x^{3} + (2s_{1}+s_{2})x^{2} + (\sigma^{2} - s_{1}^{2} - 2s_{1}s_{2})x + s_{2}(s_{1}^{2}-\sigma^{2})}{\sigma^{6}}\\
    -P_{1,2}(s_{1},s_{2};t) &= \frac{x^{3} -(s_{1}+2s_{2})x^{2} - (\sigma^{2} - 2s_{1}s_{2} - s_{2}^{2})x - s_{1}(s_{2}^{2}-\sigma^{2})}{\sigma^{6}}\\
    \label{eq:LongEq4} -P_{0,3}(s_{1},s_{2};x) &= \frac{x^{3} - 3s_{2}x^{2} - 3(\sigma^{2}-s_{2}^{2})x - s_{2}(s_{2}^{2} - 3\sigma^{2})}{\sigma^{6}}
\end{align}
\hrulefill
\end{figure*}

Now we establish a similar upper bound for $\eta''$.

\begin{lemma}
\label{Lemma:KeyUpperBound2}
If $\mathrm{Supp}(f_{\pm}) \subset [-1,1]$, then, for all $x\in\mathbb{R}$,
\begin{equation}
    \lvert \eta''(x) \rvert \leq \frac{8}{\sigma^{4}} \frac{g_{+}(x)g_{-}(x)}{\left(g_{+}(x)+g_{-}(x)\right)^{2}}.
\end{equation}
\end{lemma}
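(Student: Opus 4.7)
The plan is to adapt the strategy of Lemma~\ref{Lemma:KeyUpperBound1} from $\eta'$ to $\eta''$, now tracking three convolutions and exploiting more intricate algebraic cancellations. First, combining formula \eqref{eq:Kappa1} for $\eta''$ with formula \eqref{eq:Kappa0} for $\eta'$ over the common denominator $G^{3}$, where $G = g_{+} + g_{-}$, gives $\eta''(x) = \mathrm{J}(x)/G(x)^{3}$ with
$$\mathrm{J} = 2(g_+'' g_- - g_+ g_-'')(g_+ + g_-) - 4(g_+' g_- - g_+ g_-')(g_+' + g_-').$$
Upon expansion, $\mathrm{J}$ is a sum of eight monomials $g_{\delta_1}^{(j_1)} g_{\delta_2}^{(j_2)} g_{\delta_3}^{(j_3)}$ with $j_1+j_2+j_3=2$, which split cleanly into four terms of signature $(++-)$ (two $+$'s, one $-$) and four of signature $(+--)$. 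Each monomial admits the triple-integral representation \eqref{eq:IntegralRepresentationProducts}, integrated against $K_{+,+,-}$ or $K_{+,-,-}$ respectively.

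The central algebraic step, analogous to producing the factor $2(s_1-s_2)/\sigma^{2}$ in the proof of Lemma~\ref{Lemma:KeyUpperBound1}, is to show that after symmetrizing the integrand in the two like-signed variables (permissible since $K_{+,+,-}$ is symmetric in $s_1, s_2$, and $K_{+,-,-}$ in $s_2, s_3$), the $(++-)$ integrand collapses to
$$\frac{1}{\sigma^{4}}\bigl[(s_3-s_1)^{2} + (s_3-s_2)^{2} - 4(s_3-s_1)(s_3-s_2)\bigr],$$
which is independent of $x$. Setting $u_i = x - s_i$ and plugging in $P_0, P_1, P_2$ from Lemma~\ref{Lemma:DerivativesTildeG}, the integrand becomes $\sigma^{-4}[u_1^{2} + u_2^{2} - 2u_3^{2} - 4u_1 u_2 + 2u_3(u_1 + u_2)]$, and reparametrizing $u_i = u_3 + (s_3-s_i)$ causes every $u_3^{2}$ and $u_3(u_1+u_2)$ coefficient to cancel exactly, owing to the $+2, -2, -4, +4$ coefficients in $\mathrm{J}$. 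The analogous computation for $(+--)$ yields the negative of the same form with the role of $s_3$ played by $s_1$.

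To bound the displayed polynomial by $8$ on $[-1,1]^{3}$, the key identity is the sum-of-squares decomposition
$$(s_3-s_1)^{2}+(s_3-s_2)^{2}-4(s_3-s_1)(s_3-s_2) = -2\Bigl(s_3-\tfrac{s_1+s_2}{2}\Bigr)^{2} + \tfrac{3}{2}(s_1-s_2)^{2},$$
whose first summand lies in $[-8,0]$ on $[-1,1]^{3}$ (attaining $-8$ at $s_1=s_2=-1,\,s_3=1$) and second in $[0,6]$, so the whole expression lies in $[-8,6]$ and has absolute value at most $8$. Integrating this pointwise bound against $K_{+,+,-}$ and $K_{+,-,-}$ and using \eqref{eq:IntegralRepresentationProducts0} gives $|\mathrm{J}(x)| \leq 8\sigma^{-4}(g_+^{2} g_- + g_+ g_-^{2}) = 8\sigma^{-4} g_+ g_- G$; dividing by $G^{3}$ yields the lemma. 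The main obstacle is the precise cancellation pattern: the naive estimate $|w_1^{2}+w_2^{2}-4w_1 w_2| \leq 24$ over an unconstrained $[-2,2]^{2}$ would yield the weaker constant $24/\sigma^{4}$, so one must exploit the joint constraint $s_1, s_2, s_3 \in [-1,1]$ (rather than coordinate-wise bounds on $w_1, w_2$) to obtain the tight $8$.
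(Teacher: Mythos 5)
Your proposal is correct and follows essentially the same route as the paper: the same expansion of $\eta''$ over the common denominator $G^{3}$ into eight cubic monomials split by signature, the same triple-integral representation from \eqref{eq:IntegralRepresentationProducts}, and the same crucial observation that the $x^{2}$ and $x$ coefficients of the resulting quadratic cancel, leaving an $x$-independent form in $(s_1,s_2,s_3)$. The only point where you diverge is the final bounding of that form: the paper bounds the unsymmetrized $Q_1(s) = \tfrac{2}{\sigma^4}(s_2^2 - s_3^2 - 2s_1s_2 + 2s_1s_3)$ directly by sandwiching it between $-(s_1-s_3)^2$ and $(s_1-s_2)^2$ via two applications of AM--GM, whereas you first symmetrize over the like-signed variables (legitimate since $K_{+,+,-}$ is symmetric in $s_1,s_2$) and then use the sum-of-squares identity $(s_3{-}s_1)^2 + (s_3{-}s_2)^2 - 4(s_3{-}s_1)(s_3{-}s_2) = -2\bigl(s_3 - \tfrac{s_1+s_2}{2}\bigr)^{2} + \tfrac{3}{2}(s_1-s_2)^{2}$; both yield $|Q_1| \leq 8/\sigma^{4}$ and I have verified your SOS identity and the resulting range $[-8,6]$. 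One small imprecision in your framing: you attribute the disappearance of $x$ to the symmetrization step, but in fact the $x^{2}$ and $x$ coefficients of $Q_1$ already vanish before symmetrizing (the paper's explicit tabulation of the four $P$-polynomials shows this), so symmetrization buys you only the cleaner symmetric form that makes the SOS decomposition natural, not the $x$-independence itself.
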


\begin{proof}
In Lemma~\ref{Lemma:DerivativesKappa} we prove that\footnote{Recall that, by Lemma~\ref{Lemma:DerivativesKappa}, $\displaystyle \eta' = 2 \frac{g_{+}'g_{-} - g_{+}g_{-}'}{(g_{+} + g_{-})^2}$.}
\begin{equation}
    \eta''(x) = 2 \frac{g_{+}''g_{-} - g_{+}g_{-}''}{(g_{+} + g_{-})^2} - 4\frac{g_{+}' g_{-} - g_{+} g_{-}'}{(g_{+} + g_{-})^2} \frac{g_{+}' + g_{-}'}{g_{+} + g_{-}}.
\end{equation}
In particular, we have that $\displaystyle \eta''(x) = \frac{\mathrm{I}_{1}(x) + \mathrm{I}_{2}(x)}{(g_{+}(x) + g_{-}(x))^{3}}$ where
\begin{align}
    \mathrm{I}_{1} &= 2g_{+}g_{+}''g_{-} - 2g_{+}g_{+}g_{-}'' - 4g_{+}'g_{+}'g_{-} + 4g_{+}'g_{+}g_{-}',\\
    \mathrm{I}_{2} &= 2g_{+}''g_{-}g_{-} - 2g_{+}g_{-}g_{-}'' - 4g_{+}'g_{-}'g_{-} + 4g_{+}g_{-}'g_{-}'.
\end{align}
The integral formula in \eqref{eq:IntegralRepresentationProducts} implies that, for all $x\in\mathbb{R}$,
\begin{equation}
\label{eq:KeyUpperBound1Pre}
    \mathrm{I}_{1}(x) = \int_{-1}^{1} \int_{-1}^{1} \int_{-1}^{1} Q_{1}(s;x) K_{+,+,-}(s;x) \mathrm{d} s,
\end{equation}
where $Q_{1} = 2P_{0,2,0} - 2 P_{0,0,2} - 4 P_{1,1,0} + 4P_{1,0,1}$. By the definition of $P_{j_{1},\ldots,j_{n}}$ in \eqref{eq:DefBigP},
\begin{align}
    2P_{0,2,0}(s;x) &= \frac{2}{\sigma^{4}} x^{2} + \frac{-4s_{2}}{\sigma^{4}} x + \frac{2(s_{2}^{2}-\sigma^{2})}{\sigma^{4}},\\
    - 2 P_{0,0,2}(s;x) &= \frac{-2}{\sigma^{4}} x^{2} + \frac{4s_{3}}{\sigma^{4}} x + \frac{2(\sigma^{2}-s_{3}^{2})}{\sigma^{4}},\\
    - 4 P_{1,1,0}(s;x) &= \frac{-4}{\sigma^{4}} x^{2} + \frac{4(s_{1}+s_{2})}{\sigma^{4}} x + \frac{-4s_{1}s_{2}}{\sigma^{4}},\\
    4P_{1,0,1}(s;x) &= \frac{4}{\sigma^{4}} x^{2} + \frac{-4(s_{1}+s_{3})}{\sigma^{4}} x + \frac{4s_{1}s_{3}}{\sigma^{4}}.
\end{align}
As a result, we obtain that
\begin{equation}
    Q_{1}(s;x) = \frac{2(s_{2}^{2}-s_{3}^{2}-2s_{1}s_{2}+2s_{1}s_{3})}{\sigma^{4}}.
\end{equation}
The inequality $2s_{1}s_{3} \leq s_{1}^{2} + s_{3}^{2}$ implies that
\begin{equation}
    s_{2}^{2}-s_{3}^{2}-2s_{1}s_{2}+2s_{1}s_{3} \leq (s_{1}-s_{2})^{2}.
\end{equation}
Similarly, the inequality $2s_{1}s_{2} \leq s_{1}^{2} + s_{2}^{2}$ implies that
\begin{equation}
    s_{2}^{2}-s_{3}^{2}-2s_{1}s_{2}+2s_{1}s_{3} \geq -(s_{1}-s_{3})^{2}.
\end{equation}
In particular, $\lvert \displaystyle Q_{1}(s;x) \rvert \leq \frac{8}{\sigma^{4}}$ whenever $s_{1},s_{2},s_{3}\in[-1,1]$. Therefore, \eqref{eq:KeyUpperBound1Pre} implies that
\begin{align}
    |\mathrm{I}_{1}(x)| &\leq \frac{8}{\sigma^{4}} \int_{-1}^{1} \int_{-1}^{1} \int_{-1}^{1} K_{+,+,-}(s;x) \mathrm{d} s\\
    \label{eq:KeyUpperBound1PartI} &= \frac{8}{\sigma^{4}} g_{+}(x)g_{+}(x)g_{-}(x),
\end{align}
where the equality follows from \eqref{eq:IntegralRepresentationProducts0}. \emph{Mutatis mutandis}, it can be shown that
\begin{equation}
\label{eq:KeyUpperBound1Pre2}
    \mathrm{I}_{2}(x) = \int_{-1}^{1} \int_{-1}^{1} \int_{-1}^{1} Q_{2}(s;x) K_{+,-,-}(s;x) \mathrm{d} s,
\end{equation}
where
\begin{equation}
\label{eq:KeyUpperBound1Q2}
    Q_{2}(s;x) = \frac{2(s_{1}^{2}-s_{3}^{2}-2s_{1}s_{2}+2s_{2}s_{3})}{\sigma^{4}}.
\end{equation}
As before, \eqref{eq:KeyUpperBound1Pre2} and \eqref{eq:KeyUpperBound1Q2} imply that, for all $x\in\mathbb{R}$,
\begin{equation}
\label{eq:KeyUpperBound1PartII}
    |\mathrm{I}_{2}(x)| \leq \frac{8}{\sigma^{4}} g_{+}(x)g_{-}(x)g_{-}(x).
\end{equation}
Since $\displaystyle \eta''(x) = \frac{\mathrm{I}_{1}(x) + \mathrm{I}_{2}(x)}{(g_{+}(x) + g_{-}(x))^{3}}$, \eqref{eq:KeyUpperBound1PartI} and \eqref{eq:KeyUpperBound1PartII} imply that
\begin{equation}
    |\eta''(x)| \leq \frac{8}{\sigma^{4}} \frac{g_{+}(x)g_{-}(x)}{\left(g_{+}(x)+g_{-}(x)\right)^{2}},
\end{equation}
as required.
\end{proof}

Finally, we establish an upper bound for $\eta'''$ akin to those in the previous lemmas.

\begin{lemma}
\label{Lemma:KeyUpperBound3}
If $\mathrm{Supp}(f_{\pm}) \subset [-1,1]$, then, for all $x\in\mathbb{R}$,
\begin{equation}
\label{eq:KeyUpperBound3}
    \lvert\eta'''(x)\rvert \leq \frac{16x^{2} + 44\lvert x \rvert + 34+12\sigma^{2}}{\sigma^{6}} \frac{g_{+}(x)g_{-}(x)}{\left(g_{+}(x)+g_{-}(x)\right)^{2}}.
\end{equation}
\end{lemma}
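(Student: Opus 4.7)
\smallskip

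The plan is to imitate the strategy used for Lemmas~\ref{Lemma:KeyUpperBound1} and~\ref{Lemma:KeyUpperBound2}: combine $\eta'''$ over the common denominator $(g_{+}+g_{-})^{4}$, realize the numerator as a sum of $4$-fold integrals via the identity \eqref{eq:IntegralRepresentationProducts}, show that the polynomial prefactors cancel the cubic-in-$x$ terms, and then bound what remains pointwise for $s_{i}\in[-1,1]$. More precisely, first I would substitute \eqref{eq:Kappa0} and \eqref{eq:Kappa1} into \eqref{eq:Kappa2} to write
\begin{equation*}
\eta'''(x) \;=\; \frac{N(x)}{(g_{+}(x)+g_{-}(x))^{4}},
\end{equation*}
where $N$ is an explicit polynomial expression of total degree four in the functions $g_{\pm},g_{\pm}',g_{\pm}'',g_{\pm}'''$ obtained by clearing denominators.

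Next I would split $N$ according to the sign pattern $(\delta_{1},\delta_{2},\delta_{3},\delta_{4})\in\{+,-\}^{4}$ of its $g$-factors. Each product of four $g$-functions and their derivatives can, by \eqref{eq:IntegralRepresentationProducts}, be written as a $4$-fold integral over $[-1,1]^{4}$ with integrand $P_{j_{1},j_{2},j_{3},j_{4}}(s;x)\,K_{\delta_{1},\delta_{2},\delta_{3},\delta_{4}}(s;x)$. Collecting all contributions with the same sign pattern yields, for each pattern, a single integral of the form $\int Q_{\delta}(s;x)\,K_{\delta}(s;x)\,\mathrm{d}s$ for a polynomial $Q_{\delta}$ in $s_{1},\ldots,s_{4},x$. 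The key structural step is to verify that the two mono-sign patterns $(+,+,+,+)$ and $(-,-,-,-)$ give $Q_{\delta}\equiv 0$ identically, reflecting the fact that $\eta^{(j)}$ vanishes when $g_{+}=g_{-}$; the same mechanism produced the $(s_{1}-s_{2})/\sigma^{2}$ cancellation in Lemma~\ref{Lemma:KeyUpperBound1} and the analogous cancellations in Lemma~\ref{Lemma:KeyUpperBound2}.

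For the three surviving mixed patterns $(+,+,+,-),(+,+,-,-),(+,-,-,-)$ (up to permutation), I would expand $Q_{\delta}(s;x)$ using the explicit formulas \eqref{eq:LongEq1}--\eqref{eq:LongEq4} together with the lower-order polynomials $P_{0},P_{1},P_{2}$ from Lemma~\ref{Lemma:DerivativesTildeG}. The claim is that after the sums are collected, the cubic-in-$x$ terms in $Q_{\delta}$ cancel identically, so that $Q_{\delta}(s;x)$ is a polynomial of degree at most $2$ in $x$ with coefficients that are polynomials in $s_{1},\ldots,s_{4}$ of bounded degree and coefficients scaled by $1/\sigma^{6}$ (with possibly an extra $1/\sigma^{4}$ piece coming from the $\sigma^{2}$ terms inside $P_{2},P_{3}$). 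Using $|s_{i}|\leq 1$, I would then bound $|Q_{\delta}(s;x)|$ by an expression of the form $(\alpha_{\delta}x^{2}+\beta_{\delta}|x|+\gamma_{\delta}+\delta_{\delta}\sigma^{2})/\sigma^{6}$, and the three $\alpha$'s, $\beta$'s, $\gamma$'s, $\delta$'s should add up to $16$, $44$, $34$, $12$ respectively.

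Finally I would invoke \eqref{eq:IntegralRepresentationProducts0} to identify each surviving integral as $g_{+}^{3}g_{-}$, $g_{+}^{2}g_{-}^{2}$, or $g_{+}g_{-}^{3}$ (times the maximum of $|Q_{\delta}|$), and combine them through the identity $g_{+}^{3}g_{-}+2g_{+}^{2}g_{-}^{2}+g_{+}g_{-}^{3}=g_{+}g_{-}(g_{+}+g_{-})^{2}$ to obtain \eqref{eq:KeyUpperBound3}. The main obstacle will be purely algebraic bookkeeping: verifying the vanishing of the mono-sign contributions and the cancellation of the cubic-in-$x$ terms in each mixed pattern requires carefully tracking many products coming from $(g_{+}+g_{-})^{2}$ in the first summand of \eqref{eq:Kappa2}, from $\eta'(g_{+}''+g_{-}'')$ in the second, and from $\eta''(g_{+}'+g_{-}')$ in the third, and then matching coefficients to land on exactly $16,44,34,12$. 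The guiding heuristic, however, is already present in the proofs of Lemmas~\ref{Lemma:KeyUpperBound1} and~\ref{Lemma:KeyUpperBound2}: the $k$-th derivative of $\eta$ should cancel all terms involving only $g_{+}$ (or only $g_{-}$) and lower the effective degree in $x$ by the number of cancellations built into the $\eta^{(j)}$ formulas.
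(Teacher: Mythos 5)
Your approach is genuinely different from the paper's. The paper keeps the three-term decomposition from Lemma~\ref{Lemma:DerivativesKappa}: it bounds only the first piece $\mathrm{I}/(g_{+}+g_{-})^{2}$, with $\mathrm{I}=2(g_{+}'''g_{-}+g_{+}''g_{-}'-g_{+}'g_{-}''-g_{+}g_{-}''')$, by a \emph{2-fold} integral, showing via \eqref{eq:LongEq1}--\eqref{eq:LongEq4} that the cubic-in-$x$ terms of $Q=P_{3,0}+P_{2,1}-P_{1,2}-P_{0,3}$ cancel; it then handles the remaining two terms $-2\eta'(g_{+}''+g_{-}'')/(g_{+}+g_{-})$ and $-3\eta''(g_{+}'+g_{-}')/(g_{+}+g_{-})$ not by any integral representation but by combining the already-proven Lemmas~\ref{Lemma:KeyUpperBound1} and~\ref{Lemma:KeyUpperBound2} with the pointwise estimates $\lvert g_{\pm}''\rvert\leq\frac{x^{2}+2\lvert x\rvert+1+\sigma^{2}}{\sigma^{4}}g_{\pm}$ and $\lvert g_{\pm}'\rvert\leq\frac{\lvert x\rvert+1}{\sigma^{2}}g_{\pm}$. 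You instead clear all denominators to $(g_{+}+g_{-})^{4}$, represent the resulting degree-four numerator as a sum of \emph{4-fold} integrals, and organize the cancellations by sign pattern. Both routes are valid, but they are not the same proof. Your route eliminates the slack coming from the triangle inequality and from routing through the $\eta'$, $\eta''$ bounds, so — as the paper remarks right after this lemma — ``slightly better constants'' result from ``avoiding the use of Lemmas~\ref{Lemma:KeyUpperBound1} and \ref{Lemma:KeyUpperBound2}'', at the cost of considerably more bookkeeping; your prediction that the coefficients should add up to exactly $16,44,34,12$ is therefore off (they will be smaller), though the claimed inequality still follows. Two smaller points worth fixing in your argument: the reason the mono-sign contributions $(+,+,+,+)$ and $(-,-,-,-)$ vanish identically is not ``$\eta^{(j)}$ vanishes when $g_{+}=g_{-}$'' (that relation couples all sign patterns) but rather that $\eta\equiv 1$ when $g_{-}\equiv 0$ and $\eta\equiv -1$ when $g_{+}\equiv 0$, forcing $\eta'''\equiv 0$ in each case. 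And when you merge different monomials of the same sign pattern into a single $4$-fold integral, remember to symmetrize over the integration variables so that the $P_{j_{1},\ldots,j_{4}}(s;x)$ factors line up; the paper sidesteps this because it never passes to a $4$-fold representation.
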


\begin{proof}
In Lemma~\ref{Lemma:DerivativesKappa} we prove that
\begin{equation}
\label{eq:KeyUpperBoundEtaDecomposition}
    \eta''' = \frac{\mathrm{I}}{(g_{+} + g_{-})^2} - 2 \eta' \frac{g_{+}'' + g_{-}''}{g_{+} + g_{-}} - 3 \eta'' \frac{g_{+}' + g_{-}'}{g_{+} + g_{-}},
\end{equation}
where
\begin{equation}
    I = 2 (g_{+}'''g_{-} + g_{+}''g_{-}' - g_{+}'g_{-}'' - g_{+}g_{-}''').
\end{equation}
The integral formula in \eqref{eq:IntegralRepresentationProducts} implies that, for all $x\in\mathbb{R}$,
\begin{equation}
\label{eq:KeyUpperBound3Pre}
    \mathrm{I}(x) = \int_{-1}^{1} \int_{-1}^{1} \int_{-1}^{1} Q(s;x) K_{+,+,-}(s;x) \mathrm{d} s,
\end{equation}
where $Q = P_{3,0} + P_{2,1} - P_{1,2} - P_{0,3}$. By the definition of $P_{j_{1},\ldots,j_{n}}$ in \eqref{eq:DefBigP} and equations~\eqref{eq:LongEq1} -- \eqref{eq:LongEq4}, we conclude that, for all $x\in\mathbb{R}$,
\begin{align}
    Q(s_{1},s_{2};x) &= \frac{4(s_{1}-s_{2})x^{2} + 4(s_{2}^{2}-s_{1}^{2})x}{\sigma^{6}} + \frac{(s_{1}-s_{2})(s_{1}+s_{2})^{2}-2(s_{1}-s_{2})\sigma^{2}}{\sigma^{6}}.
\end{align}
In particular, for $s_{1},s_{2}\in[-1,1]$,
\begin{equation}
    \lvert Q(s;x) \rvert \leq \frac{8x^{2}+4\lvert x \rvert + 2 + 4\sigma^{2}}{\sigma^{6}}.
\end{equation}
Therefore, \eqref{eq:KeyUpperBound3Pre} implies that
\begin{align}
    \lvert \mathrm{I}(x) \rvert &\leq \frac{8x^{2}+4\lvert x \rvert + 2 + 4\sigma^{2}}{\sigma^{6}} \int_{-1}^{1} \int_{-1}^{1} K_{+,-}(s;t) \mathrm{d}s\\
    \label{eq:KeyUpperBound3I} &= \frac{8x^{2}+4\lvert x \rvert + 2 + 4\sigma^{2}}{\sigma^{6}} g_{+}(x) g_{-}(x),
\end{align}
where the equality follows from \eqref{eq:IntegralRepresentationProducts0}.

Lemma~\ref{Lemma:DerivativesTildeG} and the fact that $\mathrm{Supp}(f_{\pm}) \subset [-1,1]$ imply that
\begin{equation}
    \lvert g_{\pm}''(x) \rvert \leq \frac{x^{2}+2\lvert x \rvert + 1 + \sigma^{2}}{\sigma^{4}} g_{\pm}(x).
\end{equation}
As a result, we obtain that
\begin{equation}
    \left\lvert 2 \eta'(x) \frac{g_{+}''(x) + g_{-}''(x)}{g_{+}(x) + g_{-}(x)} \right\rvert \leq \frac{2x^{2}+4\lvert x \rvert + 2+2\sigma^{2}}{\sigma^{4}} \lvert \eta'(x) \rvert.
\end{equation}
Therefore, by Lemma~\ref{Lemma:KeyUpperBound1},
\begin{align}
    \label{eq:KeyUpperBound3II} \left\lvert 2 \eta'(x) \frac{g_{+}''(x) + g_{-}''(x)}{g_{+}(x) + g_{-}(x)} \right\rvert &\leq \frac{8x^{2}+16\lvert x \rvert + 8+8\sigma^{2}}{\sigma^{6}} \frac{g_{+}(x)g_{-}(x)}{\left(g_{+}(x)+g_{-}(x)\right)^{2}}.
\end{align}

Lemma~\ref{Lemma:DerivativesTildeG} and the fact that $\mathrm{Supp}(f_{\pm}) \subset [-1,1]$ imply that
\begin{equation}
    \lvert g_{\pm}'(x)\rvert \leq \frac{\lvert x \rvert + 1}{\sigma^{2}} g_{\pm}(x).
\end{equation}
Therefore, we conclude that
\begin{align}
    \left\lvert 3\eta''(x) \frac{g_{+}'(x) + g_{-}'(x)}{g_{+}(x) + g_{-}(x)} \right\rvert &\leq \frac{3(\lvert x \rvert + 1)}{\sigma^{2}} \lvert \eta''(x) \rvert\\
    \label{eq:KeyUpperBound3III} &\leq \frac{24(\lvert x \rvert + 1)}{\sigma^{6}} \frac{g_{+}(x)g_{-}(x)}{\left(g_{+}(x)+g_{-}(x)\right)^{2}},
\end{align}
where the last inequality follows from Lemma~\ref{Lemma:KeyUpperBound2}. By plugging \eqref{eq:KeyUpperBound3I}, \eqref{eq:KeyUpperBound3II} and \eqref{eq:KeyUpperBound3III} in \eqref{eq:KeyUpperBoundEtaDecomposition}, the result follows.
\end{proof}

It is possible to obtain sightly better constants than those in \eqref{eq:KeyUpperBound3} by avoiding the use of Lemmas~\ref{Lemma:KeyUpperBound1} and \ref{Lemma:KeyUpperBound2}. However, the complexity of the proof increases considerably and the benefit is marginal given that the $L^{1}$-norm of $\eta'''$ only appears inside a logarithm.

\begin{proof}[\textbf{Proof of Theorem~\ref{Theorem:MainTheoremPostProcessing}}]
By Theorem~\ref{Theorem:CEtaL1NormsBound}, we have that
\begin{equation}
\label{eq:ThmAdditiveMechanismProofCEta0}
    C_{\eta} \leq \frac{2\sqrt{2}}{\sqrt{\pi}} \left(1 + \frac{1}{2}\log\left(\lVert \eta'\rVert_1 \lVert \eta'''\rVert_1\right) - \log(\lVert \eta'' \rVert_{1})\right) \lVert \eta''\rVert_{1}.
\end{equation}
Recall the definition of $M_{\alpha}$ in \eqref{eq:DefMomentsMp}. By Lemmas~\ref{Lemma:KeyUpperBound1} -- \ref{Lemma:KeyUpperBound3}, we have that
\begin{align}
    \lVert \eta'\rVert_1 &\leq \frac{4M_{0}}{\sigma^{2}},\\
    \lVert \eta''\rVert_1 &\leq \frac{8M_{0}}{\sigma^{4}},\\
    \lVert \eta'''\rVert_1 &\leq \frac{16M_{2} + 44M_{1} + (34+12\sigma^{2})M_{0}}{\sigma^{6}}.
\end{align}
As a result, for all $\sigma>0$,
\begin{align}
    \label{eq:ThmAdditiveMechanismProofCEta1} C_{\eta} &\leq \frac{16\sqrt{2}M_{0}}{\sqrt{\pi}\sigma^{4}} \left(1 + \frac{1}{2} \log\left(\frac{M}{\sigma^{8}}\right)\right) - \frac{2\sqrt{2}}{\sqrt{\pi}} \log(\lVert \eta'' \rVert_{1}) \lVert \eta''\rVert_{1},
\end{align}
where
\begin{equation}
    M \coloneqq 64M_{2}M_{0} + 176M_{1}M_{0} + (136 + 48\sigma^{2})M_{0}^{2}.
\end{equation}
Since $-\log(z)z \leq 1/e$ for all $z\in[0,\infty)$, \eqref{eq:ThmAdditiveMechanismProofCEta0} implies that, for all $\sigma>0$,
\begin{equation}
    C_{\eta} \leq \frac{2\sqrt{2}}{e\sqrt{\pi}} + \frac{16\sqrt{2}M_{0}}{\sqrt{\pi}\sigma^{4}} \left(1 + \frac{1}{2} \log\left(\frac{M}{\sigma^{8}}\right)\right).
\end{equation}
It is straightforward to verify that $z \mapsto -\log(z)z$ is increasing over $[0,1/e]$. Thus, if $\sqrt[4]{8eM_{0}} \leq \sigma$, \eqref{eq:ThmAdditiveMechanismProofCEta1} implies that
\begin{equation}
    C_{\eta} \leq \frac{16\sqrt{2}M_{0}}{\sqrt{\pi}\sigma^{4}} \left(1 + \frac{1}{2} \log\left(\frac{M}{64M_{0}^{2}}\right)\right).
\end{equation}
After some manipulations, \eqref{eq:MainTheoremPostProcessingLargeSigma} follows.
\end{proof}

\section{Proof of Proposition~\ref{Proposition:MpSeparated}}
\label{Appendix:ProofMPSeparated}

Recall that, for each $x\in\mathbb{R}$,
\begin{equation}
    g^{\sigma}_{\pm}(x) \coloneqq \lambda_{\pm} (f_{\pm} \ast K_{\sigma})(x).
\end{equation}
The next lemma provides upper and lower bounds for $g^{\sigma}_{\pm}(x)$ under the assumptions of Proposition~\ref{Proposition:MpSeparated}.

\begin{lemma}
\label{Lemma:BoundsTildegpmSepparated}
In the context of Proposition~\ref{Proposition:MpSeparated}, for all $x\in\mathbb{R}$,
\begin{align*}
    & \lambda_{+} \min_{s\in[x-1,x-\gamma]} K_{\sigma}(s) \leq g^{\sigma}_{+}(x) \leq \lambda_{+} \max_{s\in[x-1,x-\gamma]} K_{\sigma}(s),\\
    & \lambda_{-} \min_{s\in[x+\gamma,x+1]} K_{\sigma}(s) \leq g^{\sigma}_{-}(x) \leq \lambda_{-} \max_{s\in[x+\gamma,x+1]} K_{\sigma}(s).
\end{align*}
\end{lemma}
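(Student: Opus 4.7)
The plan is straightforward and relies only on rewriting the convolution and bounding the Gaussian kernel by its extrema over the region of integration.

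First I would expand the definition:
\begin{equation}
g^{\sigma}_{+}(x) = \lambda_{+} \int_{\mathbb{R}} f_{+}(s)\, K_{\sigma}(x-s)\, \mathrm{d}s.
\end{equation}
Since, by the hypothesis of Proposition~\ref{Proposition:MpSeparated}, $\mathrm{Supp}(f_{+}) \subset [\gamma,1]$, the integral reduces to $\lambda_{+} \int_{\gamma}^{1} f_{+}(s) K_{\sigma}(x-s)\, \mathrm{d}s$. I would then perform the change of variable $u = x-s$, which maps $s\in[\gamma,1]$ bijectively onto $u\in[x-1,x-\gamma]$. Thus $K_{\sigma}(x-s)$ only takes values $K_{\sigma}(u)$ with $u\in[x-1,x-\gamma]$, and can be bounded above and below by $\max_{u\in[x-1,x-\gamma]} K_{\sigma}(u)$ and $\min_{u\in[x-1,x-\gamma]} K_{\sigma}(u)$, respectively. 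Pulling these constants out of the integral and using $\int_{\gamma}^{1} f_{+}(s)\, \mathrm{d}s = 1$ yields the stated bounds for $g^{\sigma}_{+}(x)$.

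The bounds for $g^{\sigma}_{-}(x)$ follow by the same argument, with $\mathrm{Supp}(f_{-}) \subset [-1,-\gamma]$ giving $s\in[-\gamma,1]\mapsto x-s\in[x+\gamma,x+1]$ and using $\int_{-1}^{-\gamma} f_{-}(s)\, \mathrm{d}s = 1$.

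There is no real obstacle: both conclusions are immediate consequences of the mean-value-type bound $\inf K_{\sigma} \cdot \int f_{\pm} \leq \int f_{\pm} K_{\sigma} \leq \sup K_{\sigma} \cdot \int f_{\pm}$ applied on the support of $f_{\pm}$, together with the fact that $f_{\pm}$ is a probability density. The only minor point worth mentioning is that the extrema are attained because $K_{\sigma}$ is continuous and the intervals $[x-1,x-\gamma]$ and $[x+\gamma,x+1]$ are compact, so the min and max in the statement are well defined.
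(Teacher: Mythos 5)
Your proof is correct and follows essentially the same route as the paper: restrict the convolution integral to the support of $f_{\pm}$, bound $K_{\sigma}(x-s)$ by its extrema over that range, and use that $f_{\pm}$ integrates to one. (Minor slip: for $g_{-}^{\sigma}$ the support is $[-1,-\gamma]$, not $[-\gamma,1]$, but this does not affect the resulting interval $[x+\gamma,x+1]$.)
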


\begin{proof}
By assumption $\mathrm{Supp}(f_{+}) \subset [\gamma,1]$, thus
\begin{equation}
    g^{\sigma}_{+}(x) = \lambda_{+} \int_{\gamma}^{1} f_{+}(s) K_{\sigma}(x-s) \mathrm{d}s.
\end{equation}
Therefore, for all $x\in\mathbb{R}$,
\begin{align}
    \lambda_{+} \left(\min_{s\in[\gamma,1]} K_{\sigma}(x-s)\right) \int_{\gamma}^{r} f_{+}(s) \mathrm{d}s \leq g^{\sigma}_{+}(x) \leq \lambda_{+} \left(\max_{s\in[\gamma,1]} K_{\sigma}(x-s)\right) \int_{\gamma}^{1} f_{+}(s) \mathrm{d}s.
\end{align}
Since $\displaystyle \int_{\gamma}^{r} f_{+}(s) \mathrm{d}s = 1$, the inequalities for $g_{+}^{\sigma}$ follow. The inequalities for $g_{-}^{\sigma}$ are proved \emph{mutatis mutandis}.
\end{proof}

The next lemma provides an upper bound for $g^{\sigma}_{\pm}/(g^{\sigma}_{+} + g^{\sigma}_{-})$.

\begin{lemma}
\label{Lemma:UpperBoundQuotientSepparated}
In the context of Proposition~\ref{Proposition:MpSeparated},
\begin{itemize}
    \item $\displaystyle \frac{g^{\sigma}_{+}(x)}{g^{\sigma}_{+}(x) + g^{\sigma}_{-}(x)} \leq \frac{\lambda_{+}}{\lambda_{-}} e^{-2\gamma \lvert x \rvert/\sigma^{2}}$, $x < -1$;
    \item $\displaystyle \frac{g^{\sigma}_{-}(x)}{g^{\sigma}_{+}(x) + g^{\sigma}_{-}(x)} \leq \frac{\lambda_{-}}{\lambda_{+}} e^{-2\gamma \vert x \rvert/\sigma^2}$, $x > 1$.
\end{itemize}
\end{lemma}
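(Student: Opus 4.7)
The strategy is to bound $g_{+}^{\sigma}(x)/(g_{+}^{\sigma}(x)+g_{-}^{\sigma}(x))$ by the simpler ratio $g_{+}^{\sigma}(x)/g_{-}^{\sigma}(x)$, and then use Lemma~\ref{Lemma:BoundsTildegpmSepparated} to sandwich numerator and denominator by values of the Gaussian kernel $K_{\sigma}$ evaluated at carefully chosen points. The ratio of two Gaussian values then collapses to a clean exponential in $x$.

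Concretely, since $g_{\pm}^{\sigma}\geq 0$, we have
\begin{equation}
    \frac{g_{+}^{\sigma}(x)}{g_{+}^{\sigma}(x)+g_{-}^{\sigma}(x)} \leq \frac{g_{+}^{\sigma}(x)}{g_{-}^{\sigma}(x)}.
\end{equation}
For $x<-1$, the upper-bound interval for $g_{+}^{\sigma}$ is $[x-1,x-\gamma]$ and the lower-bound interval for $g_{-}^{\sigma}$ is $[x+\gamma,x+1]$; since $\gamma\in(0,1)$, both intervals lie entirely in $(-\infty,0)$. Because $K_{\sigma}$ is an even, strictly decreasing function of $|s|$, its maximum over $[x-1,x-\gamma]$ is attained at the endpoint closest to $0$, namely $s=x-\gamma$, and its minimum over $[x+\gamma,x+1]$ is attained at the endpoint farthest from $0$, namely $s=x+\gamma$. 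Plugging into Lemma~\ref{Lemma:BoundsTildegpmSepparated} therefore yields
\begin{equation}
    \frac{g_{+}^{\sigma}(x)}{g_{-}^{\sigma}(x)} \leq \frac{\lambda_{+} K_{\sigma}(x-\gamma)}{\lambda_{-} K_{\sigma}(x+\gamma)} = \frac{\lambda_{+}}{\lambda_{-}} \exp\!\left(\frac{(x+\gamma)^{2}-(x-\gamma)^{2}}{2\sigma^{2}}\right) = \frac{\lambda_{+}}{\lambda_{-}} \exp\!\left(\frac{2\gamma x}{\sigma^{2}}\right).
\end{equation}
Since $x<-1<0$ implies $2\gamma x/\sigma^{2}=-2\gamma|x|/\sigma^{2}$, the first claim follows.

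For the second claim, an entirely analogous argument applies with the roles swapped. For $x>1$, the upper-bound interval for $g_{-}^{\sigma}$ is $[x+\gamma,x+1]\subset(0,\infty)$ with maximum of $K_{\sigma}$ at $s=x+\gamma$, and the lower-bound interval for $g_{+}^{\sigma}$ is $[x-1,x-\gamma]\subset(0,\infty)$ with minimum of $K_{\sigma}$ at $s=x-\gamma$. The same computation then gives $g_{-}^{\sigma}(x)/g_{+}^{\sigma}(x)\leq(\lambda_{-}/\lambda_{+})\exp(-2\gamma x/\sigma^{2})=(\lambda_{-}/\lambda_{+})e^{-2\gamma|x|/\sigma^{2}}$, and using $g_{-}^{\sigma}/(g_{+}^{\sigma}+g_{-}^{\sigma})\leq g_{-}^{\sigma}/g_{+}^{\sigma}$ finishes the proof.

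The only subtle step is correctly identifying where $K_{\sigma}$ attains its extrema on each interval; once the sign of the endpoints is pinned down and the monotonicity of $K_{\sigma}$ in $|s|$ is invoked, the remainder is a one-line Gaussian algebra exercise. No further ideas are needed.
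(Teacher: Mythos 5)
Your proof is correct and follows essentially the same route as the paper: drop $g_{+}^{\sigma}$ from the denominator, invoke Lemma~\ref{Lemma:BoundsTildegpmSepparated} to sandwich the ratio by Gaussian kernel values, identify the relevant endpoints via monotonicity of $K_{\sigma}$, and reduce to a one-line Gaussian computation. The only cosmetic difference is that you phrase the monotonicity as ``decreasing in $\lvert s\rvert$'' whereas the paper states that $K_{\sigma}$ is increasing on $(-\infty,0)$; these pin down the same endpoints.
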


\begin{proof}
By Lemma~\ref{Lemma:BoundsTildegpmSepparated}, for all $x\in\mathbb{R}$,
\begin{equation}
    \frac{g^{\sigma}_{+}(x)}{g^{\sigma}_{+}(x) + g^{\sigma}_{-}(x)} \leq \frac{g^{\sigma}_{+}(x)}{g^{\sigma}_{-}(x)} \leq \frac{\lambda_{+} \max_{s\in[x-1,x-\gamma]} K_{\sigma}(s)}{\lambda_{-} \min_{s\in[x+\gamma,x+1]} K_{\sigma}(s)}.
\end{equation}
When $x < -1$, we have that
\begin{equation}
    [x-1,x-\gamma],[x+\gamma,x+1]\subset(-\infty,0).
\end{equation}
Since $K_{\sigma}$ is increasing on $(-\infty,0)$,
\begin{equation}
    \frac{g^{\sigma}_{+}(x)}{g^{\sigma}_{+}(x) + g^{\sigma}_{-}(x)} \leq \frac{\lambda_{+}}{\lambda_{-}} \frac{K_{\sigma}(x-\gamma)}{K_{\sigma}(x+\gamma)} = \frac{\lambda_{+}}{\lambda_{-}} e^{-2\gamma \lvert x \rvert/\sigma^{2}}.
\end{equation}
\emph{Mutatis mutandis}, it can be shown that, for $x > 1$, we have the inequality $\displaystyle \frac{g^{\sigma}_{-}(x)}{g^{\sigma}_{+}(x) + g^{\sigma}_{-}(x)} \leq \frac{\lambda_{-}}{\lambda_{+}} e^{-2\gamma \lvert x \rvert/\sigma^2}$.
\end{proof}

Now we are in position to prove Proposition~\ref{Proposition:MpSeparated}.

\begin{proof}[\textbf{Proof of Proposition~\ref{Proposition:MpSeparated}}]
By definition, we have that
\begin{equation}
    M_{\alpha}^{\sigma} = \left(\int_{-\infty}^{-1} + \int_{-1}^{1} + \int_{1}^{\infty}\right) \lvert x \rvert^{p} \frac{g^{\sigma}_{+}(x)g^{\sigma}_{-}(x)}{\left(g^{\sigma}_{+}(x)+g^{\sigma}_{-}(x)\right)^{2}} \mathrm{d} x.
\end{equation}
Lemma~\ref{Lemma:UpperBoundQuotientSepparated} and the trivial upper bound $\displaystyle \frac{g^{\sigma}_{\pm}(x)}{g^{\sigma}_{+}(x)+g^{\sigma}_{-}(x)} \leq 1$ imply that
\begin{equation}
    M_{\alpha}^{\sigma} \leq 2 + \frac{\lambda_{+}^{2}+\lambda_{-}^{2}}{\lambda_{+}\lambda_{-}} \int_{1}^{\infty} x^{p} \exp\left\{-2\gamma x/\sigma^{2}\right\} \mathrm{d} x.
\end{equation}
By the formulas,
\begin{align}
    \int e^{-\beta x} \mathrm{d}x &= - \frac{e^{-\beta x}}{\beta},\\
    \int x e^{-\beta x} \mathrm{d}x &= - \frac{e^{-\beta x}}{\beta^{2}}(\beta x + 1),\\
    \int x^{2} e^{-\beta x} \mathrm{d}x &= - \frac{e^{-\beta x}}{\beta^{3}}(\beta^{2}x^{2} + 2\beta x + 2),
\end{align}
the result follows.
\end{proof}

\section{Proof of Proposition~\ref{Proposition:MpOverlapping}}
\label{Appendix:ProofMPOverlapping}

Recall that, for each $x\in\mathbb{R}$, we define
\begin{equation}
    g^{\sigma}_{\pm}(x) := \lambda_{\pm} (f_{\pm} \ast K_{\sigma})(x).
\end{equation}
The next lemma provides upper and lower bounds for $g_{\pm}(x)$ under the assumptions of Proposition~\ref{Proposition:MpOverlapping}.

\begin{lemma}
\label{Lemma:BoundsTildegpmOverlapping}
In the context of Proposition~\ref{Proposition:MpOverlapping}, for all $x\in\mathbb{R}$,
\begin{align*}
    & \lambda_{+} \delta_{+} \min_{s\in[x-1,x-\gamma]} K_{\sigma}(s) \leq g^{\sigma}_{+}(x) \leq \lambda_{+} \max_{s\in[x-1,x+\gamma_{0}]} K_{\sigma}(s),\\
    & \lambda_{-} \delta_{-} \min_{s\in[x+\gamma,x+1]} K_{\sigma}(s) \leq g^{\sigma}_{-}(x) \leq \lambda_{-} \max_{s\in[x-\gamma_{0},x+1]} K_{\sigma}(s).
\end{align*}
\end{lemma}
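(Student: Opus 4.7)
The plan is to mimic closely the proof of Lemma~\ref{Lemma:BoundsTildegpmSepparated}, which handled the separated case, and extend it to the overlapping situation by (i) using the \emph{full} permitted support $(-\gamma_{0},1]$ of $f_{+}$ (resp.\ $[-1,\gamma_{0})$ of $f_{-}$) to get the upper bounds, and (ii) restricting the convolution integral to the \emph{guaranteed} subinterval $[\gamma,1]$ of $\mathrm{Supp}(f_{+})$ (resp.\ $[-1,-\gamma]$ of $\mathrm{Supp}(f_{-})$) to get the lower bounds. The quantities $\delta_{\pm}$ appear precisely because, under overlap, $f_{\pm}$ no longer integrates to $1$ on the guaranteed subinterval: only $\int_{\gamma}^{1}f_{+}=\delta_{+}$ and $\int_{-1}^{-\gamma}f_{-}=\delta_{-}$.

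Concretely, starting from
\begin{equation*}
    g_{+}^{\sigma}(x) \;=\; \lambda_{+}\int f_{+}(s)\,K_{\sigma}(x-s)\,\mathrm{d}s,
\end{equation*}
I would first prove the upper bound by observing that the hypothesis $\mathrm{Supp}(f_{+})\subset(-\gamma_{0},1]$ confines the integrand to $s\in(-\gamma_{0},1]$; pulling the Gaussian out of the integral via
\begin{equation*}
    K_{\sigma}(x-s) \;\leq\; \max_{s\in[-\gamma_{0},1]} K_{\sigma}(x-s) \;=\; \max_{u\in[x-1,\,x+\gamma_{0}]} K_{\sigma}(u)
\end{equation*}
and using $\int f_{+}=1$ yields $g_{+}^{\sigma}(x)\leq \lambda_{+}\max_{u\in[x-1,x+\gamma_{0}]}K_{\sigma}(u)$. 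For the lower bound, I would restrict the integral to the guaranteed subinterval $[\gamma,1]\subset\mathrm{Supp}(f_{+})$, giving
\begin{equation*}
    g_{+}^{\sigma}(x) \;\geq\; \lambda_{+}\int_{\gamma}^{1} f_{+}(s)\,K_{\sigma}(x-s)\,\mathrm{d}s \;\geq\; \lambda_{+}\Big(\min_{s\in[\gamma,1]}K_{\sigma}(x-s)\Big)\int_{\gamma}^{1}f_{+}(s)\,\mathrm{d}s,
\end{equation*}
which equals $\lambda_{+}\delta_{+}\min_{u\in[x-1,x-\gamma]}K_{\sigma}(u)$ after the change of variables $u=x-s$.

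The bounds on $g_{-}^{\sigma}(x)$ follow \emph{mutatis mutandis}: the upper bound uses $\mathrm{Supp}(f_{-})\subset[-1,\gamma_{0})$, so that $x-s\in[x-\gamma_{0},x+1]$; the lower bound restricts to $[-1,-\gamma]\subset\mathrm{Supp}(f_{-})$, producing the factor $\delta_{-}$ and the interval $[x+\gamma,x+1]$ after the substitution. There is no genuine obstacle here beyond bookkeeping of the intervals; the only subtlety worth flagging is that $\delta_{\pm}>0$ is implicit in the hypothesis $[\gamma_{0},1]\subset\mathrm{Supp}(f_{+})$ together with $\gamma<1$ (and likewise for $f_{-}$), which ensures the lower bounds are nontrivial. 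Once this lemma is in hand, the full Proposition~\ref{Proposition:MpOverlapping} is obtained exactly as in Proposition~\ref{Proposition:MpSeparated}: one splits $M_{\alpha}^{\sigma}$ into $(-\infty,-1)\cup[-1,1]\cup(1,\infty)$, bounds the central integral trivially by $2$, and on the outer regions uses the analogue of Lemma~\ref{Lemma:UpperBoundQuotientSepparated} (which the above bounds immediately yield, with $\lambda_{\pm}$ replaced by $\delta_{\pm}\lambda_{\pm}$ in the denominators), followed by the same elementary exponential-moment integrals.
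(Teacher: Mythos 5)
Your proof is correct and follows essentially the same argument as the paper: the upper bound comes from pulling $\max K_{\sigma}$ out of the convolution over the full support $(-\gamma_0,1]$ (resp.\ $[-1,\gamma_0)$) and using $\lVert f_{\pm}\rVert_1 = 1$, while the lower bound comes from restricting the integral to the guaranteed subinterval $[\gamma,1]$ (resp.\ $[-1,-\gamma]$) and pulling out $\min K_{\sigma}$, which is what produces the $\delta_{\pm}$ factors. Your aside noting that $\delta_{\pm}>0$ is ensured by $[\gamma_0,1]\subset\mathrm{Supp}(f_+)$, $[-1,-\gamma_0]\subset\mathrm{Supp}(f_-)$, and $\gamma\in(\gamma_0,1)$ is a small clarification that the paper leaves implicit but does not change the argument.
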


\begin{proof}
By assumption $\mathrm{Supp}(f_{+}) \subset [-\gamma_{0},1]$, thus
\begin{equation}
    g^{\sigma}_{+}(x) = \lambda_{+} \int_{-\gamma_{0}}^{1} f_{+}(s) K_{\sigma}(x-s) {\rm d}s.
\end{equation}
Since $-\gamma_{0} < \gamma < 1$, for all $x\in\mathbb{R}$,
\begin{align}
    g^{\sigma}_{+}(x) &\geq \lambda_{+} \int_{\gamma}^{1} f_{+}(s) K_{\sigma}(x-s) \textrm{d} s\\
    &\geq \lambda_{+} \int_{\gamma}^{1} f_{+}(s) \textrm{d} s \min_{s\in[\gamma,1]} K_{\sigma}(x-s)\\
    &= \lambda_{+} \delta_{+} \min_{s\in[x-1,x-\gamma]} K_{\sigma}(s).
\end{align}
Similarly, for all $x\in\mathbb{R}$,
\begin{align}
    g^{\sigma}_{+}(x) &\leq \lambda_{+} \int_{-\gamma_{0}}^{1} f_{+}(s) \textrm{d} s \max_{s\in[-\gamma_{0},1]} K_{\sigma}(x-s)\\
    &= \lambda_{+} \max_{s\in[t-1,t+\gamma_{0}]} K_{\sigma}(s).
\end{align}
The inequalities for $g^{\sigma}_{-}$ are proved \emph{mutatis mutandis}.
\end{proof}

The next lemma provides an upper bound for $g^{\sigma}_{\pm}/(g^{\sigma}_{+} + g^{\sigma}_{-})$.

\begin{lemma}
\label{Lemma:UpperBoundQuotientOverlapping}
In the context of Proposition~\ref{Proposition:MpOverlapping},
\begin{itemize}
    \item $\displaystyle \frac{g^{\sigma}_{+}(x)}{g^{\sigma}_{+}(x) + g^{\sigma}_{-}(x)} \leq \frac{\lambda_{+}}{\delta_{-}\lambda_{-}} e^{-(\gamma-\gamma_{0})(\lvert x \rvert - 1)/\sigma^{2}}$, $x < -1$;
    \item $\displaystyle \frac{g^{\sigma}_{-}(x)}{g^{\sigma}_{+}(x) + g^{\sigma}_{-}(x)} \leq \frac{\lambda_{-}}{\delta_{+}\lambda_{+}} e^{-(\gamma-\gamma_{0})(\lvert x \rvert - 1)/\sigma^{2}}$, $x > 1$.
\end{itemize}
\end{lemma}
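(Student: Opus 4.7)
\textbf{Proof plan for Lemma~\ref{Lemma:UpperBoundQuotientOverlapping}.} The argument mirrors the one used for Lemma~\ref{Lemma:UpperBoundQuotientSepparated}, with the minor adjustment that the lower bound for the ``opposite'' component now carries the extra factor $\delta_\pm$ coming from Lemma~\ref{Lemma:BoundsTildegpmOverlapping}. Concretely, I would start from the trivial estimate $\frac{g^{\sigma}_{+}(x)}{g^{\sigma}_{+}(x)+g^{\sigma}_{-}(x)} \leq \frac{g^{\sigma}_{+}(x)}{g^{\sigma}_{-}(x)}$ and then apply the upper bound for $g^{\sigma}_{+}$ and the lower bound for $g^{\sigma}_{-}$ supplied by Lemma~\ref{Lemma:BoundsTildegpmOverlapping} to obtain
\begin{equation*}
    \frac{g^{\sigma}_{+}(x)}{g^{\sigma}_{+}(x)+g^{\sigma}_{-}(x)} \leq \frac{\lambda_{+}}{\delta_{-}\lambda_{-}} \cdot \frac{\max_{s\in[x-1,x+\gamma_{0}]} K_{\sigma}(s)}{\min_{s\in[x+\gamma,x+1]} K_{\sigma}(s)}.
\end{equation*}

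Next I would exploit that $K_{\sigma}$ is strictly increasing on $(-\infty,0)$. For $x<-1$ we have $x+\gamma_{0}<\gamma_{0}-1<0$ and $x+1<0$, so both intervals $[x-1,x+\gamma_{0}]$ and $[x+\gamma,x+1]$ lie in $(-\infty,0)$. Consequently the maximum is attained at the right endpoint $x+\gamma_{0}$ of the first interval and the minimum at the left endpoint $x+\gamma$ of the second. Using the Gaussian identity
\begin{equation*}
    \frac{K_{\sigma}(x+\gamma_{0})}{K_{\sigma}(x+\gamma)} = \exp\!\left(\frac{(x+\gamma)^{2}-(x+\gamma_{0})^{2}}{2\sigma^{2}}\right) = \exp\!\left(\frac{(2x+\gamma+\gamma_{0})(\gamma-\gamma_{0})}{2\sigma^{2}}\right),
\end{equation*}
and observing that for $x<-1$ one has $\tfrac{1}{2}(2x+\gamma+\gamma_{0}) = x+\tfrac{\gamma+\gamma_{0}}{2} \leq -\lvert x\rvert + 1$ (because $\gamma,\gamma_{0}\in(0,1)$ yields $(\gamma+\gamma_{0})/2\leq 1$), one concludes
\begin{equation*}
    \frac{K_{\sigma}(x+\gamma_{0})}{K_{\sigma}(x+\gamma)} \leq \exp\!\left(-\frac{(\gamma-\gamma_{0})(\lvert x\rvert-1)}{\sigma^{2}}\right),
\end{equation*}
which combined with the displayed inequality gives the first claim.

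The case $x>1$ is handled by a symmetric argument: both intervals $[x-\gamma_{0},x+1]$ and $[x-1,x-\gamma]$ now lie in $(0,\infty)$, so by the strict monotone decrease of $K_{\sigma}$ on $(0,\infty)$ the maximum of the numerator is at $x-\gamma_{0}$ and the minimum of the denominator is at $x-\gamma$. The same Gaussian-ratio computation, together with $\tfrac{1}{2}(2x-\gamma-\gamma_{0}) = x - (\gamma+\gamma_{0})/2 \geq x-1 = \lvert x\rvert - 1$, yields $\tfrac{K_{\sigma}(x-\gamma_{0})}{K_{\sigma}(x-\gamma)} \leq \exp(-(\gamma-\gamma_{0})(\lvert x\rvert-1)/\sigma^{2})$, establishing the second claim.

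No step here is a real obstacle: the only mild subtlety is keeping track of which endpoints maximize/minimize $K_{\sigma}$ on each interval (it depends on the sign of $x$, and requires that \emph{both} intervals involved in the ratio lie on the same side of the origin, which is why the threshold $\lvert x\rvert>1$ appears) and verifying the elementary inequality $(\gamma+\gamma_{0})/2\leq 1$ that converts the exact Gaussian exponent into the stated bound. Everything else is bookkeeping.
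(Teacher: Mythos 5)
Your proof follows the same route as the paper's: bound the quotient by $g^\sigma_+/g^\sigma_-$, apply the upper/lower bounds from Lemma~\ref{Lemma:BoundsTildegpmOverlapping}, use monotonicity of $K_\sigma$ on each half-line to identify the extremal endpoints, and evaluate the Gaussian ratio. In fact you handle the final step more carefully than the paper, which writes an equality $K_\sigma(x+\gamma_0)/K_\sigma(x+\gamma)=e^{-(\gamma-\gamma_0)(\lvert x\rvert-1)/\sigma^2}$ where the exact computation only gives $e^{(\gamma-\gamma_0)(2x+\gamma+\gamma_0)/(2\sigma^2)}$; your observation that $(\gamma+\gamma_0)/2\le 1$ is precisely what is needed to turn this into the stated $\le$.
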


\begin{proof}
By Lemma~\ref{Lemma:BoundsTildegpmOverlapping}, for all $x\in\mathbb{R}$,
\begin{equation}
    \frac{g^{\sigma}_{+}(x)}{g^{\sigma}_{+}(x) + g^{\sigma}_{-}(x)} \leq \frac{g^{\sigma}_{+}(x)}{g^{\sigma}_{-}(x)} \leq \frac{\lambda_{+} \max_{s\in[x-1,x+\gamma_{0}]} K_{\sigma}(s)}{\delta_{-}\lambda_{-} \min_{s\in[x+\gamma,x+1]} K_{\sigma}(s)}.
\end{equation}
When $x < -1$, we have that
\begin{equation}
    [x-1,x+\gamma_{0}],[x+\gamma,x+1]\subset(-\infty,0).
\end{equation}
Since $K_{\sigma}$ is increasing on $(-\infty,0)$,
\begin{align}
    \frac{g^{\sigma}_{+}(x)}{g^{\sigma}_{+}(x) + g^{\sigma}_{-}(x)} &\leq \frac{\lambda_{+}}{\delta_{-}\lambda_{-}} \frac{K_{\sigma}(x+\gamma_{0})}{K_{\sigma}(x+\gamma)}\\
    &= \frac{\lambda_{+}}{\delta_{-}\lambda_{-}} e^{-(\gamma-\gamma_{0})(\lvert x \rvert - 1)/\sigma^{2}}.
\end{align}
\emph{Mutatis mutandis}, it can be shown that, for $x>1$, we have the inequality $\displaystyle \frac{g^{\sigma}_{-}(x)}{g^{\sigma}_{+}(x) + g^{\sigma}_{-}(x)} \leq \frac{\lambda_{-}}{\delta_{+}\lambda_{+}} e^{-(\gamma-\gamma_{0})(\lvert x \rvert - 1)/\sigma^{2}}$.
\end{proof}

Now we are in position to prove Proposition~\ref{Proposition:MpOverlapping}.

\begin{proof}[\textbf{Proof of Proposition~\ref{Proposition:MpOverlapping}}]
By definition, we have that
\begin{equation}
    M^{\sigma}_{p} = \left(\int_{-\infty}^{-1} + \int_{-1}^{1} + \int_{1}^{\infty}\right) \lvert x \rvert^{p} \frac{g^{\sigma}_{+}(x)g^{\sigma}_{-}(x)}{(g^{\sigma}_{+}(x)+g^{\sigma}_{-}(x))^{2}}{\rm d} x.
\end{equation}
Lemma~\ref{Lemma:UpperBoundQuotientOverlapping} and the trivial upper bound $\displaystyle \frac{g^{\sigma}_{\pm}(x)}{g^{\sigma}_{+}(x)+g^{\sigma}_{-}(x)} \leq 1$ imply that
\begin{equation}
    M^{\sigma}_{p} \leq 2 + \frac{\delta_{+}\lambda_{+}^{2}+\delta_{-}\lambda_{-}^{2}}{\delta_{+}\lambda_{+}\delta_{-}\lambda_{-}} \int_{1}^{\infty} x^{p} e^{-(\gamma-\gamma_{0})(x - 1)/\sigma^{2}} {\rm d} x.
\end{equation}
By the formulas,
\begin{align}
    \int e^{-\beta x} {\rm d}x &= - \frac{e^{-\beta x}}{\beta},\\
    \int x e^{-\beta x} {\rm d}x &= - \frac{e^{-\beta x}}{\beta^{2}}(\beta x + 1),\\
    \int x^{2} e^{-\beta x} {\rm d}x &= - \frac{e^{-\beta x}}{\beta^{3}}(\beta^{2}x^{2} + 2\beta x + 2),
\end{align}
the result follows.
\end{proof}

\section{Proof of Theorem~\ref{Theorem:MainTheoremPostProcessingTheta}}
\label{Appendix:MainTheoremPostProcessingTheta}

Let $p_{+} \coloneqq p$ and $p_{-} \coloneqq 1 - p$. For each $x\in\mathbb{R}$, we define
\begin{equation}
\label{eq:Defgsigmatheta}
    g^{\sigma}_{\pm}(x) \coloneqq p_{\pm}(f_{\pm} \ast K_{\sigma})(x) + \lambda_{\pm}.
\end{equation}
Observe that, with this notation,
\begin{equation}
\label{eq:AppendixDefTheta}
    \theta^{\sigma}(x) = \frac{1}{2}\log\left(\frac{g^{\sigma}_{+}(x)}{g^{\sigma}_{-}(x)}\right).
\end{equation}
The following lemma provides useful expressions for the $L^{1}$, $L^{2}$ and $L^{3}$-norms of the first derivative of $K_{\sigma}$.

\begin{lemma}
\label{Lemma:NormDerivativesKSigma1}
If $\displaystyle K_{\sigma}(x) = \frac{e^{-x^{2}/2\sigma^{2}}}{\sqrt{2\pi\sigma^{2}}}$, then $\displaystyle \lVert K_{\sigma}' \rVert_{1} = \frac{\sqrt{2}}{\sqrt{\pi}\sigma}$, $\displaystyle \lVert K_{\sigma}' \rVert_{2}^{2} = \frac{1}{4\sqrt{\pi}\sigma^{3}}$ and $\displaystyle \lVert K_{\sigma}' \rVert_{3}^{3} = \frac{\sqrt{2}}{9\sqrt{\pi^{3}}\sigma^{5}}$.
\end{lemma}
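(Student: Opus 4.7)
The plan is to compute each of the three norms by direct integration after noting that
\begin{equation}
    K_{\sigma}'(x) = -\frac{x}{\sigma^{2}} K_{\sigma}(x) = -\frac{x}{\sqrt{2\pi\sigma^{2}}\,\sigma^{2}}\, e^{-x^{2}/2\sigma^{2}}.
\end{equation}
Hence $\lvert K_{\sigma}'(x)\rvert^{p} = \sigma^{-2p}(2\pi\sigma^{2})^{-p/2}\lvert x\rvert^{p} e^{-px^{2}/2\sigma^{2}}$, and each of the three norms reduces to an integral of the form $\int_{0}^{\infty} x^{p} e^{-\alpha x^{2}}\mathrm{d}x$ with $\alpha = p/(2\sigma^{2})$.

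First, for the $L^{1}$-norm, using the substitution $u = x^{2}/(2\sigma^{2})$ I would evaluate $\int_{0}^{\infty} x\, e^{-x^{2}/2\sigma^{2}}\mathrm{d}x = \sigma^{2}$, which after doubling (to account for symmetry) and dividing by $\sigma^{2}\sqrt{2\pi\sigma^{2}}$ yields $\lVert K_{\sigma}'\rVert_{1} = \sqrt{2}/(\sqrt{\pi}\sigma)$. Second, for the $L^{2}$-norm, after the substitution $y = x/\sigma$ the relevant Gaussian second-moment identity $\int_{-\infty}^{\infty} y^{2} e^{-y^{2}}\mathrm{d}y = \sqrt{\pi}/2$ gives the integral $\int x^{2} e^{-x^{2}/\sigma^{2}}\mathrm{d}x = \sigma^{3}\sqrt{\pi}/2$, and dividing by $\sigma^{4}(2\pi\sigma^{2})$ produces $\lVert K_{\sigma}'\rVert_{2}^{2} = 1/(4\sqrt{\pi}\sigma^{3})$. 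Third, for the $L^{3}$-norm, the substitution $u = 3x^{2}/(2\sigma^{2})$ yields $\int_{0}^{\infty} x^{3} e^{-3x^{2}/2\sigma^{2}}\mathrm{d}x = 2\sigma^{4}/9$, so doubling and dividing by $\sigma^{6}(2\pi\sigma^{2})^{3/2} = 2\sqrt{2}\,\sqrt{\pi^{3}}\sigma^{9}$ yields $\lVert K_{\sigma}'\rVert_{3}^{3} = \sqrt{2}/(9\sqrt{\pi^{3}}\sigma^{5})$.

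There is no real obstacle here; the only risk is a bookkeeping error in tracking the powers of $\sigma$, $\pi$, and $2$ when simplifying $(2\pi\sigma^{2})^{-p/2}$ together with $(p/(2\sigma^{2}))^{-(p+1)/2}$ coming from the integral. To guard against this I would carry out the three cases in parallel with $\alpha_{p} = p/(2\sigma^{2})$ and the master identity $\int_{0}^{\infty} x^{p} e^{-\alpha_{p} x^{2}}\mathrm{d}x = \tfrac{1}{2}\alpha_{p}^{-(p+1)/2}\Gamma((p+1)/2)$, then simplify.
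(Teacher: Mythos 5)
Your proposal is correct and takes essentially the same route as the paper: differentiate $K_{\sigma}$ to get $K_{\sigma}'(x) = -\frac{x}{\sigma^{2}}K_{\sigma}(x)$, then reduce each $L^{p}$-norm to a standard Gaussian integral (the paper phrases these as absolute moments of Gaussians with variances $\sigma^{2}$, $\sigma^{2}/2$, and $\sigma^{2}/3$, while you use the equivalent Gamma-function identity). All three final constants match.
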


\begin{proof}
It can be verified that, for all $x\in\mathbb{R}$,
\begin{equation}
    K_{\sigma}'(x) = -\frac{x}{\sigma^{2}} K_{\sigma}(x).
\end{equation}
Thus, we have that
\begin{equation}
    \lVert K_{\sigma}' \rVert_{1} = \frac{1}{\sigma^{2}} \int_{\mathbb{R}} \lvert x \rvert \frac{e^{-x^{2}/2\sigma^{2}}}{\sqrt{2\pi\sigma^{2}}} \mathrm{d}x.
\end{equation}
Note that the previous integral is the first absolute moment of a Gaussian random variable with mean 0 and variance $\sigma^{2}$. Therefore, we obtain that
\begin{equation}
    \lVert K_{\sigma}' \rVert_{1} = \frac{\sqrt{2}}{\sqrt{\pi}\sigma}.
\end{equation}
Similarly, we have that
\begin{equation}
    \lVert K_{\sigma}' \rVert_{2}^{2} = \frac{1}{2\sqrt{\pi}\sigma^{5}} \int_{\mathbb{R}} x^{2} \frac{e^{-x^{2}/\sigma^{2}}}{\sqrt{\pi\sigma^{2}}} \mathrm{d} x.
\end{equation}
Note that the previous integral is the second moment of a Gaussian random variable with mean 0 and variance $\sigma^{2}/2$. Therefore, we obtain that
\begin{equation}
    \lVert K_{\sigma}' \rVert_{2}^{2} = \frac{1}{4\sqrt{\pi}\sigma^{3}}.
\end{equation}
Finally, we have that
\begin{equation}
    \lVert K_{\sigma}' \rVert_{3}^{3} = \frac{1}{2\sqrt{3}\pi\sigma^{8}} \int_{\mathbb{R}} \lvert x \rvert^{3} \frac{e^{-3x^{2}/2\sigma^{2}}}{\sqrt{2\pi\sigma^{2}/3}} \mathrm{d} x.
\end{equation}
Note that the previous integral is the third absolute moment of a Gaussian random variable with mean 0 and variance $\sigma^{2}/3$. Therefore, we obtain that
\begin{equation}
    \lVert K_{\sigma}' \rVert_{3}^{3} = \frac{\sqrt{2}}{9\sqrt{\pi^{3}}\sigma^{5}},
\end{equation}
as required.
\end{proof}

The following lemma provides useful expressions for the $L^{1}$ and $L^{2}$-norms of the second derivative of $K_{\sigma}$.

\begin{lemma}
\label{Lemma:NormDerivativesKSigma2}
If $\displaystyle K_{\sigma}(x) = \frac{e^{-x^{2}/2\sigma^{2}}}{\sqrt{2\pi\sigma^{2}}}$, then $\displaystyle \lVert K_{\sigma}'' \rVert_{1} \leq \frac{2}{\sigma^{2}}$ and $\displaystyle \lVert K_{\sigma}'' \rVert_{2} = \frac{\sqrt{3}}{2\sqrt{2}\sqrt[4]{\pi}\sigma^{5/2}}$.
\end{lemma}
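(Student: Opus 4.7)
The approach is a direct calculation based on the explicit form of $K_{\sigma}''$.

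The plan is first to compute $K_{\sigma}''$. Differentiating the identity $K_{\sigma}'(x) = -\frac{x}{\sigma^{2}} K_{\sigma}(x)$ from the proof of Lemma~\ref{Lemma:NormDerivativesKSigma1} once more gives
\begin{equation*}
    K_{\sigma}''(x) = \frac{x^{2} - \sigma^{2}}{\sigma^{4}} K_{\sigma}(x).
\end{equation*}

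For the $L^{1}$-bound, I would apply the triangle inequality $|x^{2}-\sigma^{2}| \leq x^{2} + \sigma^{2}$ and recognize that integrating $x^{2} K_{\sigma}(x)$ and $\sigma^{2} K_{\sigma}(x)$ against Lebesgue measure yields the second moment and total mass of a $\mathcal{N}(0,\sigma^{2})$ distribution, both of which equal $\sigma^{2}$. This gives
\begin{equation*}
    \lVert K_{\sigma}''\rVert_{1} \leq \frac{1}{\sigma^{4}} \int_{\mathbb{R}}(x^{2}+\sigma^{2}) K_{\sigma}(x)\,\mathrm{d}x = \frac{2}{\sigma^{2}}.
\end{equation*}

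For the $L^{2}$-bound, the key observation is that $K_{\sigma}(x)^{2} = \frac{1}{2\sqrt{\pi}\sigma}\tilde{K}(x)$ where $\tilde{K}$ is the density of $\mathcal{N}(0,\sigma^{2}/2)$, since $\frac{1}{2\pi\sigma^{2}}e^{-x^{2}/\sigma^{2}} = \frac{1}{2\sqrt{\pi}\sigma}\cdot\frac{1}{\sqrt{\pi\sigma^{2}}}e^{-x^{2}/\sigma^{2}}$. Therefore, for $Y\sim\mathcal{N}(0,\sigma^{2}/2)$,
\begin{equation*}
    \lVert K_{\sigma}''\rVert_{2}^{2} = \frac{1}{\sigma^{8}}\int_{\mathbb{R}}(x^{2}-\sigma^{2})^{2} K_{\sigma}(x)^{2}\,\mathrm{d}x = \frac{1}{2\sqrt{\pi}\sigma^{9}}\,\mathbb{E}[(Y^{2}-\sigma^{2})^{2}].
\end{equation*}
Using the standard Gaussian moments $\mathbb{E}[Y^{2}]=\sigma^{2}/2$ and $\mathbb{E}[Y^{4}]=3\sigma^{4}/4$, I would expand to obtain $\mathbb{E}[(Y^{2}-\sigma^{2})^{2}] = 3\sigma^{4}/4 - \sigma^{4} + \sigma^{4} = 3\sigma^{4}/4$. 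Plugging in yields $\lVert K_{\sigma}''\rVert_{2}^{2} = \frac{3}{8\sqrt{\pi}\sigma^{5}}$, i.e., $\lVert K_{\sigma}''\rVert_{2} = \frac{\sqrt{3}}{2\sqrt{2}\sqrt[4]{\pi}\sigma^{5/2}}$.

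No step here is a serious obstacle; the proof is pure bookkeeping with Gaussian moments, and the scheme mirrors the proof of Lemma~\ref{Lemma:NormDerivativesKSigma1} almost verbatim. The only minor point of care is the identification $K_{\sigma}^{2} = (2\sqrt{\pi}\sigma)^{-1}\tilde{K}$, which reduces the $L^{2}$-norm computation to a fourth-moment calculation for a Gaussian with variance $\sigma^{2}/2$ rather than $\sigma^{2}$.
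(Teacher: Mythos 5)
Your proof is correct and follows essentially the same approach as the paper: compute $K_{\sigma}''(x) = \frac{x^2-\sigma^2}{\sigma^4}K_\sigma(x)$, apply the triangle inequality and first/second Gaussian moments for the $L^1$ bound, and for the $L^2$ computation recognize $K_\sigma^2$ as a scaled $\mathcal{N}(0,\sigma^2/2)$ density and evaluate the resulting moments. The only cosmetic difference is that you package the $L^2$ integrand as $\mathbb{E}[(Y^2-\sigma^2)^2]$ before expanding, whereas the paper expands $(x^2-\sigma^2)^2$ term by term first; the arithmetic is the same.
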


\begin{proof}
It can be verified that, for all $x\in\mathbb{R}$,
\begin{equation}
    K_{\sigma}''(x) = \frac{x^{2}-\sigma^{2}}{\sigma^{4}} K_{\sigma}(x).
\end{equation}
Thus, we have that
\begin{align}
    \lVert K_{\sigma}'' \rVert_{1} &\leq  \int_{\mathbb{R}} \left(\frac{x^{2}}{\sigma^{4}} + \frac{1}{\sigma^{2}}\right) \frac{e^{-x^{2}/2\sigma^{2}}}{\sqrt{2\pi\sigma^{2}}} \mathrm{d}x\\
    &= \frac{1}{\sigma^{4}} \int_{\mathbb{R}} x^{2} \frac{e^{-x^{2}/2\sigma^{2}}}{\sqrt{2\pi\sigma^{2}}} \mathrm{d}x + \frac{1}{\sigma^{2}}.
\end{align}
Note that the last integral is the second moment of a Gaussian random variable with mean 0 and variance $\sigma^{2}$. Therefore,
\begin{equation}
    \lVert K_{\sigma}'' \rVert_{1} \leq \frac{2}{\sigma^{2}}.
\end{equation}
Similarly, we have that
\begin{equation}
    \lVert K_{\sigma}'' \rVert_{2}^{2} = \int_{\mathbb{R}} \left(\frac{x^{4}}{2\sqrt{\pi}\sigma^{9}} - \frac{x^{2}}{\sqrt{\pi}\sigma^{7}} + \frac{1}{2\sqrt{\pi}\sigma^{5}}\right) \frac{e^{-x^{2}/\sigma^{2}}}{\sqrt{\pi\sigma^{2}}} \mathrm{d} x.
\end{equation}
Note that the last integral is determined by the even moments of a Gaussian random variable with mean 0 and variance $\sigma^{2}/2$. Therefore, we obtain that
\begin{align}
    \lVert K_{\sigma}'' \rVert_{2} &= \left(\frac{3}{8\sqrt{\pi}\sigma^{5}} - \frac{1}{2\sqrt{\pi}\sigma^{5}} + \frac{1}{2\sqrt{\pi}\sigma^{5}}\right)^{1/2}\\
    &= \frac{\sqrt{3}}{2\sqrt{2}\sqrt[4]{\pi}\sigma^{5/2}},
\end{align}
as required.
\end{proof}

The following lemma provides useful expressions for the $L^{1}$-norm of the third derivative of $K_{\sigma}$.

\begin{lemma}
\label{Lemma:NormDerivativesKSigma3}
If $\displaystyle K_{\sigma}(x) = \frac{e^{-x^{2}/2\sigma^{2}}}{\sqrt{2\pi\sigma^{2}}}$, then $\displaystyle \lVert K_{\sigma}''' \rVert_{1} \leq  \frac{5\sqrt{2}}{\sqrt{\pi}\sigma^{3}}$.
\end{lemma}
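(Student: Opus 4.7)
The plan is to mirror the structure of Lemmas~\ref{Lemma:NormDerivativesKSigma1} and \ref{Lemma:NormDerivativesKSigma2}: first obtain a closed-form expression for $K_{\sigma}'''$, then bound $\lvert K_{\sigma}''' \rvert$ in terms of $K_{\sigma}$ weighted by a polynomial in $\lvert x \rvert$, and finally evaluate the resulting integrals as absolute moments of a centered Gaussian.

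First I would note that differentiating $K_{\sigma}'' = \frac{x^{2}-\sigma^{2}}{\sigma^{4}} K_{\sigma}$ (which was already computed in the proof of Lemma~\ref{Lemma:NormDerivativesKSigma2}) and using $K_{\sigma}'(x) = -\frac{x}{\sigma^{2}} K_{\sigma}(x)$ yields
\begin{equation}
    K_{\sigma}'''(x) = -\frac{x^{3}-3\sigma^{2}x}{\sigma^{6}} K_{\sigma}(x),
\end{equation}
which is consistent with the polynomial $P^{\sigma}_{3}$ appearing in Lemma~\ref{Lemma:DerivativesTildeG}. Applying the triangle inequality gives
\begin{equation}
    \lvert K_{\sigma}'''(x) \rvert \leq \frac{\lvert x \rvert^{3}}{\sigma^{6}} K_{\sigma}(x) + \frac{3\lvert x \rvert}{\sigma^{4}} K_{\sigma}(x).
\end{equation}

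Next I would integrate term by term. Since $K_{\sigma}$ is the density of a centered Gaussian with variance $\sigma^{2}$, the first and third absolute moments of this distribution are $\sigma\sqrt{2/\pi}$ and $2\sigma^{3}\sqrt{2/\pi}$, respectively. Substituting these values yields
\begin{equation}
    \lVert K_{\sigma}''' \rVert_{1} \leq \frac{1}{\sigma^{6}} \cdot 2\sigma^{3}\sqrt{\frac{2}{\pi}} + \frac{3}{\sigma^{4}} \cdot \sigma\sqrt{\frac{2}{\pi}} = \frac{2\sqrt{2}}{\sqrt{\pi}\sigma^{3}} + \frac{3\sqrt{2}}{\sqrt{\pi}\sigma^{3}} = \frac{5\sqrt{2}}{\sqrt{\pi}\sigma^{3}},
\end{equation}
which is precisely the claimed bound.

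There is no real obstacle here beyond bookkeeping: the derivative is elementary, the triangle inequality is loss-free in order (all polynomial terms in $K_{\sigma}'''$ share the same decay), and the Gaussian moments are standard. The only minor subtlety is to make sure the moment normalizations match the variance $\sigma^{2}$ (as opposed to the $\sigma^{2}/2$ or $\sigma^{2}/3$ that appeared in the $L^{2}$ and $L^{3}$ computations of the previous lemmas), but this is automatic since we are integrating against $K_{\sigma}$ itself rather than its powers.
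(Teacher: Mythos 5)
Your proof is correct and follows essentially the same route as the paper: compute $K_\sigma''' = -\frac{x^3 - 3\sigma^2 x}{\sigma^6} K_\sigma$, apply the triangle inequality, and evaluate the resulting integral via the first and third absolute moments of a centered Gaussian with variance $\sigma^2$. The only difference is that you spell out the moment values $\sigma\sqrt{2/\pi}$ and $2\sigma^3\sqrt{2/\pi}$ explicitly, which the paper leaves implicit.
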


\begin{proof}
It can be verified that, for all $x\in\mathbb{R}$,
\begin{equation}
    K_{\sigma}'''(x) = -\frac{x^3-3\sigma^{2}x}{\sigma^{6}} K_{\sigma}(x).
\end{equation}
Thus, we have that
\begin{equation}
    \lVert K_{\sigma}''' \rVert_{1} \leq  \int_{\mathbb{R}} \left(\frac{\lvert x \rvert^{3}}{\sigma^{6}} + \frac{3\lvert x \rvert}{\sigma^{4}}\right) \frac{e^{-x^{2}/2\sigma^{2}}}{\sqrt{2\pi\sigma^{2}}} \mathrm{d}x.
\end{equation}
Note that the last integral is determined by the absolute moments of a Gaussian random variable with mean 0 and variance $\sigma^{2}$. Therefore,
\begin{equation}
    \lVert K_{\sigma}''' \rVert_{1} \leq \frac{2\sqrt{2}}{\sqrt{\pi}\sigma^{3}} + \frac{3\sqrt{2}}{\sqrt{\pi}\sigma^{3}} = \frac{5\sqrt{2}}{\sqrt{\pi}\sigma^{3}},
\end{equation}
as required.
\end{proof}

In order to avoid cumbersome notation, we omit the superscript $\sigma$ when there is no risk of confusion, e.g., $g^{\sigma}_{\pm}$ is written as $g_{\pm}$ and $\theta^{\sigma}$ is written as $\theta$. The following corollary provides an upper bound for the $L^{1}$-norm of $\theta'$.

\begin{corollary}
\label{Corollary:L1NormTheta1}
If $f_{\pm}$ is a probability density function, then
\begin{equation}
    \lVert \theta' \rVert_{1} \leq \left(\frac{p_{+}}{\lambda_{+}} + \frac{p_{-}}{\lambda_{-}}\right) \frac{1}{\sqrt{2\pi}\sigma}.
\end{equation}
\end{corollary}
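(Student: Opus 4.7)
\textbf{Proof plan for Corollary~\ref{Corollary:L1NormTheta1}.} The plan is to differentiate the log-ratio representation of $\theta$ in \eqref{eq:AppendixDefTheta}, split the resulting expression into two terms (one per sign), and bound each term by pushing the derivative onto the Gaussian kernel and then invoking Lemma~\ref{Lemma:NormDerivativesKSigma1}.

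First, I would use the logarithmic derivative identity $(\log h)' = h'/h$ to write
\begin{equation}
    2\theta'(x) = \frac{g_{+}'(x)}{g_{+}(x)} - \frac{g_{-}'(x)}{g_{-}(x)},
\end{equation}
which mirrors \eqref{eq:D1theta} from Lemma~\ref{Lemma:DerivativesTheta}. Since $g_{\pm}(x) = p_{\pm}(f_{\pm} \ast K_{\sigma})(x) + \lambda_{\pm}$ and $\lambda_{\pm}>0$, the crucial structural observation is that $g_{\pm}(x) \geq \lambda_{\pm}$ everywhere; this uniform lower bound is exactly what the extreme-values-randomization term in \eqref{eq:DefThetaSigma} was designed to provide. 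Moreover, because convolution commutes with differentiation, $g_{\pm}'(x) = p_{\pm}(f_{\pm} \ast K_{\sigma}')(x)$.

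Next, I would apply the triangle inequality in $L^{1}$ to obtain
\begin{equation}
    2\lVert\theta'\rVert_{1} \leq \frac{p_{+}}{\lambda_{+}} \lVert f_{+} \ast K_{\sigma}'\rVert_{1} + \frac{p_{-}}{\lambda_{-}} \lVert f_{-} \ast K_{\sigma}'\rVert_{1}.
\end{equation}
Young's convolution inequality together with $\lVert f_{\pm}\rVert_{1} = 1$ yields $\lVert f_{\pm} \ast K_{\sigma}'\rVert_{1} \leq \lVert K_{\sigma}'\rVert_{1}$. Plugging in the explicit value $\lVert K_{\sigma}'\rVert_{1} = \sqrt{2}/(\sqrt{\pi}\sigma)$ from Lemma~\ref{Lemma:NormDerivativesKSigma1} and dividing by $2$ gives the claimed bound.

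There is no real obstacle here; the corollary is essentially a one-line consequence of the uniform lower bound $g_{\pm} \geq \lambda_{\pm}$ combined with Young's inequality and Lemma~\ref{Lemma:NormDerivativesKSigma1}. The only mildly delicate point is verifying that pulling the derivative through the convolution is justified, which follows from the rapid decay of $K_{\sigma}$ and the fact that $f_{\pm}$ is an integrable probability density; I would briefly note this before applying the formula $g_{\pm}' = p_{\pm}(f_{\pm}\ast K_{\sigma}')$.
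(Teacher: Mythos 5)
Your proposal is correct and takes essentially the same route as the paper: differentiate the log-ratio via Lemma~\ref{Lemma:DerivativesTheta}, bound the denominators below by $\lambda_{\pm}$, push the derivative onto $K_{\sigma}$, apply Young's convolution inequality with $\lVert f_{\pm}\rVert_{1}=1$, and invoke Lemma~\ref{Lemma:NormDerivativesKSigma1}. No meaningful divergence from the paper's argument.
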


\begin{proof}
In Lemma~\ref{Lemma:DerivativesTheta} we prove that, for all $x\in\mathbb{R}$,
\begin{equation}
    2\theta'(x) = \frac{g_{+}'(x)}{g_{+}(x)} - \frac{g_{-}'(x)}{g_{-}(x)}.
\end{equation}
By the triangle inequality, we have that
\begin{align}
    2 \lvert \theta'(x) \rvert &\leq \frac{\lvert g_{+}'(x) \rvert}{g_{+}(x)} + \frac{\lvert g_{-}'(x) \rvert}{g_{-}(x)}\\
    &\leq \frac{\lvert g_{+}'(x) \rvert}{\lambda_{+}} + \frac{\lvert g_{-}'(x) \rvert}{\lambda_{-}},
\end{align}
where the last inequality follows trivially from \eqref{eq:Defgsigmatheta}. Thus,
\begin{equation}
    \lVert \theta' \rVert_{1} \leq \frac{\lVert g_{+}' \rVert_{1}}{2\lambda_{+}} + \frac{\lVert g_{-}' \rVert_{1}}{2\lambda_{-}}.
\end{equation}
From \eqref{eq:Defgsigmatheta}, it is immediate to see that $g_{+}' = p_{+} (f_{+} \ast K_{\sigma})'$. Hence, the formula $ (h_{1} \ast h_{2})' = h_{1} \ast h_{2}'$ implies that
\begin{equation}
    \lVert \theta' \rVert_{1} \leq \frac{p_{+} \lVert f_{+} \ast K_{\sigma}' \rVert_{1}}{2\lambda_{+}} + \frac{p_{-} \lVert f_{-} \ast K_{\sigma}' \rVert_{1}}{2\lambda_{-}}.
\end{equation}
Recall that Young's convolution inequality establishes that
\begin{equation}
\label{eq:YoungConvolutionInq}
    \lVert h_{1} \ast h_{2} \rVert_{r} \leq \lVert h_{1} \rVert_{r_{1}} \lVert h_{2} \rVert_{r_{2}},
\end{equation}
whenever $\displaystyle \frac{1}{r_{1}} + \frac{1}{r_{2}} = \frac{1}{r} + 1$. Hence, by taking $r = r_{1} = r_{2} = 1$,
\begin{equation}
    \lVert \theta' \rVert_{1} \leq \frac{p_{+} \lVert f_{+} \rVert_{1} \lVert K_{\sigma}' \rVert_{1}}{2\lambda_{+}} + \frac{p_{-} \lVert f_{-} \rVert_{1} \lVert K_{\sigma}' \rVert_{1}}{2\lambda_{-}}.
\end{equation}
Since $f_{\pm}$ is a probability density function, we have that $\lVert f_{\pm} \rVert_{1} = 1$. Therefore, Lemma~\ref{Lemma:NormDerivativesKSigma1} implies that
\begin{equation}
    \lVert \theta' \rVert_{1} \leq \left(\frac{p_{+}}{\lambda_{+}} + \frac{p_{-}}{\lambda_{-}}\right) \frac{1}{\sqrt{2\pi}\sigma},
\end{equation}
as required.
\end{proof}

The following corollary provides an upper bound for the $L^{1}$-norm of $\theta''$.

\begin{corollary}
\label{Corollary:L1NormTheta2}
If $f_{\pm}$ is a probability density function, then
\begin{equation}
    \lVert \theta'' \rVert_{1} \leq \left(\frac{p_{+}}{\lambda_{+}} + \frac{p_{-}}{\lambda_{-}}\right) \frac{1}{\sigma^{2}} + \left(\frac{p_{+}^{2}}{\lambda_{+}^{2}} + \frac{p_{-}^{2}}{\lambda_{-}^{2}}\right) \frac{1}{8\sqrt{\pi}\sigma^{3}}.
\end{equation}
\end{corollary}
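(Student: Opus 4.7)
The plan is to mirror the proof of Corollary~\ref{Corollary:L1NormTheta1} but at the level of the second derivative, using Lemma~\ref{Lemma:DerivativesTheta}, Young's convolution inequality, and Lemmas~\ref{Lemma:NormDerivativesKSigma1} and \ref{Lemma:NormDerivativesKSigma2}. By \eqref{eq:D2theta}, the triangle inequality gives the pointwise bound
\begin{equation}
    2\lvert \theta''(x)\rvert \leq \frac{\lvert g_{+}''(x)\rvert}{g_{+}(x)} + \left(\frac{g_{+}'(x)}{g_{+}(x)}\right)^{2} + \frac{\lvert g_{-}''(x)\rvert}{g_{-}(x)} + \left(\frac{g_{-}'(x)}{g_{-}(x)}\right)^{2},
\end{equation}
and since $g_{\pm}(x)\geq \lambda_{\pm}$ by the definition in \eqref{eq:Defgsigmatheta}, each denominator can be replaced by the corresponding $\lambda_{\pm}$ (or $\lambda_{\pm}^{2}$).

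Integrating in $x$, the first and third terms contribute $\lVert g_{\pm}''\rVert_{1}/\lambda_{\pm}$, while the second and fourth contribute $\lVert g_{\pm}'\rVert_{2}^{2}/\lambda_{\pm}^{2}$. I would then use $g_{\pm}'' = p_{\pm}(f_{\pm}\ast K_{\sigma}'')$ and $g_{\pm}' = p_{\pm}(f_{\pm}\ast K_{\sigma}')$ together with Young's inequality \eqref{eq:YoungConvolutionInq}: with $r=r_{1}=r_{2}=1$ it yields $\lVert g_{\pm}''\rVert_{1}\leq p_{\pm}\lVert f_{\pm}\rVert_{1}\lVert K_{\sigma}''\rVert_{1}$, and with $r=r_{2}=2$, $r_{1}=1$ it yields $\lVert g_{\pm}'\rVert_{2}\leq p_{\pm}\lVert f_{\pm}\rVert_{1}\lVert K_{\sigma}'\rVert_{2}$. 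Because $f_{\pm}$ is a density, $\lVert f_{\pm}\rVert_{1}=1$, so invoking Lemma~\ref{Lemma:NormDerivativesKSigma2} for $\lVert K_{\sigma}''\rVert_{1}\leq 2/\sigma^{2}$ and Lemma~\ref{Lemma:NormDerivativesKSigma1} for $\lVert K_{\sigma}'\rVert_{2}^{2}=1/(4\sqrt{\pi}\sigma^{3})$ will produce, after dividing by $2$, exactly
\begin{equation}
    \lVert \theta''\rVert_{1} \leq \left(\frac{p_{+}}{\lambda_{+}}+\frac{p_{-}}{\lambda_{-}}\right)\frac{1}{\sigma^{2}} + \left(\frac{p_{+}^{2}}{\lambda_{+}^{2}}+\frac{p_{-}^{2}}{\lambda_{-}^{2}}\right)\frac{1}{8\sqrt{\pi}\sigma^{3}}.
\end{equation}

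There is no real obstacle here; the only subtle choice is handling the quadratic terms $(g_{\pm}'/g_{\pm})^{2}$, which is why I would lower-bound only one factor of $g_{\pm}$ in the denominator by $\lambda_{\pm}$ and keep the other $(g_{\pm}')^{2}/g_{\pm}$ intact, then bound this by $(g_{\pm}')^{2}/\lambda_{\pm}$. Actually, cleaner is to bound both $g_{\pm}$ in the denominator by $\lambda_{\pm}$ directly, reducing the integral to $\lVert g_{\pm}'\rVert_{2}^{2}/\lambda_{\pm}^{2}$; this is the step that brings in the $L^{2}$-norm of $K_{\sigma}'$ rather than its $L^{1}$-norm, and is what forces the $1/\sigma^{3}$ scaling and the $p_{\pm}^{2}/\lambda_{\pm}^{2}$ structure appearing in the target bound.
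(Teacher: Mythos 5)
Your proposal is correct and matches the paper's proof essentially step for step: expand $2\theta''$ via Lemma~\ref{Lemma:DerivativesTheta}, apply the triangle inequality, lower-bound $g_{\pm}\geq\lambda_{\pm}$, integrate to produce $\lVert g_{\pm}''\rVert_{1}$ and $\lVert g_{\pm}'\rVert_{2}^{2}$ terms, then invoke Young's convolution inequality with $\lVert f_{\pm}\rVert_{1}=1$ together with Lemmas~\ref{Lemma:NormDerivativesKSigma1} and \ref{Lemma:NormDerivativesKSigma2}. The closing discussion about how to treat the quadratic term resolves exactly as the paper does, by bounding both denominator factors by $\lambda_{\pm}$.
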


\begin{proof}
In Lemma~\ref{Lemma:DerivativesTheta} we prove that, for all $x\in\mathbb{R}$,
\begin{equation}
    2\theta''(x) = \left[\frac{g_{+}''(x)}{g_{+}(x)}-\left(\frac{g_{+}'(x)}{g_{+}(x)}\right)^{2}\right] - \left[\frac{g_{-}''(x)}{g_{-}(x)}-\left(\frac{g_{-}'(x)}{g_{-}(x)}\right)^{2}\right].
\end{equation}
By the triangle inequality, we have that
\begin{align}
    2 \lvert \theta''(x) \rvert &\leq \frac{\lvert g_{+}''(x) \rvert}{g_{+}(x)} + \frac{\lvert g_{+}'(x)\rvert^{2}}{g_{+}(x)^{2}} + \frac{\lvert g_{-}''(x) \rvert}{g_{-}(x)} + \frac{\lvert g_{-}'(x) \rvert^{2}}{g_{-}(x)^{2}}\\
    &\leq \frac{\lvert g_{+}''(x) \rvert}{\lambda_{+}} + \frac{\lvert g_{+}'(x)\rvert^{2}}{\lambda_{+}^{2}} + \frac{\lvert g_{-}''(x) \rvert}{\lambda_{-}} + \frac{\lvert g_{-}'(x) \rvert^{2}}{\lambda_{-}^{2}},
\end{align}
where the last inequality follows trivially from \eqref{eq:Defgsigmatheta}. Thus,
\begin{equation}
\label{eq:CorollaryTheta2Decomposition}
    \lVert \theta'' \rVert_{1} \leq \frac{\lVert g_{+}'' \rVert_{1}}{2\lambda_{+}} + \frac{\lVert g_{+}' \rVert_{2}^{2}}{2\lambda_{+}^{2}} + \frac{\lVert g_{-}'' \rVert}{2\lambda_{-}} + \frac{\lVert g_{-}' \rVert_{2}^{2}}{2\lambda_{-}^{2}}.
\end{equation}
From \eqref{eq:Defgsigmatheta}, it is immediate to see that $g_{+}'' = p_{+} (f_{+} \ast K_{\sigma})''$. Hence, the formula $ (h_{1} \ast h_{2})'' = h_{1} \ast h_{2}''$ implies that
\begin{equation}
    \lVert g_{\pm}'' \rVert_{1} = p_{\pm} \lVert f_{\pm} \ast K_{\sigma}'' \rVert_{1} \leq p_{\pm} \lVert f_{\pm} \rVert_{1} \lVert K_{\sigma}'' \rVert_{1},
\end{equation}
where we applied Young's convolution inequality \eqref{eq:YoungConvolutionInq} with $r = r_{1} = r_{2} = 1$. Since $f_{\pm}$ is a probability density function, we have that $\lVert f_{\pm} \rVert_{1} = 1$. Therefore, Lemma~\ref{Lemma:NormDerivativesKSigma2} implies that
\begin{equation}
\label{eq:CorollaryTheta2G2}
    \lVert g_{\pm}'' \rVert_{1} \leq \frac{2p_{\pm}}{\sigma^{2}}.
\end{equation}
Similarly, we have that
\begin{equation}
    \lVert g_{\pm}' \rVert_{2}^{2} = p_{\pm}^{2} \lVert f_{\pm} \ast K_{\sigma}' \rVert_{2}^{2} \leq p_{\pm}^{2} \lVert f_{\pm} \rVert_{1}^{2} \lVert K_{\sigma}' \rVert_{2}^{2},
\end{equation}
where we applied Young's convolution inequality \eqref{eq:YoungConvolutionInq} with $r = r_{2} = 2$ and $r_{1} = 1$. Thus, Lemma~\ref{Lemma:NormDerivativesKSigma1} implies that
\begin{equation}
\label{eq:CorollaryTheta2G1}
    \lVert g_{\pm}' \rVert_{2}^{2} \leq \frac{p_{\pm}^{2}}{4\sqrt{\pi}\sigma^{3}}.
\end{equation}
By plugging \eqref{eq:CorollaryTheta2G2} and \eqref{eq:CorollaryTheta2G1} in \eqref{eq:CorollaryTheta2Decomposition}, we conclude that
\begin{equation}
    \lVert \theta'' \rVert_{1} \leq \left(\frac{p_{+}}{\lambda_{+}} + \frac{p_{-}}{\lambda_{-}}\right) \frac{1}{\sigma^{2}} + \left(\frac{p_{+}^{2}}{\lambda_{+}^{2}} + \frac{p_{-}^{2}}{\lambda_{-}^{2}}\right) \frac{1}{8\sqrt{\pi}\sigma^{3}},
\end{equation}
as required.
\end{proof}

The following corollary provides an upper bound for the $L^{1}$-norm of $\theta'''$.

\begin{corollary}
\label{Corollary:L1NormTheta3}
If $f_{\pm}$ is a probability density function, then
\begin{align}
    \lVert \theta''' \rVert_{1} &\leq \left(\frac{p_{+}}{\lambda_{+}} + \frac{p_{-}}{\lambda_{-}}\right) \frac{5}{\sqrt{2\pi}\sigma^{3}} + \left(\frac{p_{+}^{2}}{\lambda_{+}^{2}} + \frac{p_{-}^{2}}{\lambda_{-}^{2}}\right) \frac{3\sqrt{3}}{8\sqrt{2\pi}\sigma^{4}} + \left(\frac{p_{+}^{3}}{\lambda_{+}^{3}} + \frac{p_{-}^{3}}{\lambda_{-}^{3}}\right) \frac{\sqrt{2}}{9\sqrt{\pi^{3}}\sigma^{5}}.
\end{align}
\end{corollary}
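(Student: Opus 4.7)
The proof will follow the same template as Corollaries~\ref{Corollary:L1NormTheta1} and \ref{Corollary:L1NormTheta2}, adapted to the more complex expression for $2\theta'''$ given in Lemma~\ref{Lemma:DerivativesTheta}:
\begin{equation*}
2\theta''' = \left[\frac{g_{+}'''}{g_{+}} - 3 \frac{g_{+}'g_{+}''}{g_{+}^{2}} + 2 \left(\frac{g_{+}'}{g_{+}}\right)^{3}\right] - \left[\frac{g_{-}'''}{g_{-}} - 3 \frac{g_{-}'g_{-}''}{g_{-}^{2}} + 2 \left(\frac{g_{-}'}{g_{-}}\right)^{3}\right].
\end{equation*}
First, I will apply the triangle inequality to split $2\lvert\theta'''(x)\rvert$ into six nonnegative summands, one for each of the terms above. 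Then I will use the observation from \eqref{eq:Defgsigmatheta} that $g_{\pm}(x)\geq\lambda_{\pm}$ to bound each denominator by the corresponding constant, yielding the pointwise estimate
\begin{equation*}
2\lvert\theta'''\rvert \leq \frac{\lvert g_{+}'''\rvert}{\lambda_{+}} + 3\frac{\lvert g_{+}'\rvert\lvert g_{+}''\rvert}{\lambda_{+}^{2}} + 2\frac{\lvert g_{+}'\rvert^{3}}{\lambda_{+}^{3}} + (\text{same with }-),
\end{equation*}
which after integration produces three types of terms in $g_{\pm}$: an $L^{1}$-norm of $g_{\pm}'''$, an $L^{1}$-norm of the product $g_{\pm}'g_{\pm}''$, and an $L^{3}$-norm cubed of $g_{\pm}'$.

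Next, I will bound each of these. The first is immediate via Young's convolution inequality (with $r=r_{1}=r_{2}=1$) applied to $g_{\pm}'''=p_{\pm}(f_{\pm}\ast K_{\sigma}''')$, combined with $\lVert f_{\pm}\rVert_{1}=1$ and Lemma~\ref{Lemma:NormDerivativesKSigma3}, giving $\lVert g_{\pm}'''\rVert_{1}\leq 5\sqrt{2}p_{\pm}/(\sqrt{\pi}\sigma^{3})$. For the mixed product, the key step is the Cauchy--Schwarz inequality
\begin{equation*}
\int_{\mathbb{R}} \lvert g_{\pm}'(x) g_{\pm}''(x)\rvert\,\mathrm{d}x \leq \lVert g_{\pm}'\rVert_{2}\lVert g_{\pm}''\rVert_{2},
\end{equation*}
after which Young's inequality with $r=r_{2}=2$ and $r_{1}=1$ gives $\lVert g_{\pm}'\rVert_{2}\leq p_{\pm}\lVert K_{\sigma}'\rVert_{2}$ and $\lVert g_{\pm}''\rVert_{2}\leq p_{\pm}\lVert K_{\sigma}''\rVert_{2}$, so the relevant norms are extracted from Lemmas~\ref{Lemma:NormDerivativesKSigma1} and \ref{Lemma:NormDerivativesKSigma2}. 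Finally, for the cubic term, Young's inequality with $r=r_{2}=3$ and $r_{1}=1$ yields $\lVert g_{\pm}'\rVert_{3}^{3}\leq p_{\pm}^{3}\lVert K_{\sigma}'\rVert_{3}^{3}$, and Lemma~\ref{Lemma:NormDerivativesKSigma1} supplies the exact value $\sqrt{2}/(9\sqrt{\pi^{3}}\sigma^{5})$.

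Assembling the three estimates, dividing by 2, and collecting factors of $p_{\pm}/\lambda_{\pm}$, $p_{\pm}^{2}/\lambda_{\pm}^{2}$ and $p_{\pm}^{3}/\lambda_{\pm}^{3}$ recovers the claimed bound after a short arithmetic simplification (in particular, checking that $(3/2)\cdot(1/(2\pi^{1/4}))\cdot(\sqrt{3}/(2\sqrt{2}\pi^{1/4}))=3\sqrt{3}/(8\sqrt{2\pi})$). The main subtlety is recognizing that the cross term $g_{\pm}'g_{\pm}''/g_{\pm}^{2}$ cannot be handled by the same $L^{1}$--$L^{\infty}$ route used in the two preceding corollaries and instead requires Cauchy--Schwarz to split it into two $L^{2}$-norms, each of which is then convolutionally controllable; this is the only genuinely new ingredient compared to the proofs of Corollaries~\ref{Corollary:L1NormTheta1} and \ref{Corollary:L1NormTheta2}.
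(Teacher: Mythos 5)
Your proposal is correct and matches the paper's proof step for step: triangle inequality on the expression from Lemma~\ref{Lemma:DerivativesTheta}, the bound $g_{\pm}\geq\lambda_{\pm}$, Young's convolution inequality for $\lVert g_{\pm}'''\rVert_{1}$ and $\lVert g_{\pm}'\rVert_{3}^{3}$, and a Cauchy--Schwarz (the paper calls it H\"older's, but in the $p=q=2$ case they coincide) split of $\lVert g_{\pm}'g_{\pm}''\rVert_{1}$ into two $L^{2}$-norms controlled by Lemmas~\ref{Lemma:NormDerivativesKSigma1} and \ref{Lemma:NormDerivativesKSigma2}. You also correctly identify the cross-term split as the new ingredient compared to the two earlier corollaries.
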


\begin{proof}
In Lemma~\ref{Lemma:DerivativesTheta} we prove that
\begin{align}
    2\theta''' &= \left[\frac{g_{+}'''}{g_{+}} - 3 \frac{g_{+}'g_{+}''}{g_{+}^{2}} + 2 \left(\frac{g_{+}'}{g_{+}}\right)^{3}\right] - \left[\frac{g_{-}'''}{g_{-}} - 3 \frac{g_{-}'g_{-}''}{g_{-}^{2}} + 2 \left(\frac{g_{-}'}{g_{-}}\right)^{3}\right].
\end{align}
By the triangle inequality, we have that
\begin{align}
    2\lvert \theta''' \rvert &\leq \frac{\lvert g_{+}''' \rvert}{g_{+}} + 3 \frac{\lvert g_{+}' g_{+}'' \rvert}{g_{+}^{2}} + 2 \frac{\lvert g_{+}'\rvert^{3}}{g_{+}^{3}} + \frac{\lvert g_{-}''' \rvert}{g_{-}} + 3 \frac{\lvert g_{-}' g_{-}'' \rvert}{g_{-}^{2}} + 2 \frac{\lvert g_{-}'\rvert^{3}}{g_{-}^{3}}\\
    &\leq \frac{\lvert g_{+}''' \rvert}{\lambda_{+}} + 3 \frac{\lvert g_{+}' g_{+}'' \rvert}{\lambda_{+}^{2}} + 2 \frac{\lvert g_{+}'\rvert^{3}}{\lambda_{+}^{3}} + \frac{\lvert g_{-}''' \rvert}{\lambda_{-}} + 3 \frac{\lvert g_{-}' g_{-}'' \rvert}{\lambda_{-}^{2}} + 2 \frac{\lvert g_{-}'\rvert^{3}}{\lambda_{-}^{3}},
\end{align}
where the last inequality follows trivially from \eqref{eq:Defgsigmatheta}. Thus,
\begin{align}
    \label{eq:CorollaryTheta3Decomposition} \lVert \theta''' \rVert_{1} &\leq \frac{\lVert g_{+}''' \rVert_{1}}{2\lambda_{+}} + \frac{3\lVert g_{+}' g_{+}'' \rVert_{1}}{2\lambda_{+}^{2}} + \frac{\lVert g_{+}' \rVert_{3}^{3}}{\lambda_{+}^{3}} + \frac{\lVert g_{-}''' \rVert_{1}}{2\lambda_{-}} + \frac{3\lVert g_{-}' g_{-}'' \rVert_{1}}{2\lambda_{-}^{2}} + \frac{\lVert g_{-}'\rVert_{3}^{3}}{\lambda_{-}^{3}}.
\end{align}
From \eqref{eq:Defgsigmatheta}, it is immediate to see that $g_{+}''' = p_{+} (f_{+} \ast K_{\sigma})'''$. Hence, the formula $ (h_{1} \ast h_{2})''' = h_{1} \ast h_{2}'''$ implies that
\begin{equation}
    \lVert g_{\pm}''' \rVert_{1} = p_{\pm} \lVert f_{\pm} \ast K_{\sigma}''' \rVert_{1} \leq p_{\pm} \lVert f_{\pm} \rVert_{1} \lVert K_{\sigma}''' \rVert_{1},
\end{equation}
where we applied Young's convolution inequality \eqref{eq:YoungConvolutionInq} with $r = r_{1} = r_{2} = 1$. Since $f_{\pm}$ is a probability density function, we have that $\lVert f_{\pm} \rVert_{1} = 1$. Therefore, Lemma~\ref{Lemma:NormDerivativesKSigma3} implies that
\begin{equation}
\label{eq:CorollaryTheta3G3}
    \lVert g_{\pm}''' \rVert_{1} \leq \frac{5\sqrt{2}p_{\pm}}{\sqrt{\pi}\sigma^{3}}.
\end{equation}
By H\"{o}lder's inequality, we observe that
\begin{equation}
    \lVert g_{\pm}' g_{\pm}'' \rVert_{1} \leq \lVert g_{\pm}' \rVert_{2} \lVert g_{\pm}'' \rVert_{2}.
\end{equation}
As before, we have that
\begin{equation}
    \lVert g_{\pm}'' \rVert_{2} = p_{\pm} \lVert f_{\pm} \ast K_{\sigma}'' \rVert_{2} \leq p_{\pm} \lVert f_{\pm} \rVert_{1} \lVert K_{\sigma}'' \rVert_{2},
\end{equation}
where we applied Young's inequality \eqref{eq:YoungConvolutionInq} with $r = r_{2} = 2$ and $r_{1} = 1$. Thus, Lemma~\ref{Lemma:NormDerivativesKSigma2} implies
\begin{equation}
    \lVert g_{\pm}'' \rVert_{2} \leq \frac{\sqrt{3}p_{\pm}}{2\sqrt{2}\sqrt[4]{\pi}\sigma^{5/2}}.
\end{equation}
The previous inequality and \eqref{eq:CorollaryTheta2G1} lead to
\begin{equation}
\label{eq:CorollaryTheta3G1G2}
    \lVert g_{\pm}' g_{\pm}'' \rVert_{1} \leq \frac{\sqrt{3}p_{\pm}^{2}}{4\sqrt{2\pi}\sigma^{4}}.
\end{equation}
Finally, we have that
\begin{equation}
    \lVert g_{+}' \rVert_{3}^{3} = p_{\pm}^{3} \lVert f_{\pm} \ast K_{\sigma}' \rVert_{3}^{3} \leq p_{\pm}^{3} \lVert f_{\pm} \rVert_{1}^{3} \lVert K_{\sigma}' \rVert_{3}^{3},
\end{equation}
where we applied Young's convolution inequality \eqref{eq:YoungConvolutionInq} with $r = r_{2} = 3$ and $r_{1} = 1$. Thus, Lemma~\ref{Lemma:NormDerivativesKSigma1} implies that
\begin{equation}
\label{eq:CorollaryTheta3G1}
    \lVert g_{+}' \rVert_{3}^{3} \leq \frac{\sqrt{2}p_{\pm}^{3}}{9\sqrt{\pi^{3}}\sigma^{5}}.
\end{equation}
By plugging \eqref{eq:CorollaryTheta3G3}, \eqref{eq:CorollaryTheta3G1G2} and \eqref{eq:CorollaryTheta3G1} in \eqref{eq:CorollaryTheta3Decomposition}, we conclude that
\begin{align}
    \lVert \theta''' \rVert_{1} &\leq \left(\frac{p_{+}}{\lambda_{+}} + \frac{p_{-}}{\lambda_{-}}\right) \frac{5}{\sqrt{2\pi}\sigma^{3}} + \left(\frac{p_{+}^{2}}{\lambda_{+}^{2}} + \frac{p_{-}^{2}}{\lambda_{-}^{2}}\right) \frac{3\sqrt{3}}{8\sqrt{2\pi}\sigma^{4}} + \left(\frac{p_{+}^{3}}{\lambda_{+}^{3}} + \frac{p_{-}^{3}}{\lambda_{-}^{3}}\right) \frac{\sqrt{2}}{9\sqrt{\pi^{3}}\sigma^{5}},
\end{align}
as required.
\end{proof}

Now we are in position to prove Theorem~\ref{Theorem:MainTheoremPostProcessingTheta}.

\begin{proof}[\textbf{Proof of Theorem~\ref{Theorem:MainTheoremPostProcessingTheta}}]
By Theorem~\ref{Theorem:CEtaL1NormsBound}, we have that
\begin{equation}
    C_{\theta} \leq \frac{2\sqrt{2}}{\sqrt{\pi}} \left(1 + \frac{1}{2}\log\left(\lVert\theta'\rVert_{1} \lVert\theta'''\rVert_{1}\right) - \log(\lVert\theta''\rVert_{1})\right) \lVert\theta''\rVert_{1}.
\end{equation}
Recall the definition of $N_{\alpha}$ in \eqref{eq:DefN1Sigma} -- \eqref{eq:DefN3Sigma}. Corollaries~\ref{Corollary:L1NormTheta1} -- \ref{Corollary:L1NormTheta3} imply that $\lVert\theta^{(\alpha)}\rVert_{1} \leq N_{\alpha}$ for every $\alpha\in\{1,2,3\}$. As a result,
\begin{align}
    \label{eq:MainTheoremPostProcessingTheta1} C_{\theta} &\leq \frac{2\sqrt{2}N_{2}}{\sqrt{\pi}} \left(1 + \frac{1}{2}\log\left(N_{1}N_{3}\right)\right) - \frac{2\sqrt{2}}{\sqrt{\pi}} \log(\lVert\theta''\rVert_{1}) \lVert\theta''\rVert_{1}.
\end{align}
Since $-\log(z)z \leq 1/e$ for all $z\in[0,\infty)$, the previous inequality implies that
\begin{equation}
    C_{\theta} \leq \frac{2\sqrt{2}}{e\sqrt{\pi}} + \frac{2\sqrt{2}N_{2}}{\sqrt{\pi}} \left(1 + \frac{1}{2}\log\left(N_{1}N_{3}\right)\right).
\end{equation}
It is straightforward to verify that $z \mapsto -\log(z)z$ is increasing over $[0,1/e]$. Thus, if $N_{2} \leq 1/e$, \eqref{eq:MainTheoremPostProcessingTheta1} implies that
\begin{equation}
    C_{\theta} \leq \frac{2\sqrt{2}N_{2}}{\sqrt{\pi}} \left(1 + \frac{1}{2} \log\left(\frac{N_{1}N_{3}}{N_{2}^{2}}\right)\right),
\end{equation}
as required.
\end{proof}

\section{Proof of Lemma~\ref{Lemma:StructuralPropertyHk}}
\label{Appendix:ProofLemmaStructuralPropertyHk}

\begin{proof}[\textbf{Proof of Lemma~\ref{Lemma:StructuralPropertyHk}}]
Observe that, without loss of generality, we can assume that
\begin{equation}
    \delta < \frac{1}{2} \min\left\{t_{2}-t_{1},\ldots,t_{k}-t_{k-1}\right\}.
\end{equation}
Since $\displaystyle \lim_{\zeta\to\infty} \tanh(\zeta) = 1$ and $\tanh(-\zeta) = - \tanh(\zeta)$, there exists $\zeta_{\epsilon}>0$ such that, for all $s\in\{-1,0,+1\}$,
\begin{equation}
\label{eq:StructuralPropertyHkDefzetaeps}
    \lvert s - \tanh(\zeta_{\epsilon}s) \rvert \leq \frac{\epsilon}{2}.
\end{equation}
Recall that $\phi$ satisfies that $\displaystyle \lim_{z\to\pm\infty} \phi(z) = \pm1$. Hence, there exists $z_{\epsilon}>0$ such that, for all $z \geq z_{\epsilon}$,
\begin{equation}
\label{eq:StructuralPropertyHkDefzeps}
    \lvert 1 - \phi(z) \rvert \leq \frac{\epsilon}{2k\zeta_{\epsilon}} \quad \text{and} \quad \lvert -1 - \phi(-z) \rvert \leq \frac{\epsilon}{2k\zeta_{\epsilon}}.
\end{equation}

Note that any $g\in\mathcal{S}_{k}$ can be written as
\begin{equation}
\label{eq:ProofStructuralPropertyHkAlternativeExpressiong}
    g(x) = s_{0} + \sum_{l=1}^{k} (s_{l}-s_{l-1}) \mathbbm{1}_{[t_{l},\infty)}(x),
\end{equation}
where $s_{l}\in\{-1,0,+1\}$ and $t_{1}<\cdots<t_{k}$. For such a $g\in\mathcal{S}_{k}$, let $h\in\mathcal{H}_{k}^{\phi}$ be the function defined by
\begin{equation}
    h(x) = \zeta_{\epsilon}\left(s_{0} + \sum_{l=1}^{k} (s_{l}-s_{l-1})\frac{\phi\left(\frac{z_{\epsilon}}{\delta}(x-t_{l})\right)+1}{2}\right).
\end{equation}
In the sequel we show that $h$ satisfies \eqref{eq:ThmStructurlProperty}.

Assume that $t_{j}+\delta \leq x \leq t_{j+1}-\delta$ for some $j\in\{0,\ldots,k\}$. In this case, \eqref{eq:ProofStructuralPropertyHkAlternativeExpressiong} implies that $g(x) = s_{j}$ and, as a result,
\begin{align}
    \lvert g(x) - \tanh(h(x)) \rvert &\leq \lvert s_{j} - \tanh(\zeta_{\epsilon}s_{j}) \rvert + \lvert \tanh(\zeta_{\epsilon}s_{j}) - \tanh(h(x)) \rvert\\
    \label{eq:ProofStructuralPropertyHkTriangleInq} &\leq \frac{\epsilon}{2} + \lvert \zeta_{\epsilon} s_{j} - h(x) \rvert,
\end{align}
where we used \eqref{eq:StructuralPropertyHkDefzetaeps} and the fact that $\tanh$ is $1$-Lipschitz. A straightforward manipulation shows that
\begin{align}
    h(x) &= \zeta_{\epsilon}s_{j} + \zeta_{\epsilon} \sum_{l=1}^{j} (s_{l}-s_{l-1}) \frac{\phi\left(\frac{z_{\epsilon}}{\delta}(x-t_{l})\right)-1}{2} + \zeta_{\epsilon} \sum_{l=j+1}^{k} (s_{l}-s_{l-1})\frac{\phi\left(\frac{z_{\epsilon}}{\delta}(x-t_{l})\right)+1}{2}.
\end{align}
In particular, we have that
\begin{align}
    \label{eq:StructuralPropertyHkReduction} \lvert \zeta_{\epsilon}s_{j} - h(x) \rvert &\leq \zeta_{\epsilon} \sum_{l=1}^{j} \left\lvert \phi\left(\frac{z_{\epsilon}}{\delta}(x-t_{l})\right)-1 \right\rvert + \zeta_{\epsilon} \sum_{l=j+1}^{k} \left\lvert \phi\left(\frac{z_{\epsilon}}{\delta}(x-t_{l})\right)+1 \right\rvert.
\end{align}
Note that, for all $l \leq j$,
\begin{equation}
    \frac{z_{\epsilon}}{\delta}(x - t_{l}) \geq \frac{z_{\epsilon}}{\delta}(x - t_{j}) \geq z_{\epsilon},
\end{equation}
and, for all $l \geq j+1$,
\begin{equation}
    \frac{z_{\epsilon}}{\delta} (x-t_{l}) \leq \frac{z_{\epsilon}}{\delta} (x-t_{j+1}) \leq -z_{\epsilon}.
\end{equation}
Therefore, \eqref{eq:StructuralPropertyHkDefzeps} and \eqref{eq:StructuralPropertyHkReduction} imply that
\begin{equation}
    \lvert \zeta_{\epsilon}s_{j} - h(x) \rvert \leq \frac{\epsilon}{2},
\end{equation}
and, as a result, \eqref{eq:ProofStructuralPropertyHkTriangleInq} becomes
\begin{equation}
    \lvert g(x) - \tanh(h(x)) \rvert \leq \epsilon,
\end{equation}
as required.
\end{proof}

\bibliographystyle{IEEEtran}
\bibliography{Bibliography}

\end{document}